\newtheorem{theorem}{Theorem}
\newtheorem{lemma}{Lemma}
\newtheorem{definition}{Definition}
\newcommand{\e}{{\ensuremath{\mathrm{e}}}}
\newcommand{\A}{{\mathcal A}}
\newcommand{\R}{{\mathcal R}}
\newcommand{\E}{{\mathbb E}}
\newcommand{\opt}{\ensuremath{\mathrm{opt}}}
\newcommand{\fest}{\tilde{\rho}_{\mathrm{f}}}
\newcommand{\rest}{\tilde{\rho}_{\mathrm{r}}}
\newcommand{\Cov}{\operatorname{Cov}}
\newcommand{\ADG}{\textsc{ADG}\xspace}
\newcommand{\ARS}{\textsc{ARS}\xspace}
\newcommand{\ATPS}{\textsc{AddATP}\xspace}
\newcommand{\ATPA}{\textsc{HATP}\xspace}
\newcommand{\NTPA}{\textsc{HNTP}\xspace}
\newcommand{\NSG}{\textsc{NSG}\xspace}
\newcommand{\NDG}{\textsc{NDG}\xspace}
\newcommand{\indeg}{\mathrm{indeg}}
\newcommand{\etal}{{et al.}\xspace}
\newcommand{\spara}[1]{\vspace{2mm}\noindent\textbf{#1.}}
\newcommand{\ie}{{i.e.,}\xspace}
\title{Efficient Approximation Algorithms for Adaptive Target Profit Maximization\thanks{A short version of the paper will appear in the 36th IEEE International Conference on Data Engineering (ICDE '20), April 20--24, 2020, Dallas, Texas, USA.}}
\newcommand{\linebreakand}{%
\end{@IEEEauthorhalign}
\hfill\mbox{}\par
\mbox{}\hfill\begin{@IEEEauthorhalign}
}
\author{\IEEEauthorblockN{Keke Huang}
\IEEEauthorblockA{School of Comp. Sci. and Engg.\\
Nanyang Technological University\\
khuang005@ntu.edu.sg}
\and
\IEEEauthorblockN{Jing Tang\IEEEauthorrefmark{1}\thanks{\IEEEauthorrefmark{1}Corresponding author: Jing Tang.}}
\IEEEauthorblockA{Dept. of Ind. Syst. Engg. and Mgmt.\\
National University of Singapore\\
isejtang@nus.edu.sg}
\and
\IEEEauthorblockN{Xiaokui Xiao}
\IEEEauthorblockA{School of Computing\\
National University of Singapore\\
xkxiao@nus.edu.sg}
\linebreakand
\IEEEauthorblockN{Aixin Sun}
\IEEEauthorblockA{School of Comp. Sci. and Engg.\\
Nanyang Technological University\\
axsun@ntu.edu.sg}
\and
\IEEEauthorblockN{Andrew Lim}
\IEEEauthorblockA{Dept. of Ind. Syst. Engg. and Mgmt.\\
National University of Singapore\\
isealim@nus.edu.sg}
}
\begin{document}

\maketitle

\begin{sloppy}

\begin{abstract}
Given a social network $G$, the profit maximization (PM) problem asks for a set of seed nodes to maximize the profit, i.e., revenue of influence spread less the cost of seed selection. The target profit maximization (TPM) problem, which generalizes the PM problem, aims to select a subset of seed nodes from a target user set $T$ to maximize the profit. Existing algorithms for PM mostly consider the {\it nonadaptive} setting, where all seed nodes are selected in one batch without any knowledge on how they may influence other users. In this paper, we study TPM in {\it adaptive} setting, where the seed users are selected through multiple batches, such that the selection of a batch exploits the knowledge of actual influence in the previous batches. To acquire an overall understanding, we study the adaptive TPM problem under both the {\it oracle model} and the {\it noise model}, and propose \ADG and \ATPS algorithms to address them with strong theoretical guarantees, respectively. In addition, to better handle the sampling errors under the noise model, we propose the idea of {\it hybrid error} based on which we design a novel algorithm \ATPA that boosts the efficiency of \ATPS significantly. We conduct extensive experiments on real social networks to evaluate the performance, and the experimental results strongly confirm the superiorities and effectiveness of our solutions.
\end{abstract}

\begin{IEEEkeywords}
target profit maximization, social networks, approximation algorithms
\end{IEEEkeywords}

\section{Introduction}\label{sec:intro}

Online social networks (OSNs) such as Facebook and Twitter have witnessed their prosperous developments in recent years. Many companies have taken OSNs as the major advertising channels to promote their products via word-of-mouth effect. Those rapid proliferations have motivated substantial research on viral marketing strategies for maximal profits. Specifically, market strategy makers seek for a proper set of influential individuals and invest in each of them (e.g., cashback rewards, coupons, or discounts) to exert their influence on advertising, aiming to maximize the expected profit. This problem is commonly studied as the {\it profit maximization (PM)} problem in the literature~\cite{Arthur_pricestrategy_2009,Lu_profitmax_2012,Zhu_infprofit_2017,Tang_Profit_2018,Tang_profitMax_2016,Tang_profitMax_2018}. PM problem asks for a set of seed users $S$ from network $G$ at the cost of $c(S)$ so as to maximize the total expected profit, \ie the expected spread of $S$ less the total investment cost $c(S)$.

However, this vanilla PM problem overlooks one fact that even though the advertisers have the full knowledge about the whole network, they are likely to only have access to a fraction of users~\cite{Badanidiyuru_adapSeeding_2016}, which is quite common in marketing applications. For example, companies advertise new products to users in their subscription mailing list, or new shop owners provide free samples to the popularities or celebrities who visit their store on site, to name a few. To circumvent this potential issue, we extend the vanilla PM problem into a more generalized version and propose the {\it target profit maximization (TPM)} problem. Specifically, given a social network $G=(V,E)$ and a target user set $T \subseteq V$ with nonnegative expected profit where each user $u \in T$ is associated with a cost $c(u)$, the TPM problem aims to select a subset $S \subseteq T$ to maximize the expected profit. The target user set $T$ can be either made up by current accessible users in the social network or selected by current seed selection algorithms. In particular, if the target set $T$ contains all users in the social network, \ie $T=V$, TPM then degenerates to PM.

Existing work on profit maximization mostly focuses on the {\it nonadaptive} setting~\cite{Arthur_pricestrategy_2009,Lu_profitmax_2012,Zhu_infprofit_2017,Tang_Profit_2018,Tang_profitMax_2016,Tang_profitMax_2018}, where all seed nodes are selected in one batch without any knowledge on how they may influence other users. As a consequence of the nonadaptiveness, the maximal possible profit might not be achieved. On the contrary, if we could observe the real-time feedback from the market and response with a smarter seed selection procedure, we could make further improvement on the final profit due to the {\it adaptivity gap}~\cite{Arthur_pricestrategy_2009, Golovin_adaptive_2011, Han_AIM_2018,Tang_ASM_2019}. Motivated by this fact, we propose an adaptive strategy on seed selection for profit maximization and formulate the problem as the {\it adaptive TPM} problem. In a nutshell, as long as there exists a profitable target seed set, adaptive TPM would try to derive a subset to maximize the profit by exploiting the adaptivity advantage.

To the best of our knowledge, we are the first to consider profit maximization in adaptive setting and aim to provide inspirational insights for future research. To acquire an overall understanding, we first study this problem in {\it oracle model} where the profit of any node set can be obtained in $O(1)$ time. We then consider it in a more practical {\it noise model} where the expected profits (or expected spreads) can only be estimated through sampling. Eventually, we design an efficient algorithm with nontrivial approximation guarantees to address the adaptive TPM problem. In summary, our major contributions are as follows.

\begin{itemize}[topsep=2mm, partopsep=0pt, itemsep=1mm, leftmargin=18pt]
	\item \textbf{Proposal of adaptive target profit maximization.} We are the first to consider profit maximization in adaptive setting and propose the adaptive target profit maximization problem. By utilizing the knowledge from the market feedback, we study adaptive strategies to select more effective seed nodes to maximize the expected profit.
	\item \textbf{Theoretical analyses on oracle and noise model.} To gain a comprehensive understanding, we study adaptive TPM in both {\it oracle model} and {\it noise model}. Specifically, in {\it oracle model} where the profit of any node set is accessible in $O(1)$ time, we propose \ADG algorithm and prove that it could achieve an $\frac{1}{3}$-approximation. In the {\it noise model} where any expected profit (or expected spread) only can be estimated, we extend \ADG into \ATPS by considering {\it additive sampling error} and derive the corresponding approximation guarantee.
	\item \textbf{Practical algorithm with optimized efficiency.} To further optimize the efficiency of \ATPS in the noise model, we propose \ATPA algorithm by adopting the idea of {\it hybrid error}. Specifically, additive error could incur prohibitive computation overhead (see Section~\ref{sec:reasonforoptimization}). To tackle this issue, we propose hybrid error which combines both relative error and additive error, fitting nicely to the scenario of nodes with diverse expected spreads. Based on this novel technique, we design \ATPA that boosts the efficiency of \ATPS significantly. Finally, its approximation guarantee and time complexity are derived.
\end{itemize}

\section{Preliminaries}\label{sec:preliminaries}

\begin{table}[!t]
	\centering
	\caption{Frequently used notations}
	\label{tbl:prelim-notations}
	\setlength{\tabcolsep}{0.5em} 
	\renewcommand{\arraystretch}{1.2}
	\begin{tabular}{|c|m{6.5cm}|}\hline
		\textbf{Notation} & \multicolumn{1}{c|}{\textbf{Description}} \\ \hline
		$G=(V,E)$ & a social network with node set $V$ and edge set $E$\\ \hline
		$n,m$ & the numbers of nodes and edges in $G$, respectively\\ \hline
		$T$ & the set of target seed nodes \\ \hline
		$k$ & the number of elements in $T$, \ie $k=|T|$ \\ \hline
		$S_i$ & the seed set selected in the $i$-th iteration \\ \hline
		$T_i$ & the seed set candidate in the $i$-th iteration \\ \hline
		$G_i$ & the $i$-th residual graph \\ \hline
		$n_i,m_i$ & the numbers of nodes and edges in $G_i$ \\ \hline		
		$I_{G_i}(u_i)$ & the number of nodes activated by node $u_i$ on $G_i$   \\ \hline
		$\Cov_{\R}(S)$ & the number of RR-sets in $\R$ that overlap $S$ \\ \hline
		$\E[I(S)]$ & the expected spread of seed set $S$ \\ \hline 
		$\phi, \Phi, \Omega$	         	   & a specific realization, a random realization, and the realization space	\\  \hline
		$\pi^\mathrm{f}, \pi^\mathrm{r}, \pi^{\opt}$			 & a front greedy policy, a rear greedy policy, and an optimal policy					\\  \hline
	\end{tabular}
\end{table}

This section presents the formal definition of the adaptive target profit maximization problem. To demonstrate the influence propagation process, we take the {\it independent cascade (IC)} model~\cite{Kempe_maxInfluence_2003} as an illustration, which is one of the extensively studied propagation models on this topic. Table~\ref{tbl:prelim-notations} summarizes the notations that are frequently used.

\subsection{Influence Propagation and Realization}\label{sec:infpro}

Let $G=(V,E)$ be a social network with a node set $V$ and a directed edge set $E$ where $n=|V|$ and $m=|E|$. For each edge $\langle u,v \rangle \in E$, $u$ is called an {\it incoming neighbor} of $v$ and $v$ is an {\it outgoing neighbor} of $u$. Each edge $e \in E$ is associated with a probability $p(e) \in (0,1]$. For simplicity, such a social network is referred to as {\it probabilistic social graph}. Given an initial node set $S \subseteq V$, the influence propagation process started by $S$ under the independent cascade (IC) model is stochastic as follows. During time interval $t_0$, all nodes in $S$ are activated and other nodes in $V\setminus S$ are inactive. At time interval $t_i$, each node activated during time interval $t_{i-1}$ has {\it one} chance to activate its inactive outgoing neighbors with the probability associated with that edge. This influence propagation process continues until no more inactive nodes can be activated. After the propagation process terminates, let $I_G(S)$ be the number of nodes activated in $G$. We call $S$ as the {\it seed set} and $I_G(S)$ as the {\it spread} of $S$ on $G$. Note that $I_G(S)$ is a random variable with respect to the propagation process. For simplicity, we use $I(S)$ to represent $I_G(S)$ by omitting the subscript $G$ unless otherwise specified.

The above process can also be interpreted by utilizing the concept of {\it realization} (aka possible world). Given a probabilistic social graph $G$, one realization, denoted as $\phi$, is a residual graph of $G$ constructed by removing each edge $e \in E$ with probability $1-p(e)$. Let $\Omega$ be the set of all possible realizations and we have $|\Omega|=2^m$ where $m=|E|$. Let $\Phi$ be a random realization randomly sampled from $\Omega$, denoted as $\Phi \sim \Omega$. Given a specific realization $\phi \in \Omega$ and any seed set $S\subseteq V$, the number of nodes that can be reached by $S$ under $\phi$ is the spread of $S$, denoted as $I_\phi(S)$. With this regard, the expected spread of $S$, denoted as $\E[I(S)]$, can be expressed as follows.
\begin{equation}\label{eqn:expectedspread}
\E[I(S)]:=\E_{\Phi\sim \Omega}[I_\Phi(S)]=\sum_{\phi \in \Omega}I_\phi(S) \cdot p(\phi),
\end{equation}
where $p(\phi)$ is the probability of $\phi$ sampled from $\Omega$.

\subsection{Adaptive Target Profit Maximization}\label{sec:profdef}

\begin{figure*}[!t]
	\centering
	\subfloat[A social graph $G_1$]{\includegraphics[width=0.2\linewidth]{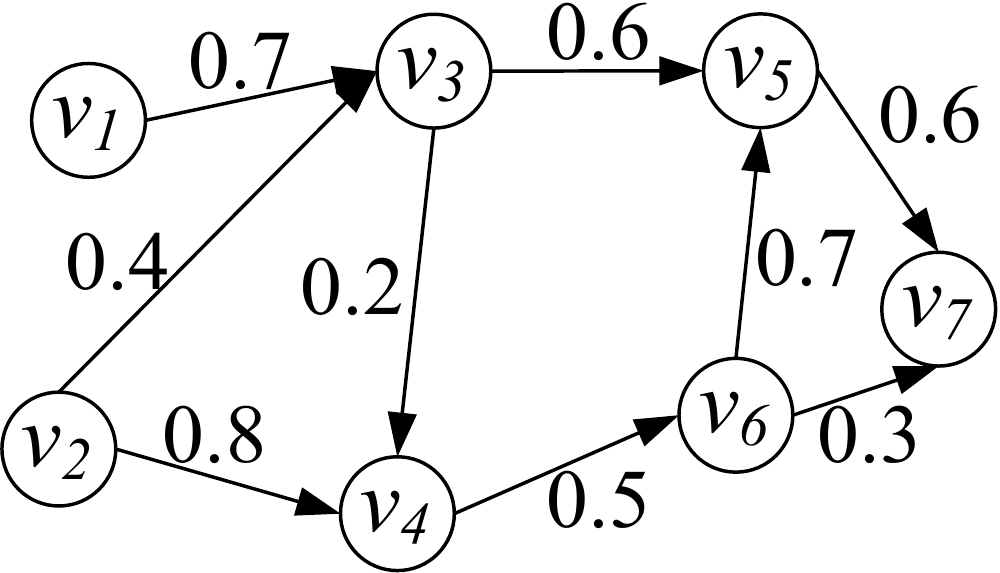}\label{subfig:a}}\hfill
	\subfloat[$v_2$ as the first seed]{\includegraphics[width=0.2\linewidth]{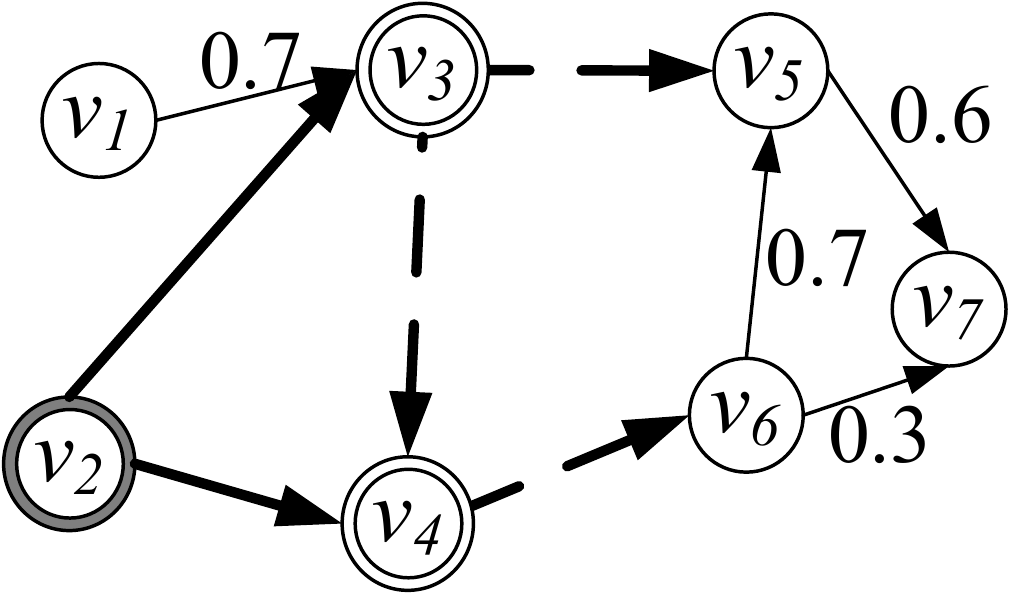}\label{subfig:b}}\hfill
	\subfloat[The residual graph $G_2$]{\includegraphics[width=0.2\linewidth]{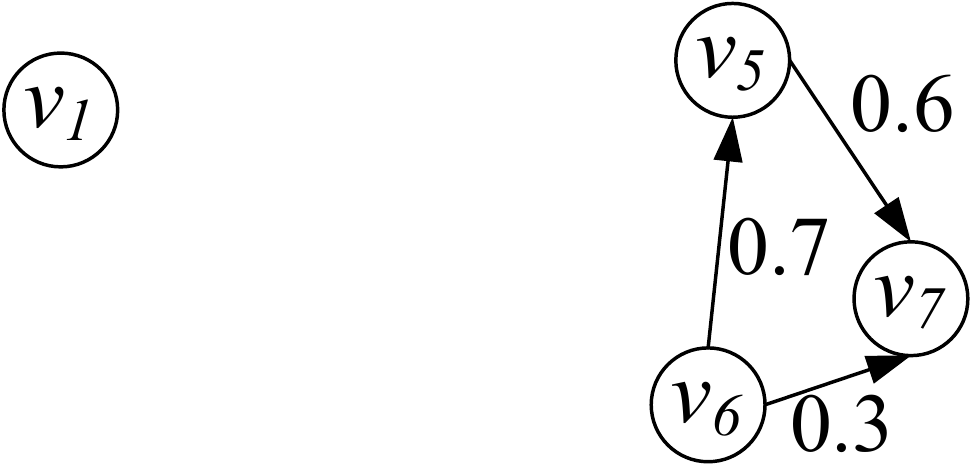}\label{subfig:c}}\hfill
	\subfloat[$v_6$ as the second seed]{\includegraphics[width=0.2\linewidth]{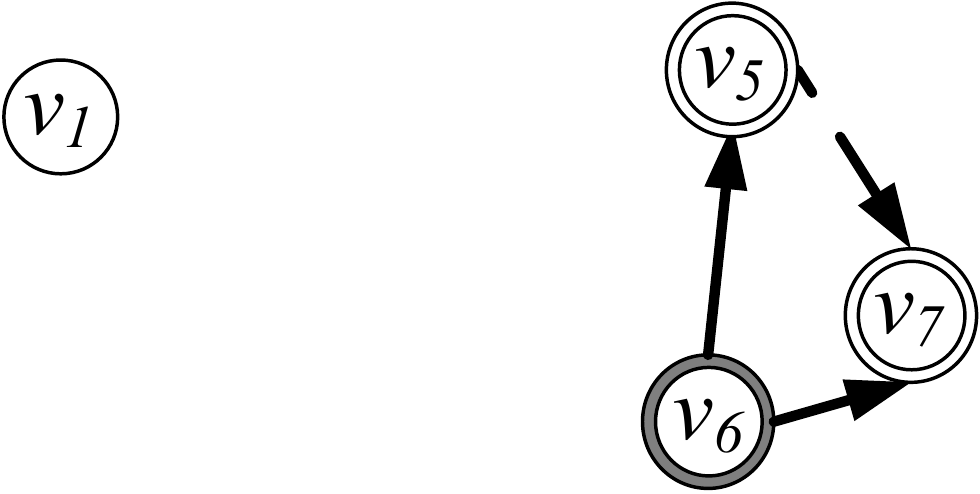}\label{subfig:d}}
	\caption{An adaptive profit maximization process.}\label{fig:APM-process}
\end{figure*}

Given a social network $G=(V,E)$, we consider a target user set $T \subseteq V$ and each user $u \in T$ is associated with a cost $c(u)$. For any set $S\subseteq T$, we denote $\rho(S)$ as the expected profit of $S$, which is defined as
\begin{equation}\label{profit}
	\rho(S):=\E[I(S)]-c(S),
\end{equation}
where $\E[I(S)]$ is the expected spread of $S$ and $c(S)=\sum_{u\in S}c(u)$. As $\rho(\cdot)$ is a positive linear combination of a monotone submodular function (\ie~$\E[I(\cdot)]$) and a modular function (\ie~$c[\cdot]$), we know that $\rho(\cdot)$ is still submodular, which, however, may not be monotone. We assume that the expected profit of the target set $T$ is nonnegative, \ie~$\rho(T)\geq 0$. Then, the target profit maximization (TPM) problem aims to select a subset $S \subseteq T$ to maximize expected profit. 

Conventionally, seed nodes are nonadaptively selected without any knowledge of realization happened in the actual influence propagation process. Contrarily, the adaptive strategy selects each node based on the real-time feedback from the realization observed, which is proved to be more effective~\cite{Arthur_pricestrategy_2009,Golovin_adaptive_2011,Han_AIM_2018}. Specifically, the adaptive strategy selects node one by one in an adaptive manner as follows. Let $G_1=G$ and $T_0=T$ at the beginning. It first {\it selects} a node $u_1$ from the target set $T_0$ based on $G_1$, and then {\it observes} how node $u_1$ activates other nodes. As follows, it removes those activated nodes from $G_1$ and $T_0$, resulting in the residual graph $G_2$ and set $T_1$ respectively. Subsequently, it then selects the next node $u_2$ from $T_2$ based on the residual graph $G_2$. This process will be repeated until a solution $S$ is produced.

For example, \figurename~\ref{subfig:a} gives a probabilistic social graph $G_1$ under the IC model. Suppose that the target seed set is $T=\{v_1, v_2, v_6\}$ and each node in $T$ has a cost of $1.5$. One can verify that the optimal nonadaptive solution is $T$ with an expected profit of $\rho_{G_1}(T)=\E[I_{G_1}(T)]-c(T)=6.16-4.5=1.66$. As with the adaptive seeding strategy, it would select node $v_2$ as the first seed node, as shown in \figurename~\ref{subfig:b}. Specifically, nodes in gray double-cycle are seeds and nodes in blank double-cycle are activated by the seeds. Bold full-line arrow (resp.\ bold dashed-line arrow) indicates a successful (resp.\ failed) step of influence. Observe that $v_3$ and $v_4$ are activated by $v_2$ and thus $G_1$ is updated to $G_2$ by removing $\{v_2$, $v_3$, $v_4\}$, as shown in \figurename~\ref{subfig:c}. Then, node $v_6$ is selected as the second seed and $v_5$ and $v_7$ will be activated, as shown in \figurename~\ref{subfig:d}. Eventually, $\{v_2, v_6\}$ are selected with a total profit of $6-3=3$. However, under this realization, the solution $\{v_1, v_2, v_6\}$ selected by the nonadaptive algorithm produces a spread of $7$, which results in the final profit of $7-4.5=2.5$. Thus, the adaptive strategy generates $20\%$ more profits.

In this paper, we aim to study algorithms that obtain seed selection strategies (also known as \textit{policies}) for adaptive target profit maximization. For convenience, we abuse notation and use the terms ``policy'' and ``algorithm'' interchangeably in the paper. Let $S_{\phi}(\pi)$ be the seed set selected by the policy $\pi$ under realization $\phi$. Then, the profit of policy $\pi$ under realization $\phi$ is $\rho_{\phi}(S_{\phi}(\pi))=I_{\phi}(S_{\phi}(\pi))-c(S_{\phi}(\pi))$. Now, we give the expected profit of policy over all possible realizations.
\begin{definition}[Expected Profit of Policy]\label{def:polprofit}
The expected profit $\Lambda(\pi)$ of policy $\pi$ is defined as 
\begin{equation}
	\Lambda(\pi):=\E_{\Phi \sim \Omega}[\rho_{\Phi}(S_{\phi}(\pi))]=\sum_{\phi \in \Omega}\rho_{\phi}(S_{\phi}(\pi)) \cdot p(\phi),
\end{equation}
where $p(\phi)$ is the probability of $\phi$ sampled from $\Omega$.
\end{definition}
Based on the definition of expected profit of policy, adaptive target profit maximization is defined as follows.
\begin{definition}[Adaptive TPM]\label{def:adaptTarProfMax}
Given a probabilistic graph $G=(V,E)$ and a target user set $T\subseteq V$ with nonnegative expected profit where each node $u \in T$ is associated with a cost $c(u)$ for seed selection, the adaptive target profit maximization problem asks for a policy $\pi^\ast$ that maximizes the expected profit over all possible realizations, \ie 
\begin{equation}
	\pi^\ast:={\arg \max}_{\pi} \Lambda(\pi).
\end{equation}
\end{definition}
\section{Adaptive Target Profit Maximization}\label{sec:sec3}

In this section, we first introduce the {\it double greedy} method which is widely adopted to address the profit maximization problem in the literature. Then we extend double greedy to its adaptive version based on which we develop \ADG algorithm and \ATPS algorithm to address the adaptive target profit maximization problem under the {\it oracle model} and {\it noise model} respectively. As what follows, we then present the analyses of the approximation guarantees of \ADG and \ATPS.

\subsection{Double Greedy}\label{sec:doublegreedy}
Profit maximization is an application of {\it unconstrained submodular maximization (USM)} for which Buchbinder~\etal~\cite{Buchbinder_DoubleGreedy_2012} propose a {\it double greedy} algorithm as shown in Algorithm~\ref{alg:DG}. The double greedy algorithm is widely used to address the profit maximization problem in the literature~\cite{Tang_Profit_2018,Amo_CouponAdvertise_2018,Zhiyi_pricecompete_2018}.

Algorithm~\ref{alg:DG} presents the pseudocode of double greedy. For any submodular function $f(\cdot)$, double greedy works as follows. It maintains two sets $S$ and $T$ initialized with empty set and ground set $V$, respectively. It then checks each node $u\in V$ one by one (in arbitrary sequence) to decide on selecting or abandoning $u$. To this end, it calculates the marginal gain of selecting it conditioned on current seed set $S$, denoted as $z^+$, and the marginal gain of abandoning it conditioned on current candidate set $T$, denoted as $z^-$. If $z^+ \ge z^-$, node $u$ will be selected and added to $S$. Otherwise, $u$ is removed from $T$. When all nodes in $V$ have been checked, the resultant $S$ and $T$ are equal and either of them is returned. The rationale behind double greedy is quite straightforward, \ie the node is selected only if it could bring larger marginal gain by keeping it than that by removing it.
\begin{algorithm}[!t]
	\caption{{Double Greedy}~\cite{Buchbinder_DoubleGreedy_2012} \label{alg:DG}}	
	Initialize $S\leftarrow\emptyset$, $T\leftarrow V$\; 
	\ForEach{node $u\in V$}
	{
		$z^+ \leftarrow f(u \mid S)$\;\label{alg0:frontprof}
		$z^- \leftarrow -f(u \mid T \setminus \{u\})$\;\label{alg0:rearprof}
		\lIf{$z^+ \ge z^-$}{$S\leftarrow S\cup \{u\}$}
		\lElse{$T\leftarrow T\setminus\{u_i\}$}
	}
	\Return $S\ (=T)$\;
\end{algorithm}

\subsection{Adaptive Double Greedy under the Oracle Model}\label{sec:adaptoracle}

\subsubsection{\bf{Description of \ADG}}\label{sec:adgdescrip}
Under the oracle model, we assume that the expected spread of any node set is accessible in $O(1)$ time. Based on this assumption, we design {\it adaptive double greedy (\ADG)}. The decision on seed selection of \ADG relies on the conditional expected marginal profit.

\begin{definition}[Conditional Expected Marginal Profit]\label{def:ConExpMarProf}
	Given a node $u$ and a node set $S$, the conditional expected marginal profit of $u$ conditioned on $S$ on graph $G$ is defined as
	\begin{equation}
	\Delta_{G}(u\mid S):=\rho_{G}(S\cup\{u\})-\rho_{G}(S).
	\end{equation}
	Then, we have $\Delta_{G}(u\mid S)=\E[I_{G}(u\mid S)] -c(u)$ if $u\not\in S$, and $\Delta_{G}(u\mid S)=0$ otherwise.
\end{definition}

Algorithm~\ref{alg:adaptoracle} presents the pseudocode of \ADG. First, two sets $S_0$ and $T_0$ are initialized with empty set and the target set $T$ respectively (Line~\ref{alg1:initial}), where $T$ contains $k$ nodes. In the $i$-th iteration, if the examining node $u_i$ is activated already, it just checks the next node (Lines~\ref{alg1:check}--\ref{alg1:continue}). Otherwise, it calculates the conditional expected marginal profit $\Delta_{G_i}(u_i\mid S_{i-1})$ (Line~\ref{alg1:frontprof}), denoted as {\it front profit $\rho_\mathrm{f}$}. Meanwhile, it also calculates $-\Delta_{G_i}(u_i\mid T_{i-1} \setminus \{u_i\})$ (Line~\ref{alg1:rearprof}), denoted as {\it rear profit $\rho_\mathrm{r}$}. If $\rho_\mathrm{f} \ge \rho_\mathrm{r}$, it would (i) insert $u_i$ into $S_i$ as one seed node, (ii) observe the set of nodes $A(u_i)$ activated by $u_i$, and (iii) update $G_i$ into $G_{i+1}$ by removing all nodes in $A(u_i)$. Otherwise, $u_i$ is removed from $T_{i-1}$. This process is repeated until all $k$ nodes in $T$ are examined and then the resultant $S_k$ is returned. Note that when \ADG terminates, $S_k$ and $T_k$ contain exactly the same nodes, \ie~$S_k=T_k$.
\begin{algorithm}[!t]
	\KwIn{Social graph $G$, a target seed set $T$ with $k$ nodes}
	\KwOut{Selected seed node set $S_k$}
	\caption{\ADG}
	\label{alg:adaptoracle}	
	Initialize $S_0\leftarrow\emptyset$, $T_0\leftarrow T$, $G_1\leftarrow G$\; \label{alg1:initial}
	\For{$i\leftarrow 1$ \KwTo $k$}
	{
		\If{$u_i$ is activated\label{alg1:check}}
		{
			$T_i\leftarrow T_{i-1}\setminus\{u_i\}$, $S_i\leftarrow S_{i-1}$, $G_{i+1}\leftarrow G_i$\;
			\KwContinue\;\label{alg1:continue}
		}
		$\rho_\mathrm{f}\leftarrow\Delta_{G_i}(u_i \mid S_{i-1})$\;\label{alg1:frontprof}
		$\rho_\mathrm{r}\leftarrow-\Delta_{G_i}(u_i \mid T_{i-1} \setminus \{u_i\})$\;\label{alg1:rearprof}
		\If{$\rho_\mathrm{f} \ge \rho_\mathrm{r}$}
		{
			$S_i\leftarrow S_{i-1}\cup \{u_i\}$, $T_i \leftarrow T_{i-1}$\;
			Observe the node set $A(u_i)$ activated by $u_i$\;
			Update $G_i$ into $G_{i+1}$ by removing $A(u_i)$\;
		}
		\lElse
		{$T_i\leftarrow T_{i-1}\setminus\{u_i\}$, $S_i\leftarrow S_{i-1}$, $G_{i+1}\leftarrow G_i$}
	}
	\Return $S_k$\;
\end{algorithm}

As shown, the decision on seed selection in \ADG is fully determined by the value of $\rho_\mathrm{f}$ and $\rho_\mathrm{r}$. In a nutshell, for each node $u\in T$, $u$ will be selected only if selecting $u$ could bring more marginal profit than abandoning it.

\subsubsection{\bf{Approximation Guarantee}}\label{sec:alg1appro} 
In this section, we would explore the approximation guarantee that \ADG could achieve. First, we show the relation between the front profit $\rho_\mathrm{f}$ and the rear profit $\rho_\mathrm{r}$ in Algorithm~\ref{alg:adaptoracle}.
\begin{lemma}\label{lem:frsum}
	If $u_i$ is inactive, we have $\rho_\mathrm{f}+\rho_\mathrm{r} \ge 0$.
\end{lemma}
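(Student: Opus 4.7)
The plan is to derive Lemma~\ref{lem:frsum} from the submodularity of the profit function $\rho_{G_i}(\cdot)$ on the residual graph together with a simple set-inclusion invariant maintained by Algorithm~\ref{alg:adaptoracle}. The key observation is that, by definition, $\rho_\mathrm{f}+\rho_\mathrm{r} = \Delta_{G_i}(u_i \mid S_{i-1}) - \Delta_{G_i}(u_i \mid T_{i-1}\setminus\{u_i\})$, so the claim reduces to showing that the marginal profit of $u_i$ is no smaller when conditioned on the smaller set $S_{i-1}$ than when conditioned on the larger set $T_{i-1}\setminus\{u_i\}$; this is exactly a diminishing-returns statement.

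First, I would invoke the fact established in Section~\ref{sec:profdef} that $\rho(\cdot)=\E[I(\cdot)]-c(\cdot)$ is submodular, because $\E[I(\cdot)]$ is monotone submodular under the IC model and $c(\cdot)$ is modular. The same reasoning applied to the residual graph $G_i$ shows that $\rho_{G_i}(\cdot)$ is submodular on subsets of $V(G_i)$. Second, I would establish by induction on $i$ the invariant $S_{i-1}\subseteq T_{i-1}$: this holds at initialization ($S_0=\emptyset\subseteq T=T_0$), and each branch of the loop either keeps $T$ fixed while adding $u_i$ to $S$, or removes $u_i$ from $T$ while leaving $S$ unchanged (where $u_i\notin S_{i-1}$ because $u_i$ has not yet been processed). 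Since the lemma conditions on $u_i$ being inactive, $u_i$ has not been inserted into $S$, hence $u_i\notin S_{i-1}$, which refines the invariant to $S_{i-1}\subseteq T_{i-1}\setminus\{u_i\}$. Third, because $u_i$ is inactive it still lies in $V(G_i)$, so applying the submodular inequality of $\rho_{G_i}$ to the nested pair $S_{i-1}\subseteq T_{i-1}\setminus\{u_i\}$ with the added element $u_i$ gives $\Delta_{G_i}(u_i\mid S_{i-1})\ge \Delta_{G_i}(u_i\mid T_{i-1}\setminus\{u_i\})$, which is precisely $\rho_\mathrm{f}+\rho_\mathrm{r}\ge 0$.

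The main obstacle I anticipate is not algebraic but notational: the set $S_{i-1}$ may contain previously chosen seeds that have themselves been removed from $V(G_i)$ as part of the observed activation sets $A(u_j)$. To make the submodularity argument rigorous, I would adopt (and briefly justify) the natural convention that $\rho_{G_i}(S)$ depends only on $S\cap V(G_i)$, so that both marginals in question are evaluated against subsets of $V(G_i)$ and inherit submodularity. Under this convention the inclusion $S_{i-1}\cap V(G_i)\subseteq (T_{i-1}\setminus\{u_i\})\cap V(G_i)$ still follows from the invariant, and the rest of the argument is a one-line application of diminishing returns. Everything else is routine bookkeeping on the loop structure of Algorithm~\ref{alg:adaptoracle}.
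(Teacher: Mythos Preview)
Your proposal is correct and follows essentially the same route as the paper: expand $\rho_\mathrm{f}+\rho_\mathrm{r}$, observe the costs cancel, use the inclusion $S_{i-1}\subseteq T_{i-1}\setminus\{u_i\}$, and invoke diminishing returns. The paper phrases the final step via submodularity of the expected marginal spread $\E[I_{G_i}(\cdot)]$ rather than of $\rho_{G_i}(\cdot)$, but since $c(\cdot)$ is modular these are equivalent; your version is somewhat more careful in that you justify the set-inclusion invariant by induction and flag the residual-graph convention, whereas the paper simply asserts $S_{i-1}\subseteq T_{i-1}\setminus\{u_i\}$.
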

\begin{proof}[Proof of Lemma~\ref{lem:frsum}]
	Given a residual graph $G_i$ and the corresponding node $u_i$, $\rho_\mathrm{f}=\E[I_{G_i}(u_i\mid S_{i-1})]-c(u_i)$ and $\rho_\mathrm{r}=c(u_i)-\E[I_{G_i}(u_i\mid T_{i-1} \setminus \{u_i\})]$. Then we have $\rho_\mathrm{f}+\rho_\mathrm{r}=\E[I_{G_i}(u_i\mid S_{i-1})]-\E[I_{G_i}(u_i\mid T_{i-1}\setminus \{u_i\})]$. Since $S_{i-1} \subseteq T_{i-1}\setminus \{u_i\}$ and $u_i \in T \setminus (T_{i-1}\setminus \{u_i\})$, based on the property of submodularity, we have $\rho_\mathrm{f}+\rho_\mathrm{r} \ge 0$, which completes the proof.
\end{proof}

To facilitate the analysis that follows, we define the notions of {\it policy truncation}, {\it policy concatenation}, and {\it policy intersection}, which are conceptual operations performed by a policy. Note that these policy operations are used for our theoretical analysis only, and they do not affect the actual implementation of our algorithms.

\begin{definition}[Policy Truncation~\cite{Golovin_adaptive_2011}]\label{def:potrun}
	For any adaptive seeding policy $\pi$, the policy truncation $\pi_{[i]}$ denotes an adaptive policy that performs exactly the same as $\pi$, except that $\pi_{[i]}$ only evaluates the first $i$ ($i\leq n$) nodes in a given node sequence.
\end{definition}

\begin{definition}[Policy Concatenation~\cite{Golovin_adaptive_2011}]\label{def:pocan}
	For any two adaptive seeding policies $\pi$ and $\pi^\prime$, the policy concatenation $\pi\oplus\pi^\prime$ denotes an adaptive policy that first executes the policy $\pi$, and then executes $\pi^\prime$ as if from a fresh start without any knowledge about $\pi$.
\end{definition}

\begin{definition}[Policy Intersection]\label{def:point}
	For any two adaptive seeding policies $\pi$ and $\pi^\prime$, the policy intersection $\pi\otimes\pi^\prime$ denotes an adaptive policy that executes the intersection part of policy $\pi$ and $\pi^\prime$.
\end{definition}

Note that under any given realization $\phi$, policy concatenation shows that $S_\phi(\pi\oplus\pi^\prime)=S_\phi(\pi)\cup S_\phi(\pi^\prime)$ and policy intersection shows that $S_\phi(\pi\otimes\pi^\prime)=S_\phi(\pi)\cap S_\phi(\pi^\prime)$. Let $\pi^{\opt}$ be the optimal policy for adaptive TPM problem. Let $\pi^\mathrm{f}$ (resp.\ $\pi^\mathrm{r}$) be the {\it front greedy} (resp.\ {\it rear greedy}) policy of \ADG that executes as same as \ADG and selects a set $S_{i}$ (resp. $T_{i}$) of nodes after the $i$-th iteration, \ie~$S(\pi^\mathrm{f}_{[i]})=S_i$ (resp.\ $S(\pi^\mathrm{r}_{[i]})=T_i$). Define policy $\pi^\circ=(\pi^{\opt}\oplus\pi^\mathrm{f}) \otimes \pi^\mathrm{r}$ and its truncation as $\pi^\circ_{[i]}=(\pi^{\opt}\oplus\pi^\mathrm{f}_{[i]}) \otimes \pi^\mathrm{r}_{[i]}$. Let $S^\circ$ be the solution obtained by $\pi^\circ$, and $S^\circ_{i}$ be the temporal seed set obtained by $\pi^\circ_{[i]}$, \ie~$S^\circ_{i}=S(\pi^\circ_{[i]})=(S^\circ\cup S_{i})\cap T_{i}$. Then, we have $\Lambda(\pi^\mathrm{f}_{[0]})=0$, $\Lambda(\pi^\mathrm{r}_{[0]})=\rho(T)\geq 0$, $\Lambda(\pi^\circ_{[0]})=\Lambda(\pi^\opt)$, and $\Lambda(\pi^\mathrm{f}_{[k]})=\Lambda(\pi^\mathrm{r}_{[k]})=\Lambda(\pi^\circ_{[k]})$, where $k=|T|$. First of all, we bound the profit achievement of policies $\pi^\mathrm{f}$ and $\pi^\mathrm{r}$ on residual graph $G_i$ in the following lemma.
\begin{lemma}\label{lem:case1+2}
For the $i$-th iteration of \ADG, we have 
\begin{equation}\label{eqn:corestep}
\begin{split}
&\rho_{G_i}(S^\circ_{i-1}) - \rho_{G_i}(S^\circ_{i})\\
&\leq \rho_{G_i}(S_{i})-\rho_{G_i}(S_{i-1}) + \rho_{G_i}(T_{i}) - \rho_{G_i}(T_{i-1}). 
\end{split}
\end{equation} 
\end{lemma}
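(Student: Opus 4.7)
The plan is to adapt the double-greedy analysis of Buchbinder~\etal~\cite{Buchbinder_DoubleGreedy_2012} to the adaptive setting. Crucially, every term in~(\ref{eqn:corestep}) is evaluated on the same residual graph $G_i$, so we may regard $f(\cdot)\equiv\rho_{G_i}(\cdot)$ as a fixed (non-monotone) submodular set function: submodularity is inherited from the monotone submodular spread $\E[I_{G_i}(\cdot)]$ plus the modular term $-c(\cdot)$. The bound then reduces to a deterministic case analysis, split on which branch Algorithm~\ref{alg:adaptoracle} takes at iteration $i$, with a sub-split on whether $u_i\in S^\circ$.

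First I would use the facts $u_i\in T_{i-1}$ and $u_i\notin S_{i-1}$ (which follow because the algorithm scans distinct target nodes once each) to trace how $S^\circ_{i-1}=(S^\circ\cup S_{i-1})\cap T_{i-1}$ turns into $S^\circ_i=(S^\circ\cup S_i)\cap T_i$: the symmetric difference $S^\circ_{i-1}\triangle S^\circ_i$ is either empty or equal to $\{u_i\}$, and which occurs is determined by the pair (algorithm branch, membership of $u_i$ in $S^\circ$). This bookkeeping collapses the left-hand side of~(\ref{eqn:corestep}) to $0$ or to a single marginal $\pm\Delta_{G_i}(u_i\mid A)$ for an appropriate set $A$.

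Next, in \emph{Case 1} ($\rho_\mathrm{f}\ge\rho_\mathrm{r}$, so $S_i=S_{i-1}\cup\{u_i\}$ and $T_i=T_{i-1}$) the right-hand side of~(\ref{eqn:corestep}) simplifies to $\rho_\mathrm{f}$. When $u_i\in S^\circ$ the LHS vanishes, and the inequality follows from Lemma~\ref{lem:frsum} together with the branch condition, which force $\rho_\mathrm{f}\ge 0$. When $u_i\notin S^\circ$ the LHS equals $-\Delta_{G_i}(u_i\mid S^\circ_{i-1})$; submodularity applied to the inclusion $S^\circ_{i-1}\subseteq T_{i-1}\setminus\{u_i\}$ gives $\Delta_{G_i}(u_i\mid S^\circ_{i-1})\ge\Delta_{G_i}(u_i\mid T_{i-1}\setminus\{u_i\})=-\rho_\mathrm{r}$, so the LHS is at most $\rho_\mathrm{r}\le\rho_\mathrm{f}$. \emph{Case 2} ($\rho_\mathrm{f}<\rho_\mathrm{r}$) is the mirror image: the RHS simplifies to $\rho_\mathrm{r}$, and either the LHS is zero (when $u_i\notin S^\circ$, with $\rho_\mathrm{r}\ge 0$ supplied again by Lemma~\ref{lem:frsum}) or the LHS equals $\Delta_{G_i}(u_i\mid S^\circ_{i-1}\setminus\{u_i\})$, which submodularity on $S_{i-1}\subseteq S^\circ_{i-1}\setminus\{u_i\}$ bounds by $\rho_\mathrm{f}\le\rho_\mathrm{r}$. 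The skipped iteration in which $u_i$ is already activated is handled in passing, as both sides of~(\ref{eqn:corestep}) are easily seen to be non-positive vs.\ non-negative.

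The main obstacle will be the bookkeeping rather than any single sharp inequality: one has to verify that the set inclusions needed to invoke submodularity --- in particular $S_{i-1}\subseteq S^\circ_{i-1}\subseteq T_{i-1}\setminus\{u_i\}$ in the appropriate sub-cases --- actually hold, and that $S^\circ_{i-1}$ and $S^\circ_i$ differ by at most the singleton $\{u_i\}$. Once this is nailed down, each of the four substantive sub-cases becomes a one-line argument combining submodularity of $\rho_{G_i}$, Lemma~\ref{lem:frsum}, and the branch condition of Algorithm~\ref{alg:adaptoracle}.
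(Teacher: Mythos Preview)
Your proposal is correct and follows essentially the same approach as the paper: a two-case split on the branch taken by Algorithm~\ref{alg:adaptoracle}, a sub-split on whether $u_i$ lies in $S^\circ_{i-1}$ (equivalently, in $S^\circ$), and in each sub-case either Lemma~\ref{lem:frsum} (to get $\rho_\mathrm{f}\ge 0$ or $\rho_\mathrm{r}\ge 0$) or submodularity along the inclusions $S^\circ_{i-1}\subseteq T_{i-1}\setminus\{u_i\}$ and $S_{i-1}\subseteq S^\circ_{i-1}\setminus\{u_i\}$. Your treatment is in fact slightly more explicit than the paper's --- you spell out the bookkeeping that $S^\circ_{i-1}$ and $S^\circ_i$ differ by at most $\{u_i\}$, and you separately note the skipped iteration where $u_i$ is already activated, a case the paper absorbs silently into the $\rho_\mathrm{f}<\rho_\mathrm{r}$ branch.
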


The formal proofs of most theoretical results are given in Appendix~\ref{append:proofs}. Based on Lemma~\ref{lem:case1+2}, we further derive the following lemma for the expected profit of policies $\pi^\mathrm{f}$, $\pi^\mathrm{r}$, and policy $\pi^\circ$.
\begin{lemma}\label{lem:policyprofit}
For the $i$-th iteration of \ADG, we have 
\begin{equation*}
\Lambda(\pi^\circ_{[i-1]}) - \Lambda(\pi^\circ_{[i]})\le \Lambda(\pi^\mathrm{f}_{[i]}) - \Lambda(\pi^\mathrm{f}_{[i-1]}) + \Lambda(\pi^\mathrm{r}_{[i]}) - \Lambda(\pi^\mathrm{r}_{[i-1]}). 
\end{equation*} 
\end{lemma}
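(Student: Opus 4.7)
My plan is to lift the pointwise inequality provided by Lemma~\ref{lem:case1+2} to an inequality on the $\Lambda$-values by taking expectations over the random realization of the first $i$ iterations of \ADG. The inequality in Lemma~\ref{lem:case1+2} holds on the (random) residual graph $G_i$ conditional on the history up to and including the $i$-th iteration, so the only non-trivial work is to recognise each of the three expected quantities
\[
\E[\rho_{G_i}(S_i)-\rho_{G_i}(S_{i-1})],\quad \E[\rho_{G_i}(T_i)-\rho_{G_i}(T_{i-1})],\quad \E[\rho_{G_i}(S^\circ_{i-1})-\rho_{G_i}(S^\circ_i)]
\]
as, respectively, $\Lambda(\pi^\mathrm{f}_{[i]})-\Lambda(\pi^\mathrm{f}_{[i-1]})$, $\Lambda(\pi^\mathrm{r}_{[i]})-\Lambda(\pi^\mathrm{r}_{[i-1]})$, and $\Lambda(\pi^\circ_{[i-1]})-\Lambda(\pi^\circ_{[i]})$.

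The engine behind these identifications is what I will call an \emph{adaptive telescoping identity}: for any of the three policies at hand, the change in expected profit when passing from its $(i{-}1)$-truncation to its $i$-truncation coincides with the expected marginal profit generated at iteration $i$, measured on the residual graph $G_i$. This follows from the Markov-type property of the IC model observable through realizations: once we condition on the activations revealed in iterations $1,\ldots,i-1$, the remaining propagation randomness on $G$ has exactly the same law as the propagation randomness on $G_i$. Consequently, for any seed set $X\subseteq V(G_i)$ measurable with respect to the history, the expected additional spread that $X$ produces on $G$ over the previously selected seeds equals $\E[I_{G_i}(X)-I_{G_i}(\text{prev seeds})]$, and subtracting costs yields the identity in the $\rho$-form. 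Applying this to $\pi^\mathrm{f}$ (with $S_{i-1}\to S_i$) and to $\pi^\mathrm{r}$ (with $T_{i-1}\to T_i$) is essentially immediate once the identity is written down; the telescoping sum $\sum_i(\Lambda(\pi_{[i]})-\Lambda(\pi_{[i-1]}))$ then recovers $\Lambda(\pi)$ consistently.

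The main obstacle is the identity for $\pi^\circ$, because $S^\circ_i=(S^\circ\cup S_i)\cap T_i$ depends not only on the state at iteration $i$ but on the entire future run $S^\circ$ of $(\pi^{\opt}\oplus\pi^\mathrm{f})\otimes \pi^\mathrm{r}$. I will handle this by conditioning on the history up to iteration $i-1$ and then splitting into the two branches of \ADG at step $i$ (select $u_i$ into $S$ versus discard $u_i$ from $T$): in each branch the transition from $S^\circ_{i-1}$ to $S^\circ_i$ either adds or removes precisely the element $u_i$ depending on whether $u_i\in S^\circ$, and the rest of $S^\circ$ is drawn independently of the $i$-th-step randomness. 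This localisation lets me write the conditional expectation of $\rho_{G_i}(S^\circ_{i-1})-\rho_{G_i}(S^\circ_i)$ as a $G_i$-marginal exactly analogous to the front and rear cases, after which the telescoping sum collapses to $\Lambda(\pi^\circ_{[i-1]})-\Lambda(\pi^\circ_{[i]})$.

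Once the three identities are established, the proof is completed by taking expectations of both sides of \eqref{eqn:corestep} over the realization through iteration $i$ and invoking linearity of expectation; the stated inequality on the $\Lambda$-values then drops out directly.
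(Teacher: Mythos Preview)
Your approach is essentially the same as the paper's: both express the $\Lambda$-differences as expectations over the distribution of the residual graph $G_i$ (the paper does this by explicitly partitioning $\Omega$ into the blocks $\Omega(G_i)$ and using $\rho_{G_i}(S)=\sum_{\phi\in\Omega(G_i)}\rho_\phi(S)\,p(\phi)/\Pr[G_i]$, whereas you phrase it as conditioning on the history and invoking the Markov property of IC), and then apply Lemma~\ref{lem:case1+2} pointwise for each $G_i$. The paper's proof does not isolate the $\pi^\circ$ case for special treatment---it simply asserts the same rewriting formula for all three policies---so your extra discussion of that case is more careful than what the paper provides, but the underlying mechanism is identical.
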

Lemma~\ref{lem:policyprofit} establishes the relation on expected profit of policy between \ADG and the optimal policy $\pi^\opt$ for each iteration. Based on Lemma~\ref{lem:policyprofit}, the approximation guarantee of \ADG is derived as follows.
\begin{theorem}\label{thm:adaptgreedy}
\ADG achieves the approximation ratio of $1/3$.
\end{theorem}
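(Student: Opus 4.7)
The plan is to derive Theorem~\ref{thm:adaptgreedy} from Lemma~\ref{lem:policyprofit} by a standard telescoping argument, following the same skeleton as the classical $1/3$-analysis of double greedy by Buchbinder~\etal, but lifted to the adaptive setting using the policy expectations already set up.

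Concretely, I would sum the bound in Lemma~\ref{lem:policyprofit} over $i=1,\dots,k$. The left-hand side telescopes to
\begin{equation*}
\sum_{i=1}^{k}\bigl(\Lambda(\pi^\circ_{[i-1]})-\Lambda(\pi^\circ_{[i]})\bigr) = \Lambda(\pi^\circ_{[0]})-\Lambda(\pi^\circ_{[k]}),
\end{equation*}
and the right-hand side telescopes to
\begin{equation*}
\bigl(\Lambda(\pi^\mathrm{f}_{[k]})-\Lambda(\pi^\mathrm{f}_{[0]})\bigr) + \bigl(\Lambda(\pi^\mathrm{r}_{[k]})-\Lambda(\pi^\mathrm{r}_{[0]})\bigr).
\end{equation*}
Now I would plug in the four boundary identities stated just before Lemma~\ref{lem:case1+2}: $\Lambda(\pi^\mathrm{f}_{[0]})=0$, $\Lambda(\pi^\mathrm{r}_{[0]})=\rho(T)$, $\Lambda(\pi^\circ_{[0]})=\Lambda(\pi^{\opt})$, and $\Lambda(\pi^\mathrm{f}_{[k]})=\Lambda(\pi^\mathrm{r}_{[k]})=\Lambda(\pi^\circ_{[k]})$. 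These collapse the inequality to
\begin{equation*}
\Lambda(\pi^{\opt}) - \Lambda(\pi^\mathrm{f}_{[k]}) \;\leq\; 2\,\Lambda(\pi^\mathrm{f}_{[k]}) - \rho(T),
\end{equation*}
which rearranges to $3\,\Lambda(\pi^\mathrm{f}_{[k]}) \geq \Lambda(\pi^{\opt}) + \rho(T)$.

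Finally, I would invoke the standing assumption $\rho(T)\geq 0$ on the target set to drop the $\rho(T)$ term on the right, obtaining $\Lambda(\pi^\mathrm{f}_{[k]}) \geq \tfrac{1}{3}\Lambda(\pi^{\opt})$. Since $\pi^\mathrm{f}_{[k]}$ is exactly the policy executed by \ADG and $S_k$ is the seed set it returns, this gives the claimed $1/3$-approximation.

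The theorem itself is essentially a two-line telescoping argument; all the real work lies in the per-iteration inequality of Lemma~\ref{lem:case1+2} and its expectation-taking in Lemma~\ref{lem:policyprofit}, both of which I am allowed to assume. The only subtlety I would double-check is that the boundary identity $\Lambda(\pi^\circ_{[0]})=\Lambda(\pi^{\opt})$ is being used correctly: when no nodes have been examined, $S^\circ_0 = (S^\circ\cup\emptyset)\cap T = S^\circ$ is exactly the seed set of $\pi^{\opt}$, so its expected profit over $\Phi\sim\Omega$ equals $\Lambda(\pi^{\opt})$, which is indeed what is needed.
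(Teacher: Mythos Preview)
Your proposal is correct and follows essentially the same argument as the paper's proof: telescope Lemma~\ref{lem:policyprofit} over $i=1,\dots,k$, substitute the four boundary identities, and use $\rho(T)\geq 0$ to conclude $\Lambda(\pi^\mathrm{f}_{[k]})\geq \Lambda(\pi^{\opt})/3$. The only nitpick is notational: in your sanity check you write ``$S^\circ_0=(S^\circ\cup\emptyset)\cap T=S^\circ$ is exactly the seed set of $\pi^{\opt}$,'' but $S^\circ$ is defined as the output of $\pi^\circ$, not $\pi^{\opt}$; what you really want is that under any realization $S_\phi(\pi^\circ_{[0]})=(S_\phi(\pi^{\opt})\cup\emptyset)\cap T=S_\phi(\pi^{\opt})$, which is the correct justification for $\Lambda(\pi^\circ_{[0]})=\Lambda(\pi^{\opt})$.
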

\begin{proof}[Proof of Theorem~\ref{thm:adaptgreedy}]
From Lemma~\ref{lem:policyprofit}, we have
\begin{align*}
	&\Lambda(\pi^\circ_{[0]})-\Lambda(\pi^\circ_{[k]})
	=\sum\nolimits_{i=0}^{k}\big(\Lambda(\pi^\circ_{[i-1]}) - \Lambda(\pi^\circ_{[i]})\big)\\
	&\leq \sum\nolimits_{i=0}^{k}\big(\Lambda(\pi^\mathrm{f}_{[i]}) - \Lambda(\pi^\mathrm{f}_{[i-1]}) + \Lambda(\pi^\mathrm{r}_{[i]}) - \Lambda(\pi^\mathrm{r}_{[i-1]})\big)\\
	&=\Lambda(\pi^\mathrm{f}_{[k]})-\Lambda(\pi^\mathrm{f}_{[0]})+\Lambda(\pi^\mathrm{r}_{[k]})-\Lambda(\pi^\mathrm{r}_{[0]}).
\end{align*}
Recalling that $\Lambda(\pi^\mathrm{f}_{[0]})=0$, $\Lambda(\pi^\mathrm{r}_{[0]})=\rho(T)\geq 0$, $\Lambda(\pi^\circ_{[0]})=\Lambda(\pi^\opt)$, and $\Lambda(\pi^\mathrm{f}_{[k]})=\Lambda(\pi^\mathrm{r}_{[k]})=\Lambda(\pi^\circ_{[k]})$, we obtain that $\Lambda(\pi^\mathrm{f}_{[k]})=\Lambda(\pi^\mathrm{r}_{[k]})\geq \Lambda(\pi^\opt)/3$.
\end{proof}

\spara{Remark} At the first glance, the equations and transformations in this paper might look similar to those in the existing influence maximization papers, which is because we adopt the commonly used notations and definitions in the literature. However, we note that influence maximization is based on monotone submodular optimization with a cardinality constraint, whereas profit maximization is based on unconstrained (i.e., nonmonotone) submodular optimization. Due to the fundamental difference between the two problems, our algorithms and mathematical analysis (including the equations and transformations) actually {\it differ} considerably from those in the existing influence maximization papers. Specifically, a simple greedy algorithm is used to address influence maximization, while we devise an adaptive \textit{double} greedy algorithm tailored for adaptive target profit maximization.

\subsection{Adaptive Double Greedy under the Noise Model}\label{sec:Noisemodel}

As well-known, computing the exact expected spread of any node set is \#P-hard~\cite{Chen_LDAG_2010}. In this section, we try to estimate the expected spread of any seed set by taking the sampling error into account. In general, there are two forms of sampling error, \ie~{\it relative error} and {\it additive error}. Considering that we need to estimate the expected marginal spreads of $2k$ node sets during the whole process, it can be quite intricate utilizing relative error. To explain, for those with small expected marginal spreads, only a trivial amount of estimation error would be allowed, which is rather difficult to achieve by existing methods for spread estimation. Motivated by this, we adopt the additive error instead and propose the \ATPS \footnote{Algorithms with \underline{add}itive error for \underline{a}daptive \underline{t}argeted \underline{p}rofit maximization} algorithm. Algorithm~\ref{alg:AdaptAddiError} presents the details of \ATPS. 

\subsubsection{\bf {Description of \ATPS}}\label{sec:atpsdescrip}

\begin{algorithm}[!t]
	\setlength{\hsize}{0.963\linewidth}
	\KwIn{Social graph $G$, a target seed set $T$ with $k$ nodes, the initial error $\zeta_0$}
	\KwOut{Selected seed node set $S_k$}
	\caption{{\ATPS} \label{alg:AdaptAddiError}}	
	Initialize $S_0\leftarrow \emptyset$, $T_0\leftarrow T$\;\label{alg2:initial}	
	\For{$i\leftarrow 1$ \KwTo $k$}
	{
		\If{$u_i$ is activated}
		{$T_i\leftarrow T_{i-1}\setminus\{u_i\}$, $S_i\leftarrow S_{i-1}$, $G_{i+1}\leftarrow G_i$\; 
			\KwContinue\;}
		$\zeta_i\leftarrow \zeta_0$, $\delta_i\leftarrow 1/(kn)$\label{alg2:initial2}\tcp*[r]{$\zeta_0\geq 1/n$}
		\While{true}
		{
			$\theta\leftarrow \frac{1}{2\zeta_i^2}\ln\frac{8}{\delta_i}$\;
			Generate $\theta$ RR sets as $\R_1$ and $\R_2$, respectively\; 
			$\fest\leftarrow\Cov_{\R_1}(u_i\mid S_{i-1})\cdot \frac{n_i}{\theta}-c(u_i)$\;\label{alg2:fest}
			$\rest\leftarrow-\Cov_{\R_2}(u_i\mid T_{i-1} \setminus \{u_i\}) \cdot \frac{n_i}{\theta}+c(u_i)$\;\label{alg2:rest}
			\If{$C_1$ \KwOr $C_2$\label{alg2:judgement}}
			{
				\If{$\fest\geq \rest$\label{alg2:select}}
				{
					$S_i\leftarrow S_{i-1}\cup \{u_i\}$, $T_i \leftarrow T_{i-1}$\;
					Observe the node set $A(u_i)$ activated by $u_i$\;
					Update $G_i$ into $G_{i+1}$ by removing $A(u_i)$\;	
				}
				\lElse{$T_i\leftarrow T_{i-1}\setminus\{u_i\}$, $S_i\leftarrow S_{i-1}$, $G_{i+1}\leftarrow G_i$}\label{alg2:abandon}
				\KwBreak\;
			}		
			$\zeta_i\leftarrow\zeta_i/\sqrt{2}$, $\delta_i\leftarrow\delta_i/2$\;\label{alg2:varaiblehalf}
		}
	}
	\Return $S_k$\;
\end{algorithm}

The design principle of \ATPS follows that of \ADG except that $\rho_\mathrm{f}$ and $\rho_\mathrm{r}$ in \ATPS are not accessible but to be estimated. The estimation reliability is guaranteed by {\it Hoeffding Inequality}~\cite{Hoeffding_MINTSS_1963}. Specifically, \ATPS first initializes $S_0$ with $\emptyset$ and $T_0$ with target set $T$ (Line~\ref{alg2:initial}). In each iteration on residual graph $G_i$, it first checks if the current node $u_i$ is activated. If $u_i$ is activated, it skips current node and starts the next iteration immediately. Otherwise, \ATPS initializes the additive error parameter $\zeta_i=\zeta_0$ and the probability parameter $\delta_i=1/(kn)$ (Line~\ref{alg2:initial2}) for the following evaluations. Notice that a proper initialization value $\zeta_0$ could speed up the performance empirically because some nodes can be decided within relatively small number of samples. To estimate the expected marginal spreads, \ATPS adopts the commonly used {\it reverse influence sampling (RIS)~\cite{Borgs_RIS_2014}} technique, referred to as {\it RR sets}. Specifically, \ATPS generates $\theta$ RR sets into $\R_1$ and $\R_2$ to calculate $\fest$ (Line~\ref{alg2:fest}) and $\rest$ (Line~\ref{alg2:rest}) respectively, where $\theta$ is determined by $\zeta_i$ and $\delta_i$, $\fest$ and $\rest$ are the estimations of $\E[I_{G_i}(u_i\mid S_{i-1})]-c(u_i)$ and $c(u_i)-\E[I_{G_i}(u_i\mid T_{i-1} \setminus \{u_i\})]$. By Hoeffding Inequality~\cite{Hoeffding_MINTSS_1963}, we have $\rho_\mathrm{f}\in[\fest-n_i\zeta_i, \fest+n_i\zeta_i]$ and $\rho_\mathrm{r}\in[\rest-n_i\zeta_i, \rest+n_i\zeta_i]$ with high probability. Then, the stopping conditions $C_1$ and $C_2$ are checked (Line~\ref{alg2:judgement}), where $C_1$ and $C_2$ are defined as
\begin{align*}
&C_1\colon (|\fest-\rest|\geq 2n_i\zeta_i)\vee (\fest\leq -n_i\zeta_i)\vee (\rest\leq -n_i\zeta_i),\\
&C_2\colon n_i\zeta_i \leq 1.
\end{align*}
If the stopping condition $C_1$ is met, it indicates that the current estimations $\fest$ and $\rest$ are accurate enough to help make the right decision with high probability. If $C_2$ is observed instead, it indicates that (i) the expected spread of current node is too close to the judgement bar to distinguish, and (ii) the profit loss is insignificant if wrong decision is made. The purpose of this condition is to avoid the unnecessarily prohibitive sampling overhead for sufficient accurate of spread estimations. When one of the stopping conditions is met, the decision is made accordingly (Lines~\ref{alg2:select}--\ref{alg2:abandon}). Otherwise, $\zeta_i$ (resp.\ $\delta$) is divided by $\sqrt{2}$ (resp.\ $2$) (Line~\ref{alg2:varaiblehalf}) to generate more samples for more accurate estimations. This process terminates when all $k$ nodes in $T$ are checked through.

\subsubsection{\bf{Theoretical Analysis of \ATPS}}\label{sec:adaptaddtive}
In what follows, we analyze the approximation guarantee and time complexity of \ATPS.

\spara{Approximation Guarantee} First, we present the {\it Hoeffding Inequality}~\cite{Hoeffding_MINTSS_1963} based on which the estimation is reliable with high probability.
\begin{lemma}[Hoeffding Inequality~\cite{Hoeffding_MINTSS_1963}]\label{lem:hoeffding}
Let $X_i$ be an independent bounded random variable such that for each $1 \le i\le \theta$, $X_i \in [a_i,b_i]$. Let $X=\frac{1}{\theta}\sum_{i=1}^{\theta}X_i$. Given $\zeta \in (0,1)$, then
\begin{equation}\label{eqn:hoeffding}
\Pr[\lvert X-\E[X]\rvert \ge \zeta]\le 2\e^{-\frac{2\theta^2\zeta^2}{\sum_{i=1}^{\theta}(b_i-a_i)^2}}.
\end{equation}
\end{lemma}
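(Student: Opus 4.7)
The plan is to establish this classical concentration inequality via the standard Chernoff--Hoeffding exponential moment method. I would first prove the one-sided tail bound $\Pr[X - \E[X] \ge \zeta] \le \exp(-2\theta^2\zeta^2/\sum_i(b_i-a_i)^2)$ and then recover the stated two-sided bound by applying the same argument to $-X_i$ in place of $X_i$ and taking a union bound, which accounts for the factor of $2$ in the conclusion.

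The core technical input is Hoeffding's lemma: for any random variable $Y$ with $\E[Y]=0$ and $Y\in[a,b]$ almost surely, and any $s\ge 0$, $\E[\e^{sY}]\le \exp(s^2(b-a)^2/8)$. I would prove this by exploiting the convexity of $y\mapsto \e^{sy}$ on $[a,b]$, which yields the pointwise bound $\e^{sy}\le \frac{b-y}{b-a}\e^{sa} + \frac{y-a}{b-a}\e^{sb}$. Taking expectations under $\E[Y]=0$ reduces the right-hand side to a one-variable function of $u=s(b-a)$, namely $L(u)=-\gamma u + \ln(1-\gamma+\gamma\e^u)$ with $\gamma=-a/(b-a)\in[0,1]$. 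A direct computation shows $L(0)=L'(0)=0$ and $L''(u)\le 1/4$ uniformly in $u$, so $L(u)\le u^2/8$ by Taylor's theorem with remainder, which is precisely the desired bound.

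With this moment bound in hand, the remainder is routine assembly. For any $s>0$, Markov's inequality applied to $\exp(s\theta(X-\E[X]))$ gives
\begin{equation*}
\Pr[X-\E[X]\ge \zeta] \le \e^{-s\theta\zeta}\,\E\!\left[\exp\!\Big(s\sum\nolimits_{i=1}^{\theta}(X_i-\E[X_i])\Big)\right].
\end{equation*}
By independence the expectation factorizes as $\prod_i \E[\e^{s(X_i-\E[X_i])}]$, and applying Hoeffding's lemma to each centred summand $Y_i := X_i - \E[X_i]\in[a_i-\E[X_i],\,b_i-\E[X_i]]$ (an interval of width $b_i-a_i$) upper bounds the whole quantity by $\exp(-s\theta\zeta + \tfrac{s^2}{8}\sum_i(b_i-a_i)^2)$. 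Minimising this convex function of $s$ at $s^{\star}=4\theta\zeta/\sum_i(b_i-a_i)^2$ produces the clean one-sided bound $\exp(-2\theta^2\zeta^2/\sum_i(b_i-a_i)^2)$; combining with the symmetric lower-tail estimate via a union bound yields the lemma.

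The only substantive obstacle is the proof of Hoeffding's lemma itself: the convexity inequality is immediate, but the sharp constant $1/8$ requires the second-derivative calculation $L''(u)=\gamma(1-\gamma)\e^u/(1-\gamma+\gamma\e^u)^2$ together with the AM--GM observation that the denominator dominates $4\gamma(1-\gamma)\e^u$, giving $L''\le 1/4$. Everything afterwards---the Chernoff step, the independence factorisation, the one-variable optimisation in $s$, and the union bound---is mechanical.
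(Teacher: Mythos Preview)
Your proof is correct and follows the standard Chernoff--Hoeffding argument. Note, however, that the paper does not actually prove this lemma: it is stated as a classical result with a citation to Hoeffding (1963) and used as a black box, so there is no proof in the paper to compare against.
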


Similarly, to obtain the approximation ratio of \ATPS, we need to establish the relation between $\Lambda(\pi^\circ_{[i-1]}) - \Lambda(\pi^\circ_{[i]})$ and $\Lambda(\pi^\mathrm{f}_{[i]})-\Lambda(\pi^\mathrm{f}_{[i-1]})+\Lambda(\pi^\mathrm{r}_{[i]}) - \Lambda(\pi^\mathrm{r}_{[i-1]})$ as the one in Lemma~\ref{lem:policyprofit}. We note that \ADG is a {\it deterministic} algorithm while \ATPS is a {\it randomized} algorithm. To explain, \ADG has access to the expected spread of any node set under the oracle model. Contrarily, under the noise model, \ATPS adopts the randomized {\it reverse influence sampling (RIS)~\cite{Borgs_RIS_2014}} technique to estimate the expected spread, which makes $\pi^\mathrm{f}$, $\pi^\mathrm{r}$, and $\pi^\circ$ all random polices. Consequently, we need to tackle the internal randomness of \ATPS. 
\begin{lemma}\label{lem:rho-noise}
For the $i$-th iteration of \ATPS, we have 
\begin{equation}\label{eqn:corestep-noise}
\begin{split}
&\E_{\mathcal{A}}\big[\rho_{G_i}(S^\circ_{i-1}) - \rho_{G_i}(S^\circ_{i})\big]-(2+2/k)\\
&\leq \E_{\mathcal{A}}\big[\rho_{G_i}(S_{i})-\rho_{G_i}(S_{i-1}) + \rho_{G_i}(T_{i}) - \rho_{G_i}(T_{i-1})\big]. 
\end{split}
\end{equation} 
\end{lemma}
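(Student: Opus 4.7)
The plan is to adapt the deterministic argument behind Lemma~\ref{lem:case1+2} to the noisy setting by quantifying two sources of deviation introduced in \ATPS: the additive sampling error $n_i\zeta_i$ in the estimates $\fest,\rest$, and the probability $\delta_i$ that these estimators fall outside their confidence intervals. I would work conditional on the realization $\phi$ and the history up to the start of iteration $i$, and take the outer expectation over the internal randomness $\mathcal{A}$ at the very end.

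First I would define the good event $\mathcal{E}_i = \bigl\{|\fest - \rho_\mathrm{f}|\le n_i\zeta_i,\; |\rest - \rho_\mathrm{r}| \le n_i\zeta_i\bigr\}$. Since each RR-set indicator lies in $[0,1]$ and $\fest,\rest$ are rescaled by $n_i/\theta$, applying Hoeffding's inequality (Lemma~\ref{lem:hoeffding}) to each of $\R_1,\R_2$ with $\theta = \frac{1}{2\zeta_i^2}\ln\frac{8}{\delta_i}$ together with a union bound gives $\Pr[\overline{\mathcal{E}_i}] \le \delta_i$. Because the while-loop shrinks $\zeta_i,\delta_i$ adaptively, I would apply this bound at the terminal pair for which either $C_1$ or $C_2$ fires; in all cases $\delta_i \le 1/(kn)$.

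Next I would split iteration $i$ into three sub-cases. \textbf{(a)} $C_1$ triggers and $\mathcal{E}_i$ holds. Here $|\fest - \rest| \ge 2n_i\zeta_i$ (or the analogous negativity clause) together with accurate estimates forces the sign of the algorithm's decision $\fest \ge \rest$ to coincide with that of $\rho_\mathrm{f}-\rho_\mathrm{r}$, so the very same case analysis used for Lemma~\ref{lem:case1+2} carries through verbatim on $G_i$, contributing no slack. \textbf{(b)} $C_2$ fires but not $C_1$; then $n_i\zeta_i \le 1$, and $|\rho_\mathrm{f}-\rho_\mathrm{r}| \le |\fest-\rest| + 2n_i\zeta_i < 4n_i\zeta_i \le 4$, so even an ``incorrect'' decision induces slack of at most $2n_i\zeta_i \le 2$ in the chain of inequalities from Lemma~\ref{lem:case1+2}, yielding the additive constant $2$. \textbf{(c)} $\overline{\mathcal{E}_i}$ occurs. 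The one-step quantities on each side of~\eqref{eqn:corestep-noise} are each bounded by the maximum single-node marginal profit $n$, so the extra contribution is at most $2n\cdot \Pr[\overline{\mathcal{E}_i}] \le 2n/(kn) = 2/k$, giving the remaining additive constant. Combining the three cases and taking expectation delivers~\eqref{eqn:corestep-noise}.

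The main obstacle is Case (b): I must verify that when \ATPS picks the opposite of what the oracle would pick, the right-hand side of~\eqref{eqn:corestep-noise} changes by at most $2n_i\zeta_i$ rather than by something proportional to $n_i$ on its own. The leverage is that under ``$C_2$ without $C_1$,'' the true marginal profits $\rho_\mathrm{f},\rho_\mathrm{r}$ are themselves within $O(n_i\zeta_i)$ of each other, so the ``swapped'' variant of the Lemma~\ref{lem:case1+2} inequality can only fail by this additive slack; making this precise requires re-examining the submodularity step in the deterministic proof and tracking that the slack appears additively rather than multiplicatively, which is where the bulk of the bookkeeping lives.
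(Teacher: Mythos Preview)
Your proposal follows essentially the same strategy as the paper's proof: split into a ``good'' event where the Hoeffding confidence intervals hold, analyze the $C_1$ and $C_2$ termination cases separately under that event (yielding slack $0$ and $2$ respectively, via the same submodularity steps as in Lemma~\ref{lem:case1+2}), and charge the bad event its trivial $2n$ bound weighted by its probability to obtain the $2/k$ term.

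One point needs tightening. You write that you ``apply this bound at the terminal pair for which either $C_1$ or $C_2$ fires; in all cases $\delta_i\le 1/(kn)$.'' But the terminal round $J$ of the while-loop is itself a random stopping time that depends on the very estimates $\fest,\rest$ whose accuracy you want to certify, so you cannot simply plug in the $\delta_i$ of that single round. The paper handles this by defining the good event as the intersection $\bigcap_{j\ge 1}(\mathcal{E}_{i,j}^1\cap\mathcal{E}_{i,j}^2)$ over \emph{all} rounds $j$ of the while-loop and then summing the geometric series $\sum_{j\ge 1}\tfrac{1}{2^{j}kn}=\tfrac{1}{kn}$; this is precisely why $\delta_i$ is halved each round and why the Hoeffding constant is $\ln(8/\delta_i)$ rather than $\ln(4/\delta_i)$. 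With that union bound in place (and with case~(b) explicitly conditioned on the good event, so that it is disjoint from case~(c)), your argument matches the paper's.
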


Thus, for the expected profit of policy $\pi^\mathrm{f}$, $\pi^\mathrm{r}$, and $\pi^\circ$ in the $i$-th iteration, we have following lemma.

\begin{lemma}\label{lem:policyprofitadd}
For the $i$-th iteration of \ATPS, we have
\begin{equation}\label{eqn:policyprofit3}
\begin{split}
&\E_{\A}\big[\Lambda(\pi^\circ_{[i-1]})- \Lambda(\pi^\circ_{[i]})\big]-(2+2/k) \\
&\le \E_{\A}\big[\Lambda(\pi^\mathrm{f}_{[i]})- \Lambda(\pi^\mathrm{f}_{[i-1]})+\Lambda(\pi^\mathrm{r}_{[i]})-\Lambda(\pi^\mathrm{r}_{[i-1]})\big],
\end{split}
\end{equation}
where the expectation $\E_{\A}[\cdot]$ is over the internal randomness of \ATPS and $k=|T|$.
\end{lemma}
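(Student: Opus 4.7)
The plan is to lift the per-iteration, residual-graph-level inequality of Lemma~\ref{lem:rho-noise} to the policy-level statement in \eqref{eqn:policyprofit3} by taking an additional expectation over the realization $\Phi\sim\Omega$. Lemma~\ref{lem:rho-noise} already handles the internal randomness of \ATPS and bounds the residual-graph profits at iteration $i$; what remains is to bridge these residual-graph profits with the expected profits $\Lambda(\pi^\mathrm{f}_{[i]})$, $\Lambda(\pi^\mathrm{r}_{[i]})$, and $\Lambda(\pi^\circ_{[i]})$ of the (randomized) truncated policies.

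The central step is a telescoping identity: for each $\pi\in\{\pi^\mathrm{f},\pi^\mathrm{r},\pi^\circ\}$,
\begin{equation*}
	\E_{\A}\bigl[\Lambda(\pi_{[i]})-\Lambda(\pi_{[i-1]})\bigr]
	= \E_{\Phi,\A}\bigl[\rho_{G_i}(S(\pi_{[i]}))-\rho_{G_i}(S(\pi_{[i-1]}))\bigr].
\end{equation*}
To establish this I would condition on the history, i.e., the seed choices and observed activations prior to iteration $i$. Conditional on this history, the residual graph $G_i$ is determined, and the remaining propagation is driven by fresh randomness on the edges of $G_i$ that is independent of the past. Hence the conditional expected marginal profit of whatever node is added (or not added) in iteration $i$ equals the residual-graph profit difference $\rho_{G_i}(S(\pi_{[i]}))-\rho_{G_i}(S(\pi_{[i-1]}))$, in which $\rho_{G_i}$ already absorbs the expectation over the remaining propagation. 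Applying the tower property and summing the per-iteration contributions then yields the identity.

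With the identity in hand, I would take $\E_\Phi[\,\cdot\,]$ on both sides of Lemma~\ref{lem:rho-noise}, using Fubini to combine $\E_\Phi$ with $\E_\A$ into the single joint expectation $\E_{\Phi,\A}$. The additive slack $(2+2/k)$ is deterministic and carries through unchanged. Substituting the identity for each of $\pi^\mathrm{f}$, $\pi^\mathrm{r}$, and $\pi^\circ$ converts every $\E_{\Phi,\A}[\rho_{G_i}(\cdot)-\rho_{G_i}(\cdot)]$ term into the corresponding $\E_{\A}[\Lambda(\pi_{[i]})-\Lambda(\pi_{[i-1]})]$ term, yielding exactly \eqref{eqn:policyprofit3}.

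The main obstacle is justifying the telescoping identity for the conceptual policy $\pi^\circ=(\pi^{\opt}\oplus\pi^\mathrm{f})\otimes\pi^\mathrm{r}$. Unlike $\pi^\mathrm{f}$, which \ATPS executes directly on the observed realization, $\pi^\circ$ is an auxiliary policy whose iteration-$i$ seed set $S^\circ_i=(S^\circ\cup S_i)\cap T_i$ is produced by a composition of the optimal policy with both greedy trajectories. One must verify that at iteration $i$ the set $S^\circ_i$ differs from $S^\circ_{i-1}$ by at most one node, and that its marginal contribution on $G_i$ is exactly the quantity bounded in Lemma~\ref{lem:rho-noise}; Definitions~\ref{def:potrun}--\ref{def:point} are designed precisely for this bookkeeping, but the argument requires care with how the $\cup S_i$ and $\cap T_i$ updates interact with a single-node change in iteration $i$. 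Once this check is in place, the remainder of the proof is a routine application of linearity of expectation.
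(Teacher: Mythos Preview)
Your proposal is correct and follows essentially the same route as the paper: convert each $\Lambda(\pi_{[i]})-\Lambda(\pi_{[i-1]})$ into an expectation over residual graphs of $\rho_{G_i}(\cdot)-\rho_{G_i}(\cdot)$, then invoke Lemma~\ref{lem:rho-noise} and average over the algorithm's randomness. The paper's presentation differs only in that it first \emph{fixes a random seed} for \ATPS, which makes the policies $\pi^\mathrm{f},\pi^\mathrm{r},\pi^\circ$ deterministic and allows the partition-by-$G_i$ argument of Lemma~\ref{lem:policyprofit} to be reused verbatim; the outer $\E_\A$ is then applied at the end. This device sidesteps the bookkeeping you flag as the ``main obstacle'' for $\pi^\circ$, since once the seed is fixed the single-node update of $S^\circ_i$ is exactly the deterministic case already handled in Lemma~\ref{lem:case1+2}.
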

Note that there are additional terms $2/k$ and $2$ in~\eqref{eqn:policyprofit3}. Specifically, $2/k$ is the compensation factor on the profit loss when \ATPS makes the wrong choice on $u_i$ due to the failed spread estimation. $2$ is the upper bound of the profit loss incurred when \ATPS terminates due to the stopping condition $C_2$. 

Based on Lemma~\ref{lem:policyprofitadd}, we could derive the approximation ratio of \ATPS as follows.
\begin{theorem}\label{thm:adaptwitherror}
\ATPS could achieve the expected profit at least $\frac{\Lambda(\pi^{\opt})- (2k+2)}{3}$, where $\Lambda(\pi^{\opt})$ is the expected profit of optimal policy $\pi^{\opt}$ and $k$ is the number of nodes in $T$.
\end{theorem}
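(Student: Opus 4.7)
The plan is to mirror the telescoping argument used for Theorem~\ref{thm:adaptgreedy}, but carry the extra additive penalty that appears in Lemma~\ref{lem:policyprofitadd} through the summation. Concretely, I would sum the inequality \eqref{eqn:policyprofit3} over the $k$ iterations $i=1,2,\dots,k$. On the right-hand side the telescoping yields $\E_{\A}[\Lambda(\pi^{\mathrm{f}}_{[k]})-\Lambda(\pi^{\mathrm{f}}_{[0]})+\Lambda(\pi^{\mathrm{r}}_{[k]})-\Lambda(\pi^{\mathrm{r}}_{[0]})]$, while on the left-hand side the telescoping gives $\E_{\A}[\Lambda(\pi^{\circ}_{[0]})-\Lambda(\pi^{\circ}_{[k]})]$ together with the accumulated slack $-k\cdot(2+2/k) = -(2k+2)$.

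Next I would plug in the boundary identities already collected in the \ADG analysis, which continue to hold verbatim for \ATPS because they depend only on the structural definitions of $\pi^{\mathrm{f}}$, $\pi^{\mathrm{r}}$, and $\pi^{\circ}$: namely $\Lambda(\pi^{\mathrm{f}}_{[0]})=0$, $\Lambda(\pi^{\mathrm{r}}_{[0]})=\rho(T)\ge 0$, $\Lambda(\pi^{\circ}_{[0]})=\Lambda(\pi^{\opt})$, and $\Lambda(\pi^{\mathrm{f}}_{[k]})=\Lambda(\pi^{\mathrm{r}}_{[k]})=\Lambda(\pi^{\circ}_{[k]})$. The first and third equalities are deterministic (no randomness at iteration $0$ and $\pi^{\opt}$ is a fixed reference policy), while the last equality holds pointwise over every internal randomness of \ATPS, so it survives the outer expectation $\E_{\A}[\cdot]$. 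After substitution the telescoped inequality reduces to $\Lambda(\pi^{\opt}) - \E_{\A}[\Lambda(\pi^{\mathrm{f}}_{[k]})] - (2k+2) \le 2\,\E_{\A}[\Lambda(\pi^{\mathrm{f}}_{[k]})] - \rho(T)$, and using $\rho(T)\ge 0$ gives $\E_{\A}[\Lambda(\pi^{\mathrm{f}}_{[k]})] \ge (\Lambda(\pi^{\opt})-(2k+2))/3$, which is exactly the claimed bound since $S_k$ is the output of $\pi^{\mathrm{f}}$.

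I do not expect any serious obstacle at this stage, because Lemma~\ref{lem:policyprofitadd} has already absorbed all of the hard work: it encapsulates both the profit loss of $2/k$ incurred in iterations where the noisy estimates of $\fest$ and $\rest$ lead to the wrong comparison, and the absolute loss bounded by $2$ incurred when the algorithm exits via the coarse stopping condition $C_2$. The main subtlety to double-check during write-up is that the telescoping is legitimate under $\E_{\A}[\cdot]$: this requires observing that the random policies $\pi^{\mathrm{f}}_{[i]}, \pi^{\mathrm{r}}_{[i]}, \pi^{\circ}_{[i]}$ are measurable with respect to the same randomness source used in iterations $1,\dots,i$, so linearity of expectation applies and the partial-sum cancellations go through unchanged. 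Once that is verified, the derivation is a one-line algebraic consequence of Lemma~\ref{lem:policyprofitadd} and the boundary conditions, in complete parallel to the proof of Theorem~\ref{thm:adaptgreedy}.
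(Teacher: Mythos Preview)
Your proposal is correct and follows essentially the same route as the paper: sum Lemma~\ref{lem:policyprofitadd} over $i=1,\dots,k$, telescope both sides, and plug in the boundary identities $\Lambda(\pi^{\mathrm{f}}_{[0]})=0$, $\Lambda(\pi^{\mathrm{r}}_{[0]})=\rho(T)\ge 0$, $\Lambda(\pi^{\circ}_{[0]})=\Lambda(\pi^{\opt})$, and $\Lambda(\pi^{\mathrm{f}}_{[k]})=\Lambda(\pi^{\mathrm{r}}_{[k]})=\Lambda(\pi^{\circ}_{[k]})$. The paper's proof is terser (it simply says ``accumulating both sides \dots\ and rearranging''), but your additional remarks about measurability and linearity of expectation are sound and do not change the argument.
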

\begin{proof}[Proof of Theorem~\ref{thm:adaptwitherror}]
According to Lemma~\ref{lem:policyprofitadd}, accumulating both sides of \eqref{eqn:policyprofit3} from $i=1$ to $k$ gives
\begin{align*}
&\E_{\A}\big[\Lambda(\pi^\circ_{[0]})-\Lambda(\pi^\circ_{[k]})\big]-\sum\nolimits_{i=1}^{k}(2+2/k)\\
&\leq\E_{\A}\big[\Lambda(\pi^\mathrm{f}_{[k]})-\Lambda(\pi^\mathrm{f}_{[0]})+\Lambda(\pi^\mathrm{r}_{[k]})-\Lambda(\pi^\mathrm{r}_{[0]})\big].
\end{align*}
Rearranging it completes the proof.
\end{proof}

\spara{Discussion} On particular social graphs, our algorithm could achieve an expected ratio of $(1-\varepsilon)/3$. The main idea is to dynamically set the threshold of $n_i\zeta_i$ for stopping condition $C_2$ instead of fixing to $1$. Specifically, in each iteration, we have (at most) a profit loss of $2/k$ due to a failure estimation of marginal expected spread and an extra profit loss of $2$ if stopping condition $C_2$, \ie~$n_i\zeta_i \leq1$, occurs (see the proof of Lemma~\ref{lem:rho-noise}). Thus, we have (at most) a total profit loss of $2+2k$ for all $k$ iterations. However, we note that it is unlikely that all $k$ iterations meet $C_2$. Thus, we could try to bound the actual profit loss within $\varepsilon\Lambda(\pi^{\opt})$ by adjusting the error threshold dynamically as follows. For the $i$-th iteration, let $\eta_i$ be the settled threshold of $n_i\zeta_i$, \ie~$C_2\colon n_i\zeta_i\leq\eta_i$, and $\rho_i$ be the accumulated profit. Let $\tilde{\eta}_i$ be the indicator whether $C_2$ occurs such that $\tilde{\eta}_i=\eta_i$ if $C_2$ occurs and $\tilde{\eta}_i=0$ if $C_1$ occurs. In the $(i+1)$-th iteration, we set $\eta_{i+1}=(\varepsilon\rho_i-2\sum_{j=1}^{i}\tilde{\eta}_j-2)/2$ as long as $\varepsilon\rho_i\ge 2\sum_{j=1}^{i}\tilde{\eta}_j+2$,  which ensures that $2\sum_{j=1}^{i+1}\tilde{\eta}_j +2 \le \varepsilon\rho_i$. With this dynamic strategy, \ATPS could achieve an approximation ratio of $(1-\varepsilon)/3$.

\spara{Time Complexity} As indicated in Algorithm~\ref{alg:AdaptAddiError}, there are $O(\frac{1}{\zeta^2_i}\ln\frac{1}{\delta_i})$ random RR sets generated in the $i$-th iteration. There are at most $2\left \lceil \log(n_i\zeta_0) \right \rceil$ rounds for the $i$-th iteration and $\zeta_0 \in [1/n, 1]$, thus $\delta_i$ is bounded by $O(\frac{1}{kn^3})$. Then the total number of RR sets is at most $O(\frac{\ln n}{\zeta^2_i})$. According to Lemma 4 in~\cite{Tang_TIM_2014}, the expected time for generating a random RR set on $G_i$, denoted as $\mathrm{EPT}$, is $\mathrm{EPT}\leq\frac{m_i}{n_i}\E[I_{G_i}(\{v_i^\circ\})]$ where $v_i^\circ$ is the node with the largest expected spread on $G_i$. By Wald's equation~\cite{Wald_equation_1947}, the expected time complexity of \ATPS is
\begin{equation*}
O\Big(\sum_{i=1}^{k}\big(\frac{\ln n}{\zeta^2_i}\cdot \frac{m_i\E[I(\{v_i^\circ\})]}{n_i}\big)\Big)=O\Big(kmn\E[I(\{v_1^\circ\})]\ln n\Big),
\end{equation*}
since $n_i\zeta_i\geq \sqrt{2}/2$ and $\E[I(\{v_i^\circ\})]\leq \E[I(\{v_1^\circ\})]$ for any $i$.

\begin{theorem}\label{thm:expectetimeadd}
The expected time complexity of\ATPS is $O(kmn\E[I(\{v^\circ\})]\ln n)$ where $v^\circ$ is the node with the largest expected spread on $G$.
\end{theorem}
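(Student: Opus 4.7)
The plan is to bound (i) the number of RR sets produced in a single iteration of Algorithm~\ref{alg:AdaptAddiError}, (ii) the expected time needed to build one RR set on the residual graph $G_i$, and then (iii) to combine these via Wald's equation and sum over the $k$ outer iterations, exploiting the stopping conditions $C_1,C_2$ to control the geometric halving of $\zeta_i$.

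First I would focus on iteration $i$. The inner \textbf{while} loop repeatedly halves $\delta_i$ and divides $\zeta_i$ by $\sqrt 2$, so after $r$ refinement rounds the current precision is $\zeta_i^{(r)}=\zeta_0/2^{r/2}$ and the current confidence parameter is $\delta_i^{(r)}=1/(kn\cdot 2^r)$. Condition $C_2\colon n_i\zeta_i\le 1$ forces termination by round $r=2\lceil \log(n_i\zeta_0)\rceil$, which in turn gives $\ln(1/\delta_i^{(r)})=O(\ln n)$ because $\zeta_0\in[1/n,1]$. Since the per-round sample count $\theta=\frac{1}{2(\zeta_i^{(r)})^2}\ln(8/\delta_i^{(r)})$ is geometric in $r$, the total number of RR sets generated during iteration $i$ is dominated (up to a constant) by the count in the final round, i.e.\ $O((\zeta_i^{\star})^{-2}\ln n)$, where $\zeta_i^{\star}$ denotes the value of $\zeta_i$ at the moment of exit. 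The crucial observation is that the loop cannot exit before the round immediately after $n_i\zeta_i$ first falls to $1$, so $\zeta_i^{\star}\ge \sqrt{2}/(2n_i)$, i.e.\ $(\zeta_i^{\star})^{-2}\le 2n_i^2$. Hence iteration $i$ produces $O(n_i^2\ln n)$ RR sets.

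Next I would plug in the cost per RR set. Lemma~4 of~\cite{Tang_TIM_2014} gives that the expected time $\mathrm{EPT}_i$ to sample a single RR set on the (random) residual graph $G_i$ satisfies $\mathrm{EPT}_i\le \frac{m_i}{n_i}\E[I_{G_i}(\{v_i^\circ\})]$, where $v_i^\circ$ is the most influential node in $G_i$. Because the number of RR sets produced in iteration $i$ is a stopping time with respect to the filtration generated by the samples (the stopping rule $C_1\vee C_2$ depends only on samples drawn so far), Wald's equation~\cite{Wald_equation_1947} lets me multiply the expected sample count by $\mathrm{EPT}_i$, yielding expected iteration-$i$ running time
\begin{equation*}
O\!\left(\frac{\ln n}{(\zeta_i^{\star})^2}\cdot\frac{m_i}{n_i}\,\E[I_{G_i}(\{v_i^\circ\})]\right)
=O\!\left(n_i\,m_i\,\E[I_{G_i}(\{v_i^\circ\})]\,\ln n\right).
\end{equation*}

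Finally I would sum over $i=1,\dots,k$. Removing nodes from $G$ only shrinks the largest expected spread, so $\E[I_{G_i}(\{v_i^\circ\})]\le \E[I_G(\{v^\circ\})]$ where $v^\circ=v_1^\circ$; also $n_i\le n$ and $m_i\le m$. Therefore the total expected time is $O(k\,n\,m\,\E[I(\{v^\circ\})]\,\ln n)$, which matches the claim. The main obstacle I anticipate is the careful accounting in step (i): one must argue both that the geometric growth of $\theta$ across rounds makes the last round dominant, and that the stopping condition $C_2$ is actually what caps $1/(\zeta_i^{\star})^2$ at $O(n_i^2)$; this relies on the fact that the refinement multiplies $\zeta_i$ by $1/\sqrt 2$ rather than a sharper ratio, so the exit value is within a constant factor of $1/n_i$ rather than arbitrarily smaller. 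Once that bookkeeping is pinned down, the remaining steps are routine applications of Lemma~4 of~\cite{Tang_TIM_2014} and Wald's identity.
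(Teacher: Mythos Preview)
Your proof is correct and follows essentially the same argument as the paper: bound the number of inner rounds via $C_2$, use $n_i\zeta_i^{\star}\ge\sqrt{2}/2$ to cap the RR-set count at $O(n_i^2\ln n)$, invoke Lemma~4 of~\cite{Tang_TIM_2014} for the per-sample cost, apply Wald's equation, and sum over the $k$ iterations using $\E[I_{G_i}(\{v_i^\circ\})]\le\E[I(\{v^\circ\})]$. One wording slip to fix: you write that ``the loop cannot exit before'' the round where $n_i\zeta_i$ first falls to~$1$, but you mean the opposite---the loop \emph{must} exit by that round (via $C_2$), and this upper bound on the number of refinements is what yields the lower bound $\zeta_i^{\star}\ge\sqrt{2}/(2n_i)$ (an earlier exit via $C_1$ only makes $\zeta_i^{\star}$ larger).
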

\section{Optimization with Hybrid Error}\label{sec:adapterror}

In this section, we aim to optimize \ATPS and propose \ATPA in terms of efficiency. We first analyze the rationale of optimization and then conduct theoretical analysis on \ATPA towards its approximation guarantee and time complexity.

\subsection{Rationale of Optimization}\label{sec:reasonforoptimization}

\ATPS in Section~\ref{sec:sec3} considers the additive error on spread estimation, which may suffer from efficiency issues. Recall that for nodes with expected spreads close to the judgement bar, \ATPS uses the stopping condition $n_i\zeta_i \leq 1$ to avoid unnecessary computation overhead. However, when $\zeta_i=O(1/n_i)$, the number of RR sets required is $O(n_i^2\ln n)$, which would incur prohibitive computation overhead. To tackle this issue, we propose to estimate the expected spread via a {\it hybrid error}, \ie the combination of the additive error and relative error. The rationale of hybrid error is that for those nodes with large expected marginal spread, their estimation errors are easily to be bounded within relative error, while for those nodes with small expected marginal spread, their estimation errors are easily to be bounded within additive error. Thus, the expected marginal spread for every node can be efficiently estimated utilizing hybrid error. This estimation reliability using hybrid error is guaranteed by {\it Relative+Additive Concentration Bound} as follows.

\begin{algorithm}[!t]
	\setlength{\hsize}{0.956\linewidth}
	\KwIn{Social graph $G$, a size-$k$ set $T$, the initial error $\varepsilon_0$ and $\zeta_0$, the threshold $\varepsilon$}
	\KwOut{Selected seed node set $S_k$}
	\caption{{\ATPA} \label{alg:AdaptRelAddErr}}	
	Initialize $S_0\leftarrow\emptyset$, $T_0\leftarrow T$\;
	\For{$i\leftarrow 1$ \KwTo $k$}
	{
		\If{$u_i$ is activated}
		{$T_i\leftarrow T_{i-1}\setminus\{u_i\}$, $S_i\leftarrow S_{i-1}$, $G_{i+1}\leftarrow G_i$\; 
		\KwContinue\;}
		$\varepsilon_i \leftarrow \varepsilon_0$, $\zeta_i \leftarrow \zeta_0$, $\delta_i \leftarrow 1/(kn)$\tcp*{$\varepsilon_0\geq \varepsilon,\zeta_0\geq 1/n$}
		\While{True}
		{
			$\theta\leftarrow\frac{(1+\varepsilon_i/3)^2}{2\varepsilon_i\zeta_i}\ln(\frac{4}{\delta_i})$\;
			Generate $\theta$ RR sets as $\R_1$ and $\R_2$, respectively\; 
			$f_\mathrm{est}\leftarrow \Cov_{\R_1}(u_i\mid S_{i-1})\cdot \frac{n_i}{\theta}$\;
			$r_\mathrm{est}\leftarrow\Cov_{\R_2}(u_i\mid T_{i-1} \setminus \{u_i\}) \cdot \frac{n_i}{\theta}$\;
			\If{$C_1^\prime$ \KwOr $C_2^\prime$\label{alg3:judement}}
			{
				\If{$f_\mathrm{est}+r_\mathrm{est} \ge 2c(u_i)$\label{alg3:estimation}}
				{
					$S_i\leftarrow S_{i-1}\cup \{u_i\}$, $T_i \leftarrow T_{i-1}$\;
					Observe the node set $A(u_i)$ activated by $u_i$\;
					Update $G_i$ into $G_{i+1}$ by removing $A(u_i)$\;						
				}
				\mbox{\lElse{$T_i\leftarrow T_{i-1}\setminus\{u_i\}$, $S_i\leftarrow S_{i-1}$, $G_{i+1}\leftarrow G_i$}}\\
				\KwBreak;
			}
			\lIf{$\varepsilon_i <= \varepsilon_t$ \KwAnd $n_i\zeta_i>1$}{$\zeta_i\leftarrow\zeta_i/2$\label{HATP:additive}}
			\lElseIf{$\varepsilon_i>\varepsilon$ \KwAnd $n_i\zeta_i<=1$}{$\varepsilon_i\leftarrow\varepsilon_i/2$\label{HATP:relative}}
			\lElseIf{$f_\mathrm{est}\ge 10 n_i\zeta_i$}{$\varepsilon_i\leftarrow\varepsilon_i/2$}
			\lElseIf{$f_\mathrm{est} <=n_i\zeta_i$}{$\zeta_i\leftarrow\zeta_i/2$}			    			    
			\lElse{$\varepsilon_i\leftarrow\varepsilon_i/\sqrt{2}$, $\zeta_i\leftarrow\zeta_i/\sqrt{2}$\label{HATP:both}}
			$\delta_i\leftarrow\delta_i/2$\;				
		}
	}
	\Return $S_k$\;
\end{algorithm}

\begin{lemma}[Relative+Additive Concentration Bound]\label{lem:RelatAddi}
Let $X_1-\E[X_1], \cdots, X_\theta-\E[X_\theta]$ be a martingale difference sequence such that $X_i \in [0,1]$ for each $i$. Let $X=\frac{1}{\theta}\sum_{i=1}^{\theta}X_i$ and $\mu=\E[X]$. Given $\varepsilon, \zeta \in (0,1)$, then
\begin{align}
&\Pr[X\geq(1+\varepsilon)\mu+\zeta]\le \e^{-\frac{2\theta\varepsilon\zeta}{(1+\varepsilon/3)^2}},\label{eqn:upper}\\
&\Pr[X\leq(1-\varepsilon)\mu-\zeta]\le \e^{-2\theta\varepsilon\zeta}.\label{eqn:lower}
\end{align}
\end{lemma}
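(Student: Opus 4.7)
The plan is to reduce both tail bounds to a standard Bernstein-type martingale concentration inequality, and then extract the hybrid (relative plus additive) form by a careful algebraic manipulation. Since $X_i\in[0,1]$, the conditional second moment is dominated by the conditional mean, so $\mathrm{Var}(X_i\mid\mathcal{F}_{i-1})\le \E[X_i\mid\mathcal{F}_{i-1}]$. Applying the exponential-moment method to the martingale differences, together with the standard estimate $\e^{s}-1-s\le s^2/(2(1-s/3))$ valid for $s\in[0,3)$, yields (essentially the martingale Bernstein inequality used in the \imm analysis) the one-sided bounds
\begin{align*}
\Pr[X-\mu\ge \lambda] &\le \exp\!\Bigl(-\tfrac{\theta\lambda^{2}}{2\mu+2\lambda/3}\Bigr),\\
\Pr[\mu-X\ge \lambda] &\le \exp\!\Bigl(-\tfrac{\theta\lambda^{2}}{2\mu}\Bigr),
\end{align*}
valid for every $\lambda\ge 0$. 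Once these are in hand, the remaining work for each tail is purely algebraic.

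For the lower tail I would substitute $\lambda=\varepsilon\mu+\zeta$ into the second inequality. By the AM--GM inequality, $(\varepsilon\mu+\zeta)^{2}\ge 4\varepsilon\mu\zeta$, so the exponent is at most $-\theta(\varepsilon\mu+\zeta)^{2}/(2\mu)\le -2\theta\varepsilon\zeta$, which is exactly \eqref{eqn:lower}.

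For the upper tail I would set $\lambda=\varepsilon\mu+\zeta$ in the first inequality and aim to show
\[
\frac{(\varepsilon\mu+\zeta)^{2}}{2\mu(1+\varepsilon/3)+2\zeta/3}\;\ge\;\frac{2\varepsilon\zeta}{(1+\varepsilon/3)^{2}},
\]
which, after clearing denominators, becomes
\[
(\varepsilon\mu+\zeta)^{2}(1+\varepsilon/3)^{2}\;\ge\;4\varepsilon\zeta\bigl[\mu(1+\varepsilon/3)+\zeta/3\bigr].
\]
The main obstacle is that a single AM--GM step is no longer enough: the inequality must hold for \emph{every} admissible $\mu$, and the ``cross'' contributions do not separate cleanly. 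My plan is to view the difference of the two sides as a quadratic $A\mu^{2}+B\mu+C$ in $\mu$, and to exploit the convenient identity $(1+\varepsilon/3)^{2}-4\varepsilon/3=(1-\varepsilon/3)^{2}$ when collecting the constant term. A short computation then gives $A=\varepsilon^{2}(1+\varepsilon/3)^{2}$, $B=-2\varepsilon\zeta(1-\varepsilon^{2}/9)$, and $C=\zeta^{2}(1-\varepsilon/3)^{2}$, so $B^{2}-4AC=0$ and the quadratic is a perfect square
\[
A\mu^{2}+B\mu+C \;=\; \bigl(\varepsilon(1+\varepsilon/3)\mu-\zeta(1-\varepsilon/3)\bigr)^{2}\;\ge\;0.
\]
Since $A>0$, the bound holds for all $\mu\in(0,1]$, establishing \eqref{eqn:upper} and completing the proof.
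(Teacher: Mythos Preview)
Your proof is correct and follows essentially the same approach as the paper: both start from the martingale Bernstein bounds of \cite{Tang_IMM_2015}, handle the lower tail via AM--GM on $(\varepsilon\mu+\zeta)^2\ge 4\varepsilon\mu\zeta$, and for the upper tail show that $\tfrac{(\varepsilon\mu+\zeta)^2}{2\mu+2(\varepsilon\mu+\zeta)/3}\ge\tfrac{2\varepsilon\zeta}{(1+\varepsilon/3)^2}$ uniformly in $\mu$. The only difference is cosmetic: the paper substitutes $\mu=(\lambda-\zeta)/\varepsilon$ and minimizes the resulting $h(\lambda)$ by calculus, whereas you keep $\mu$ and recognize the difference as the perfect square $\bigl(\varepsilon(1+\varepsilon/3)\mu-\zeta(1-\varepsilon/3)\bigr)^2$; both computations pin down the same extremal point $\mu=\zeta(1-\varepsilon/3)/\bigl(\varepsilon(1+\varepsilon/3)\bigr)$.
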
	
\begin{proof}
	According to the martingale concentration \cite{Tang_IMM_2015}, we have $\Pr\big[{X}\leq(1-\varepsilon)\mu-\zeta\big]\leq \e^{-\frac{(\varepsilon\mu+\zeta)^2{\theta}}{2\mu}}\leq \e^{-\frac{(2\sqrt{\varepsilon\mu\zeta})^2{\theta}}{2\mu}}=\e^{-2\varepsilon\zeta {\theta}}$. Similarly, we have $\Pr[{X}\geq(1+\varepsilon)\mu+\zeta]\leq \e^{-h(\lambda)}$, where $h(\lambda)=\frac{(\lambda^2{\theta})}{2(\lambda-\zeta)/\varepsilon+2\lambda/3}$ and $\lambda=\varepsilon \mu+\zeta$. Let
	\begin{equation*}
	\frac{\mathrm{d} h(\lambda)}{\mathrm{d} \lambda}
	=\frac{\big(2\lambda((\lambda-\zeta)/\varepsilon+\lambda/3)-(1/\varepsilon+1/3)\lambda^2\big)\theta}{2\big((\lambda-\zeta)/\varepsilon+\lambda/3\big)^2}\triangleq 0.
	\end{equation*}
	Thus, $h(\lambda)$ achieves its minimum at $\lambda=\frac{2\zeta}{\varepsilon(1/\varepsilon+1/3)}$ such that $h(\lambda)=\frac{2\varepsilon\beta \theta}{(1+\varepsilon/3)^2}$. This completes the proof.
\end{proof}

Based on the hybrid error, we propose \ATPA\footnote{Algorithms with \underline{h}ybrid error for \underline{a}daptive \underline{t}argeted \underline{p}rofit maximization} algorithm, as shown in Algorithm~\ref{alg:AdaptRelAddErr}.  

\subsection{Description of \ATPA}\label{sec:atpadescrip}

The design principle of \ATPA is similar to that of \ATPS except that \ATPA adopts {\it hybrid error} instead of {\it additive error}. The major differences between \ATPA and \ATPS lie in two aspects. First, the stopping conditions $C_1$ and $C_2$ (Line~\ref{alg3:judement}) have been updated accordingly in \ATPA as follows. 
\begin{align*}
&C_1^\prime\colon \big(\tfrac{f_\mathrm{est} + r_\mathrm{est} -2n_i\zeta_i}{1+\varepsilon}\ge 2c(u_i)\big)\vee \big(\tfrac{r_\mathrm{est}-n_i\zeta_i}{1+\varepsilon} \ge c(u_i)\big) \\
&\phantom{C_1^\prime\colon} \vee \big(\tfrac{f_\mathrm{est} +r_\mathrm{est}+2n_i\zeta_i}{1-\varepsilon} \le 2c(u_i)\big) \vee \big(\tfrac{f_\mathrm{est} +n_i\zeta_i}{1-\varepsilon} \le c(u_i)\big),\\
&C_2^\prime\colon (\varepsilon_i \leq \varepsilon) \wedge (n_i\zeta_i \leq 1).
\end{align*}
Second, the error parameters $\varepsilon_i$ and $\zeta_i$ are adjusted adaptively in \ATPA (Lines~\ref{HATP:relative}--\ref{HATP:both}) instead of decreasing with a fixed ratio in \ATPS. Specifically, if the current addition error $n_i\zeta_i$ reaches the threshold or it is the one magnitude smaller than estimated $f_\mathrm{est}$, we can infer that the expected marginal spread of $u_i$ is much larger than the additive error. In such case, the relative error $\varepsilon_i$ is halved (Line~\ref{HATP:relative}). Similarly, if the current relative error $\varepsilon_i$ reaches the threshold or the additive error is larger than estimated $f_\mathrm{est}$, we should halve the additive error $n_i\zeta_i$ (Line~\ref{HATP:additive}). Otherwise, both relative and additive errors are deceased by a factor of $\sqrt{2}$ (Line~\ref{HATP:both}). This adaptive adjustment could boost the efficiency of \ATPA significantly.

\subsection{Theoretical Analysis of \ATPA}\label{sec:approreladd}

\spara{Approximation Guarantee} To derive the approximation guarantee of \ATPA, we need a similar equation like~\eqref{eqn:policyprofit3} to bridge our solution with the optimal solution in each iteration. Toward this end, we have the following lemma.
\begin{lemma}\label{lem:atpacase3}
For the $i$-th iteration of \ATPA, we have
\begin{equation*}\label{eqn:atpaexp3}
\begin{split}
&\E_{\A}\big[\Lambda(\pi^\circ_{[i-1]}) - \Lambda(\pi^\circ_{[i]})\big] -\frac{2(1+\varepsilon c(u_i))}{1-\varepsilon} -\frac{2}{k}  \\
& \le\E_{\A}\big[\Lambda(\pi^\mathrm{f}_{[i]}) - \Lambda(\pi^\mathrm{f}_{[i-1]})+\Lambda(\pi^\mathrm{r}_{[i]}) - \Lambda(\pi^\mathrm{r}_{[i-1]})\big]
\end{split} 
\end{equation*}
where the expectation $\E_{\A}$ is over the internal randomness of \ATPA and $\varepsilon$ is the threshold of relative error.
\end{lemma}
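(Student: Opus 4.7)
The plan is to follow the template established for \ATPS in the proof of Lemma~\ref{lem:policyprofitadd}, with three substitutions: Hoeffding's inequality is replaced by the hybrid bound of Lemma~\ref{lem:RelatAddi}; the stopping tests $C_1, C_2$ are replaced by $C_1', C_2'$; and the per-iteration profit-loss budget charged when the additive stopping criterion fires is refined from $2$ to $\frac{2(1+\varepsilon c(u_i))}{1-\varepsilon}$. The backbone remains the deterministic inequality of Lemma~\ref{lem:case1+2}: whenever \ATPA's decision on $u_i$ agrees with the oracle double-greedy choice on the realized residual graph $G_i$, that lemma applies pointwise, and the policy-level bound on $\Lambda(\pi^\circ_{[i-1]})-\Lambda(\pi^\circ_{[i]})$ follows via exactly the rewriting used in the proof of Lemma~\ref{lem:policyprofitadd}.

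The first step is to define the good event $E_i$ that every estimate produced inside iteration $i$ satisfies the two-sided conclusion of Lemma~\ref{lem:RelatAddi} with its current parameters $(\varepsilon_i,\zeta_i,\delta_i)$. Since $\delta_i$ starts at $1/(kn)$ and halves at each round, the inner while-loop executes at most $O(\log n)$ rounds, and two estimates are produced per round, a union bound yields $\Pr[\neg E_i]=O(1/(kn))$. On $\neg E_i$ the per-iteration profit loss is crudely bounded by the maximum profit range $O(n_i)\le O(n)$, so its contribution to the expected loss is $O(1/k)$ per iteration, and this is absorbed by the $2/k$ term on the left-hand side of the claim.

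On $E_i$, the analysis splits on which stopping condition fired. If $C_1'$ fired, I would check each of its four disjuncts, showing that together with the concentration bounds it forces \ATPA's selection rule $f_\mathrm{est}+r_\mathrm{est}\ge 2c(u_i)$ to agree with the sign of $\rho_\mathrm{f}-\rho_\mathrm{r}$ (disjuncts 2 and 4 additionally invoke Lemma~\ref{lem:frsum} to convert a one-sided statement about $\rho_\mathrm{r}$ or $\rho_\mathrm{f}$ into a comparison of the two). In this regime Lemma~\ref{lem:case1+2} delivers the desired inequality without slack.

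The main obstacle is the $C_2'$ regime on $E_i$, in which both $\varepsilon_i\le\varepsilon$ and $n_i\zeta_i\le 1$ hold and a wrong decision can occur. I would handle wrong selection and wrong abandonment separately. Write $F=\E[I_{G_i}(u_i\mid S_{i-1})]$ and $R=\E[I_{G_i}(u_i\mid T_{i-1}\setminus\{u_i\})]$. For wrong selection, $f_\mathrm{est}+r_\mathrm{est}\ge 2c(u_i)$ while the true $F+R<2c(u_i)$; the upper concentration bound gives $2c(u_i)\le(1+\varepsilon)(F+R)+2$, hence $2c(u_i)-(F+R)\le\tfrac{2(1+\varepsilon c(u_i))}{1+\varepsilon}$. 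The wrong-abandonment case is symmetric via the lower bound and produces $(F+R)-2c(u_i)\le\tfrac{2(1+\varepsilon c(u_i))}{1-\varepsilon}$, which is the dominating value and precisely the per-iteration slack quoted in the lemma. Combining the three cases, taking expectation over the internal randomness of \ATPA, and repeating the policy-truncation rewriting from the proof of Lemma~\ref{lem:policyprofitadd} then completes the argument.
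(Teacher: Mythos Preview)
Your proposal is correct and matches the paper's approach: the paper likewise reduces to the template of Lemmas~\ref{lem:rho-noise} and~\ref{lem:policyprofitadd}, argues that $C_1'$ on the good event forces the oracle decision, and isolates the $C_2'$ regime as the sole source of the $\tfrac{2(1+\varepsilon c(u_i))}{1-\varepsilon}$ slack, obtained from the hybrid concentration bounds just as you outline. The only cosmetic difference is that the paper does not split $C_2'$ into wrong versus correct decisions; it uniformly bounds $\rho_\mathrm{r}\le\rho_\mathrm{f}+\tfrac{2(1+\varepsilon c(u_i))}{1-\varepsilon}$ (and symmetrically) directly from the inverted estimate inequalities $\tfrac{f_\mathrm{est}-n_i\zeta_i}{1+\varepsilon_i}\le \E[I_{G_i}(u_i\mid S_{i-1})]\le\tfrac{f_\mathrm{est}+n_i\zeta_i}{1-\varepsilon_i}$, which subsumes both of your cases at once.
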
 

Lemma~\ref{lem:atpacase3} establishes the relation between policies $\pi^\mathrm{f}, \pi^\mathrm{r}$ and $\pi^\circ$ for each iteration, based on which, we have following theorem on the approximation of \ATPA.
\begin{theorem}\label{thm:finalratio}
\ATPA achieves the expected profit at least $\frac{\Lambda(\pi^{\opt})- 2(k+\varepsilon c(T))/(1-\varepsilon)-2}{3}$ for any $\varepsilon\in(0,1)$, where $c(T)$ is the cost of $T$.
\end{theorem}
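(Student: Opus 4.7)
The plan is to mirror the proof of Theorem~\ref{thm:adaptwitherror} exactly, using Lemma~\ref{lem:atpacase3} as the per-iteration building block in place of Lemma~\ref{lem:policyprofitadd}. First I would sum the inequality of Lemma~\ref{lem:atpacase3} over $i=1,\dotsc,k$. The left-hand side telescopes into $\E_{\A}[\Lambda(\pi^\circ_{[0]}) - \Lambda(\pi^\circ_{[k]})]$ minus the accumulated additive loss, and the right-hand side telescopes into $\E_{\A}[\Lambda(\pi^\mathrm{f}_{[k]}) - \Lambda(\pi^\mathrm{f}_{[0]}) + \Lambda(\pi^\mathrm{r}_{[k]}) - \Lambda(\pi^\mathrm{r}_{[0]})]$, which is the same telescoping structure that drove Theorem~\ref{thm:adaptgreedy} and Theorem~\ref{thm:adaptwitherror}.

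Next I would evaluate the total additive loss. Each iteration contributes at most $2(1+\varepsilon c(u_i))/(1-\varepsilon) + 2/k$, so
\begin{equation*}
\sum_{i=1}^{k}\Bigl(\frac{2(1+\varepsilon c(u_i))}{1-\varepsilon}+\frac{2}{k}\Bigr)
=\frac{2k}{1-\varepsilon}+\frac{2\varepsilon}{1-\varepsilon}\sum_{i=1}^{k}c(u_i)+2
=\frac{2(k+\varepsilon c(T))}{1-\varepsilon}+2,
\end{equation*}
using $\sum_{i=1}^{k}c(u_i)=c(T)$ since every node in $T$ is examined in exactly one iteration. This is the clean form that appears in the theorem statement.

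Then I would plug in the boundary conditions already recorded in Section~\ref{sec:adaptoracle}: $\Lambda(\pi^\mathrm{f}_{[0]})=0$, $\Lambda(\pi^\mathrm{r}_{[0]})=\rho(T)\ge 0$, $\Lambda(\pi^\circ_{[0]})=\Lambda(\pi^{\opt})$, and $\Lambda(\pi^\mathrm{f}_{[k]})=\Lambda(\pi^\mathrm{r}_{[k]})=\Lambda(\pi^\circ_{[k]})$. Dropping the nonnegative $\rho(T)$ on the right and combining the three equal terms yields $3\,\E_{\A}[\Lambda(\pi^\mathrm{f}_{[k]})]\ge \Lambda(\pi^{\opt}) - 2(k+\varepsilon c(T))/(1-\varepsilon) - 2$. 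Dividing by $3$ gives the claimed bound, since $\E_{\A}[\Lambda(\pi^\mathrm{f}_{[k]})]$ is the expected profit returned by \ATPA.

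I do not expect obstacles at this stage: all heavy lifting (controlling the randomness of the RIS estimator, handling the hybrid error stopping conditions $C_1^\prime, C_2^\prime$, and producing the per-iteration slack $2(1+\varepsilon c(u_i))/(1-\varepsilon)+2/k$) is already absorbed into Lemma~\ref{lem:atpacase3}. The only care-point is arithmetic: ensuring that the cost-weighted term $\varepsilon c(u_i)$ aggregates cleanly to $\varepsilon c(T)$, which relies on the algorithm visiting each $u_i\in T$ exactly once (activated nodes still produce a zero-slack iteration via the \texttt{continue} branch, so no cost is double-counted). The genuine difficulty of the theorem lives inside Lemma~\ref{lem:atpacase3}, not in the telescoping argument above.
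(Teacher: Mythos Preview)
Your proposal is correct and follows exactly the approach implicit in the paper: the paper does not spell out a separate proof of Theorem~\ref{thm:finalratio} but simply points to Lemma~\ref{lem:atpacase3} and the pattern of Theorem~\ref{thm:adaptwitherror}, which is precisely the telescoping-plus-boundary-conditions argument you describe. Your handling of the accumulated loss $\sum_{i=1}^{k}\bigl(2(1+\varepsilon c(u_i))/(1-\varepsilon)+2/k\bigr)=2(k+\varepsilon c(T))/(1-\varepsilon)+2$ and the observation that activated nodes contribute a trivially satisfied iteration are both on point.
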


\spara{Time Complexity} In the $i$-th iteration, there are $O(\frac{1}{\varepsilon\zeta_i}\ln\frac{1}{\delta_i})$ random RR sets, where $\delta_i=O(\frac{\varepsilon}{n})$. Thus, the expected time complexity of \ATPA is
\begin{equation*}
O\Big(\sum_{i=1}^{k}\big(\frac{\ln\frac{n}{\varepsilon}}{\varepsilon\zeta_i}\cdot \frac{m_i\E[I(\{v_i^\circ\})]}{n_i}\big)\Big)=O\Big(\frac{km\E[I(\{v_1^\circ\})]}{\varepsilon}\ln\frac{n}{\varepsilon}\Big).
\end{equation*}

\begin{theorem}\label{thm:expectetimeadapt}
The expected time complexity of \ATPA is $O(\frac{km\E[I(\{v^\circ\})]}{\varepsilon}\ln\frac{n}{\varepsilon})$ where $v^\circ$ is the node with the largest expected spread on $G$.
\end{theorem}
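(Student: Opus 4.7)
The plan is to mirror closely the time-complexity derivation already done for \ATPS in Section~\ref{sec:adaptaddtive}, adapting it to the hybrid-error parameter schedule of \ATPA. I would bound the expected runtime by summing, over the $k$ iterations, the product of (i) the number of RR sets generated in iteration $i$ and (ii) the expected cost of generating a single RR set on the residual graph $G_i$, and then invoke Wald's equation to justify this multiplicative decomposition under randomized stopping times.

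First, I would bound the per-iteration RR-set count. The sample size chosen in each round of the inner \texttt{while} loop is $\theta = \frac{(1+\varepsilon_i/3)^2}{2\varepsilon_i\zeta_i}\ln(4/\delta_i)$, which is $O\!\left(\frac{1}{\varepsilon_i\zeta_i}\ln(1/\delta_i)\right)$ because $\varepsilon_i\leq 1$ throughout. Since the loop only exits through $C_1^\prime$ or $C_2^\prime$, and $C_2^\prime$ requires both $\varepsilon_i\leq \varepsilon$ and $n_i\zeta_i\leq 1$, at termination we have $\varepsilon_i = \Omega(\varepsilon)$ and $\zeta_i = \Omega(1/n_i)$ (one of them reaches its floor just before exit because of the halving in Lines~\ref{HATP:additive}--\ref{HATP:both}). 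Meanwhile, $\delta_i$ starts at $1/(kn)$ and is halved at most $O(\log n+\log(1/\varepsilon))$ times, so $\ln(1/\delta_i)=O(\ln(n/\varepsilon))$. Plugging back, the iteration-$i$ RR-set count is $O\!\left(\frac{1}{\varepsilon\zeta_i}\ln(n/\varepsilon)\right)$.

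Second, I would invoke Lemma~4 of \cite{Tang_TIM_2014}, which states that the expected cost $\mathrm{EPT}_i$ of generating one random RR set on $G_i$ is at most $\frac{m_i}{n_i}\,\E[I_{G_i}(\{v_i^\circ\})]$, where $v_i^\circ$ is the node of maximum expected spread on $G_i$. Using the monotonicity $\E[I_{G_i}(\{v_i^\circ\})]\leq \E[I_{G}(\{v^\circ\})]$ and $m_i\leq m$, multiplying by the RR-set count and summing via Wald's equation yields
\begin{equation*}
\sum_{i=1}^{k}O\!\left(\frac{\ln(n/\varepsilon)}{\varepsilon\zeta_i}\cdot\frac{m_i\,\E[I(\{v_i^\circ\})]}{n_i}\right)
= O\!\left(\frac{km\,\E[I(\{v^\circ\})]\ln(n/\varepsilon)}{\varepsilon}\right),
\end{equation*}
where the absorption of $1/(n_i\zeta_i)$ into a constant uses precisely the floor $n_i\zeta_i = \Omega(1)$ guaranteed by $C_2^\prime$.

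The main obstacle I expect is the careful bookkeeping in the first step: one must argue that the hybrid schedule of halving only $\varepsilon_i$, only $\zeta_i$, or both (depending on the magnitudes of $f_\mathrm{est}$ and $n_i\zeta_i$) still produces only $O(\log(n/\varepsilon))$ rounds before $C_2^\prime$ can fire, and that at the terminating round the product $1/(\varepsilon_i\zeta_i)$ is indeed $O(n_i/\varepsilon)$ rather than exploding. Once that structural observation is nailed down, the remainder is a routine combination of Wald's equation and the \cite{Tang_TIM_2014} EPT bound exactly as in the proof of Theorem~\ref{thm:expectetimeadd}.
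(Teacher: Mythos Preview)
Your proposal is correct and follows essentially the same approach as the paper's own derivation (the paragraph immediately preceding Theorem~\ref{thm:expectetimeadapt}): bound the per-iteration RR-set count by $O\bigl(\frac{1}{\varepsilon\zeta_i}\ln\frac{n}{\varepsilon}\bigr)$ using the floors $\varepsilon_i=\Omega(\varepsilon)$ and $n_i\zeta_i=\Omega(1)$ enforced by $C_2^\prime$ and the halving schedule, multiply by the EPT bound $\frac{m_i}{n_i}\E[I_{G_i}(\{v_i^\circ\})]$ from~\cite{Tang_TIM_2014}, and sum over $i$ via Wald's equation. Your write-up is in fact more explicit than the paper's about why the hybrid schedule cannot push $\varepsilon_i$ or $n_i\zeta_i$ more than a constant factor below their respective thresholds, which is exactly the bookkeeping point you flag as the main obstacle.
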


Note that \ATPA is approximately $O(\varepsilon n)$ more efficient than \ATPS. Usually, $\varepsilon=O(1)$, e.g., $\varepsilon=0.1$, in the literature, \ATPA achieves a factor of $O(n)$ improvement on efficiency.
\section{Related Work}\label{sec:relate}

As introduced in Section~\ref{sec:doublegreedy}, profit maximization problem is an application of {\it unconstrained submodular maximization (USM)} problem~\cite{FeigeMV_random_2011, Buchbinder_DoubleGreedy_2012}. Thus in this section, we mainly discuss the related work on {\it submodular maximization} and {\it profit maximization}.

\spara{Submodular Maximization} Submodular maximization has attracted considerable interest~\cite{Asadpour_sensor_2008, FeigeMV_random_2011, Buchbinder_DoubleGreedy_2012, Badanidiyuru_fast_2014, Badanidiyuru_adapSeeding_2016} in the past decades. There are a plethora of applications of submodular maximization in real world, such as maximum facility location~\cite{AgeevS_facility_1999}, Max-Cut~\cite{Goemans_maxcut_1995} and influence maximization (IM)~\cite{Kempe_maxInfluence_2003}. Compared with the profit maximization (PM) problem, the IM problem is the most relevant work. Spread function defined in the IM problem is submodular and monotone under the independent cascade (IC) and linear threshold (LT) models~\cite{Kempe_maxInfluence_2003}. However, profit function is submodular but not necessarily monotone, by which profit maximization is {\it unconstrained submodular maximization (USM)}~\cite{FeigeMV_random_2011, Buchbinder_DoubleGreedy_2012}. As pointed out by previous work~\cite{FeigeMV_random_2011, Zhiyi_pricecompete_2018}, there is no efficient approximation algorithms for general USM problem without any additional assumptions. For nonnegative USM, Feige~\etal~\cite{FeigeMV_random_2011} prove that an uniformly random selected method could achieve an $\frac{1}{4}$-approximation (resp.\ $\frac{1}{2}$-approximation) if the submodular function is nonsymmetric (resp.\ symmetric). As what follows, Buchbinder~\etal~\cite{Buchbinder_DoubleGreedy_2012} propose deterministic double greedy and randomized double greedy methods, achieving $\frac{1}{3}$-approximation and $\frac{1}{2}$-approximation for USM respectively under the assumption that submodular function on the ground set is nonnegative.

\spara{Profit Maximization} The profit maximization (PM) problem has been a hot topic in academia recently. The existing work all focuses on PM problem in the nonadaptive setting, \ie nonadaptive PM. Tong~\etal~\cite{Amo_CouponAdvertise_2018} consider the coupon allocation in the profit maximization problem. By utilizing the randomized double greedy~\cite{Buchbinder_DoubleGreedy_2012}, they design algorithms to address the proposed {\it simulation-based profit maximization} and {\it realization-based profit maximization} and claim to achieve $\frac{1}{2}$-approximation with high probability. Liu~\etal~\cite{LiuLWFDW_ProfitCoupon_2018} also consider the coupon allocation in profit maximization under a new diffusion model named {\it independent cascade model with coupons and valuations}. To address this problem, they propose PMCA algorithm based on the local search algorithm~\cite{FeigeMV_random_2011}. PMCA is claimed to achieve an $\frac{1}{3}$-approximation upon the assumption that the submodular function is nonnegative for every subset. However, this assumption is too stringent. Moreover, the time complexity of PMCA is as large as $O(\log(n)mn^4/\varepsilon^3)$, due to which PMCA does not work in practice. Tang~\etal~\cite{Tang_Profit_2018} utilize the deterministic and randomized double greedy algorithms~\cite{Buchbinder_DoubleGreedy_2012} to address the profit maximization problem. With the assumption that submodular function on the ground set is nonnegative, they prove the $\frac{1}{3}$- and $\frac{1}{2}$-approximation guarantees respectively. Furthermore, they design an novel method and relax this assumption to a much weaker one. However, they do not analyze the sampling errors in spread estimation, which makes the proposed algorithms heuristic.

\section{Experiments}\label{sec:experiment}

In this section, we evaluate the performance of our proposed algorithms through extensive experiments. We measure the efficiency and effectiveness in real online social networks. Our experiments are deployed on a Linux machine with an Inter Xeon 2.6GHz CPU and 64GB RAM.

\begin{table}[!t]
	\centering
	\caption{Dataset details. ($\boldsymbol{\textrm{K}=10^3, \textrm{M}=10^6}$)} \label{tbl:dataset}
	\setlength{\tabcolsep}{0.5em} 
	\renewcommand{\arraystretch}{1.2}
	\begin{tabular} {l|rrrc}
		\hline
		{\bf Dataset} & \multicolumn{1}{c}{$\boldsymbol{n}$} & \multicolumn{1}{c}{$\boldsymbol{m}$} & \multicolumn{1}{c}{\bf Type}  & {\bf Avg.\ deg}  \\ \hline
		{NetHEPT}       & 15.2K			&  31.4K		& 	undirected		 &	4.18       \\ 
		{Epinions}		 & 132K			&  841K			&  	directed		 &	13.4       \\ 
		{DBLP}			 & 655K			&  1.99M		&  	undirected		 &	6.08       \\ 
		{LiveJournal}   & 4.85M			&  69.0M		&  	directed		 &	28.5     \\ \hline
	\end{tabular}
\end{table}

\subsection{Experimental Setting}\label{sec:expset}

\spara{Datasets} Four online social networks are used in our experiments, namely {NetHEPT}, {Epinions}, {DBLP}, {LiveJournal}, as presented in Table~\ref{tbl:dataset}. Among them, {NetHEPT}~\cite{Chen_NewGreedy_2009} is the academic collaboration networks of ``High Energy Physics-Theory'' area. The rest three datasets are real-life social networks available in~\cite{Leskovec_SNAP_2014}. In particular, {LiveJournal} contains millions of nodes and edges. For fair comparison, we randomly generate 20 possible realizations for each dataset, and report the average performance of each tested algorithm on the 20 possible realizations.

\spara{Algorithms} First, we evaluate the two proposed adaptive algorithms \ATPA and \ATPS. We also adopt the {\it random set (RS)} algorithm~\cite{FeigeMV_random_2011} and extend it into an adaptive version, \ie~{\it adaptive random set} (\ARS). Specifically, \ARS selects each seed node candidate with probability of $0.5$ without reference to its quality. If one node is selected, it then observes and removes all the nodes activated by this node from the graph. (The removed nodes are not examined and selected by \ARS.) This process is repeated until all nodes in the target set have been decided. To verify the advantage of adaptive algorithms over nonadaptive algorithms, we tailor \ATPA into a nonadaptive version, referred to as \NTPA\footnote{Algorithms with \underline{h}ybrid error for \underline{n}onadaptive \underline{t}argeted \underline{p}rofit maximization}, to address the nonadaptive TPM problem. Meanwhile, we also include two extra nonadaptive algorithms proposed in the latest work for nonadaptive profit maximization problem~\cite{Tang_Profit_2018}, \ie~{\it nonadaptive simple greedy} (\NSG) and {\it nonadaptive double greedy} (\NDG). Note that \NSG and \NDG are nonadaptive algorithms where all the seed nodes are selected in one batch before we deploy these nodes into the viral marketing campaign. Also, the analysis in \cite{Tang_Profit_2018} ignores the sampling errors in spread estimation; in contrast, our \ATPS and \ATPA algorithms take such errors into account. For relatively fair comparison, we set the sample size of \NSG and \NDG as the largest number of samples generated in \ATPA for one iteration in all settings. Recall that by evaluating the efficacy of adaptive algorithms \ATPS and \ATPA, we could verify the advantage of our proposed adaptive policies over nonadaptive policies on target profit maximization.

\begin{figure*}[!t]
	\centering
	\includegraphics[height=10pt]{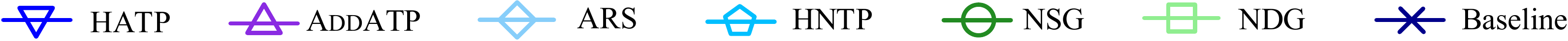}\vspace{-0.15in}\\
	\subfloat[NetHEPT]{\includegraphics[width=0.25\linewidth]{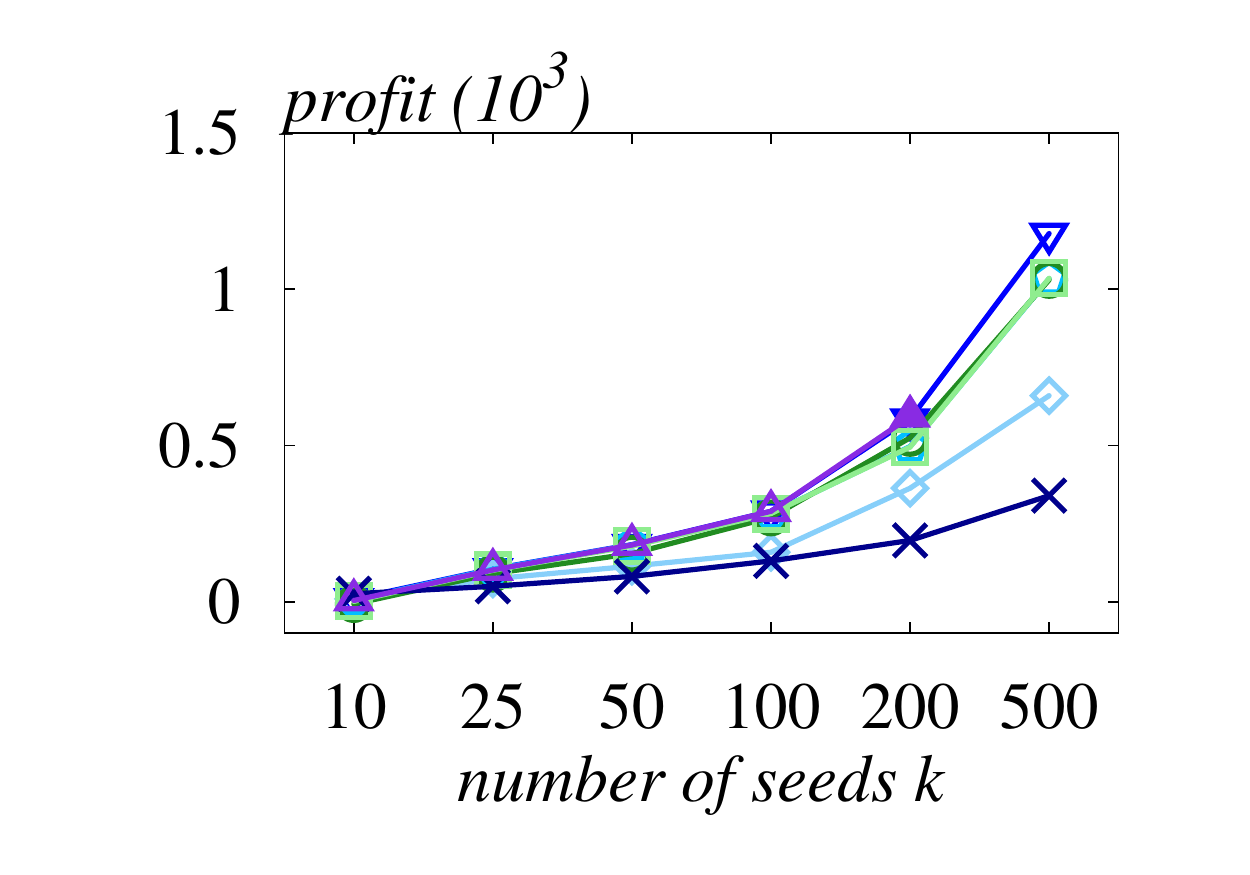}\label{fig:NetHEPT_deg_profit}}\hfill
	\subfloat[Epinions]{\includegraphics[width=0.25\linewidth]{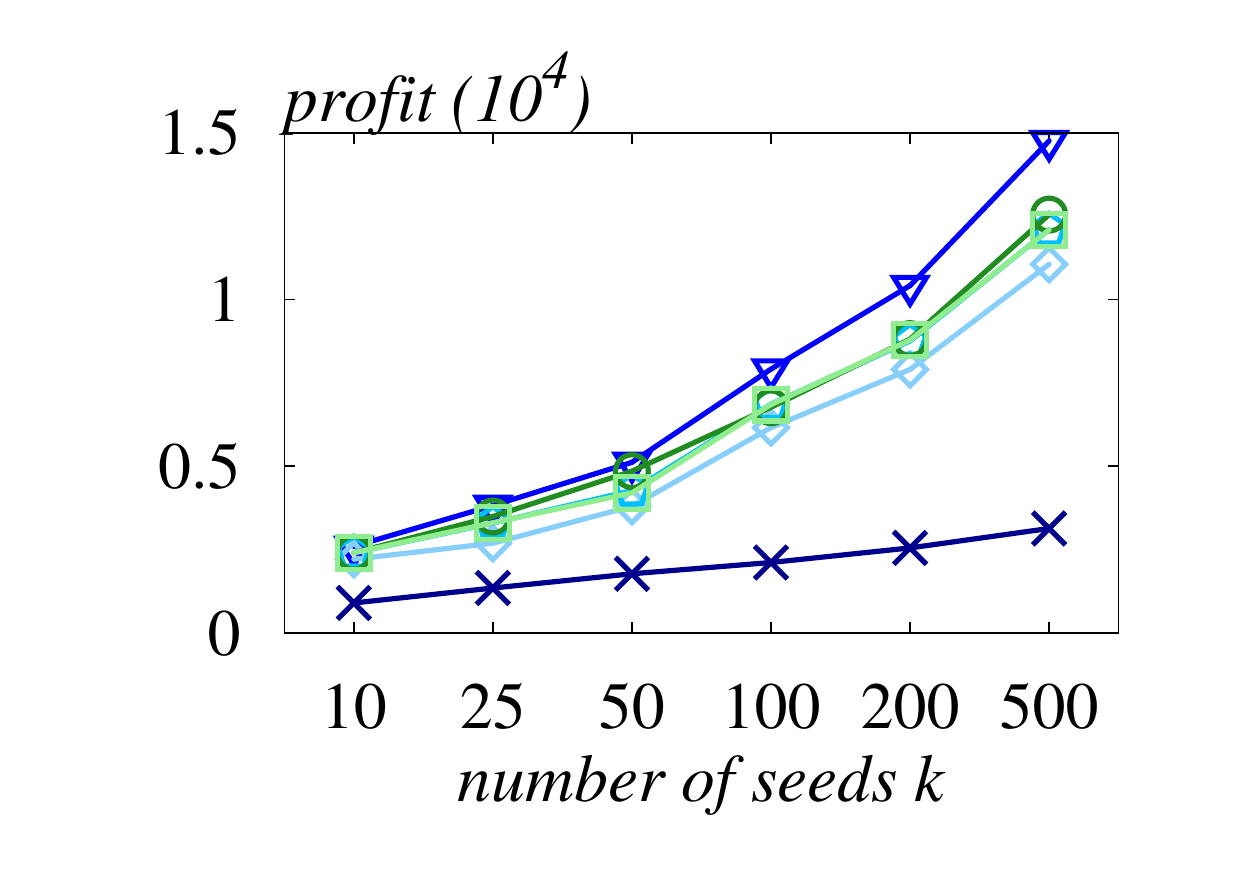}\label{fig:Epinions_deg_profit}}\hfill
	\subfloat[DBLP]{\includegraphics[width=0.24\linewidth]{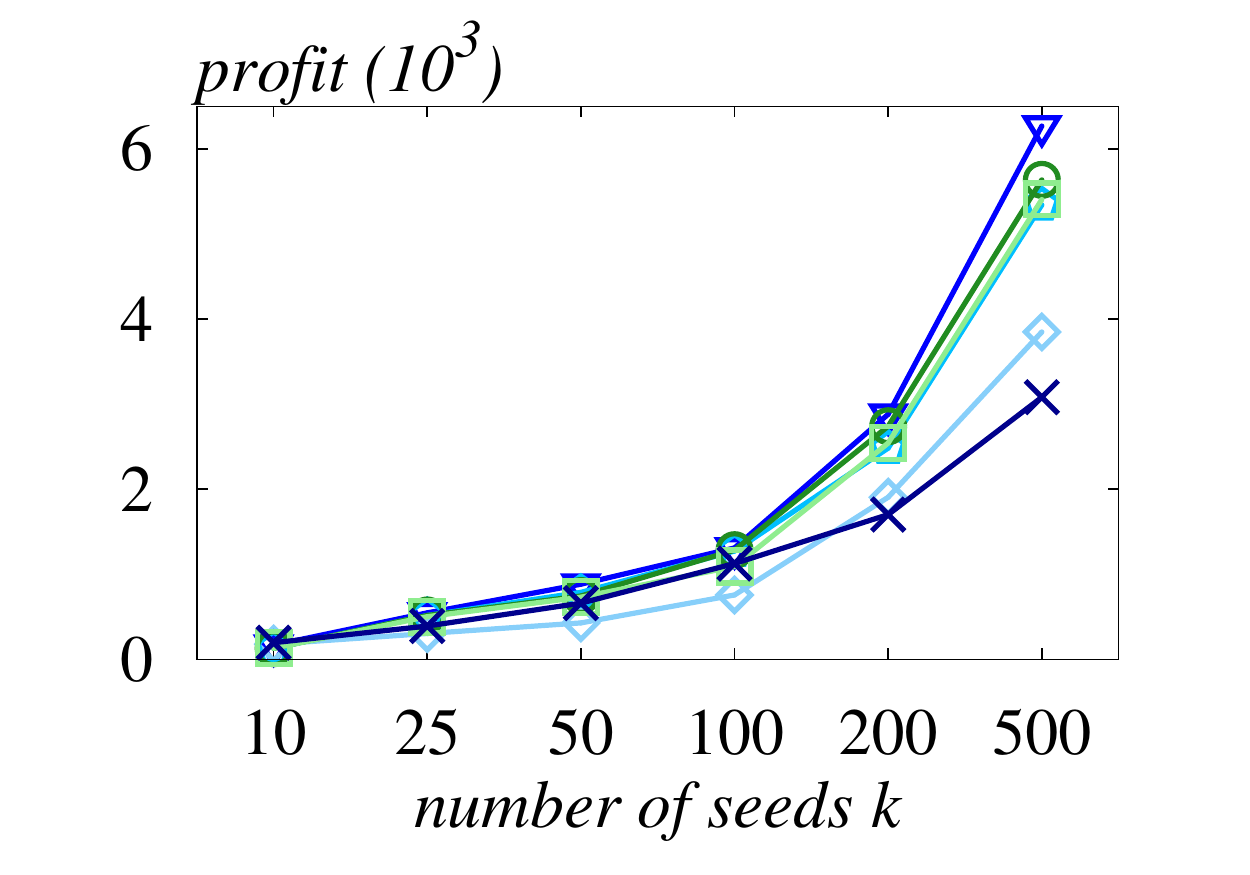}\label{fig:DBLP_deg_profit}}\hfill
	\subfloat[LiveJournal]{\includegraphics[width=0.24\linewidth]{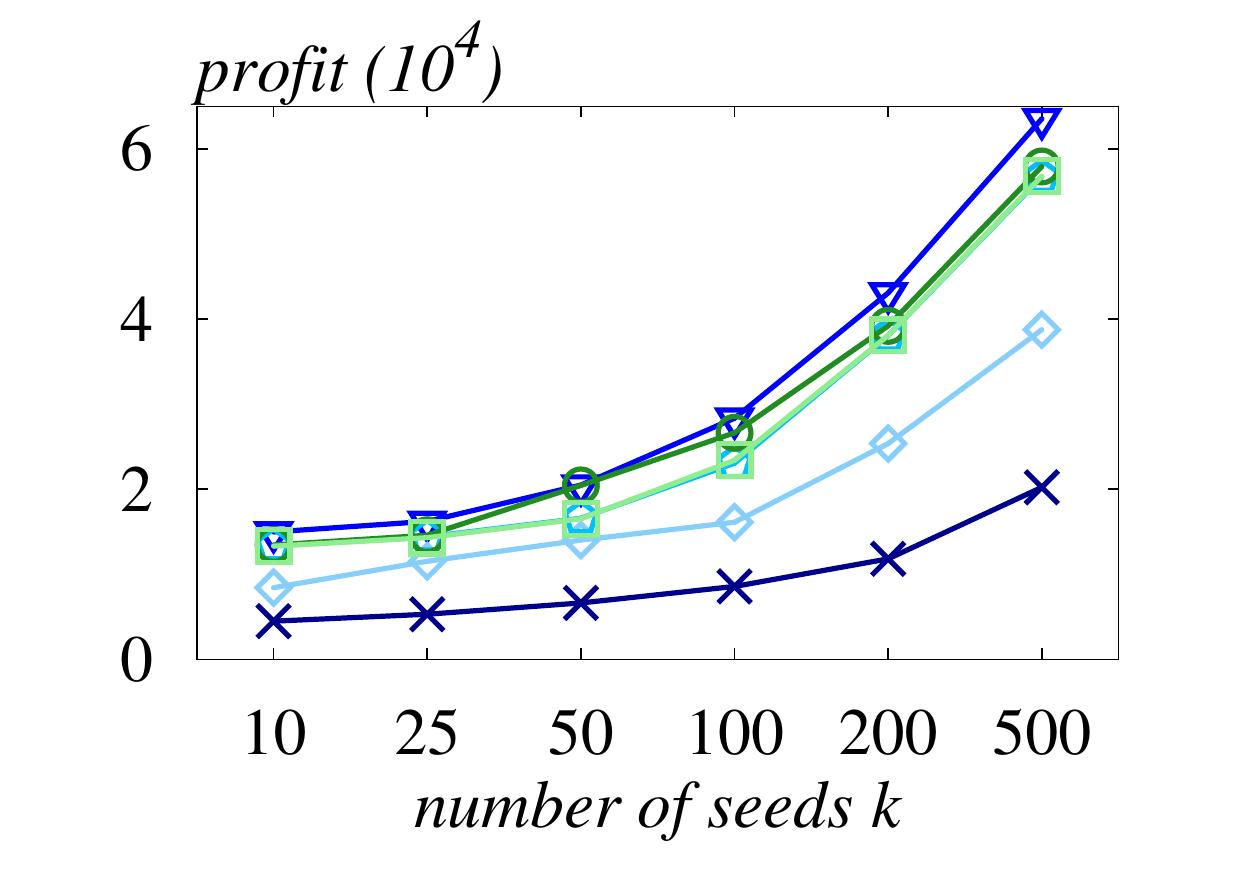}\label{fig:LiveJournal_deg_profit}}
	\caption{Profit in degree-proportional cost.}\label{fig:profit-deg}
\end{figure*}

\begin{figure*}[!t]
	\centering
	\vspace{-0.2in}
	\subfloat[NetHEPT]{\includegraphics[width=0.25\linewidth]{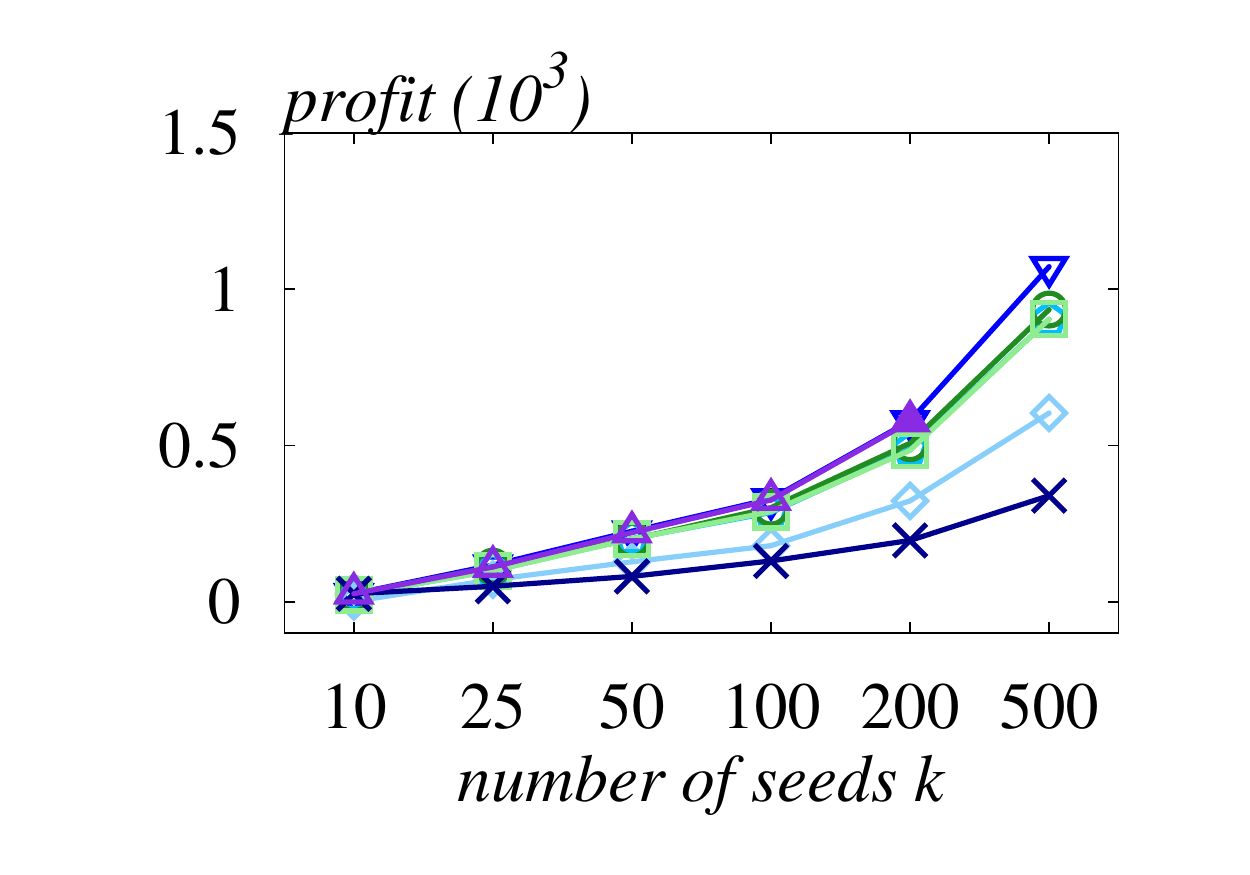}\label{fig:NetHEPT_uni_profit}}\hfill
	\subfloat[Epinions]{\includegraphics[width=0.25\linewidth]{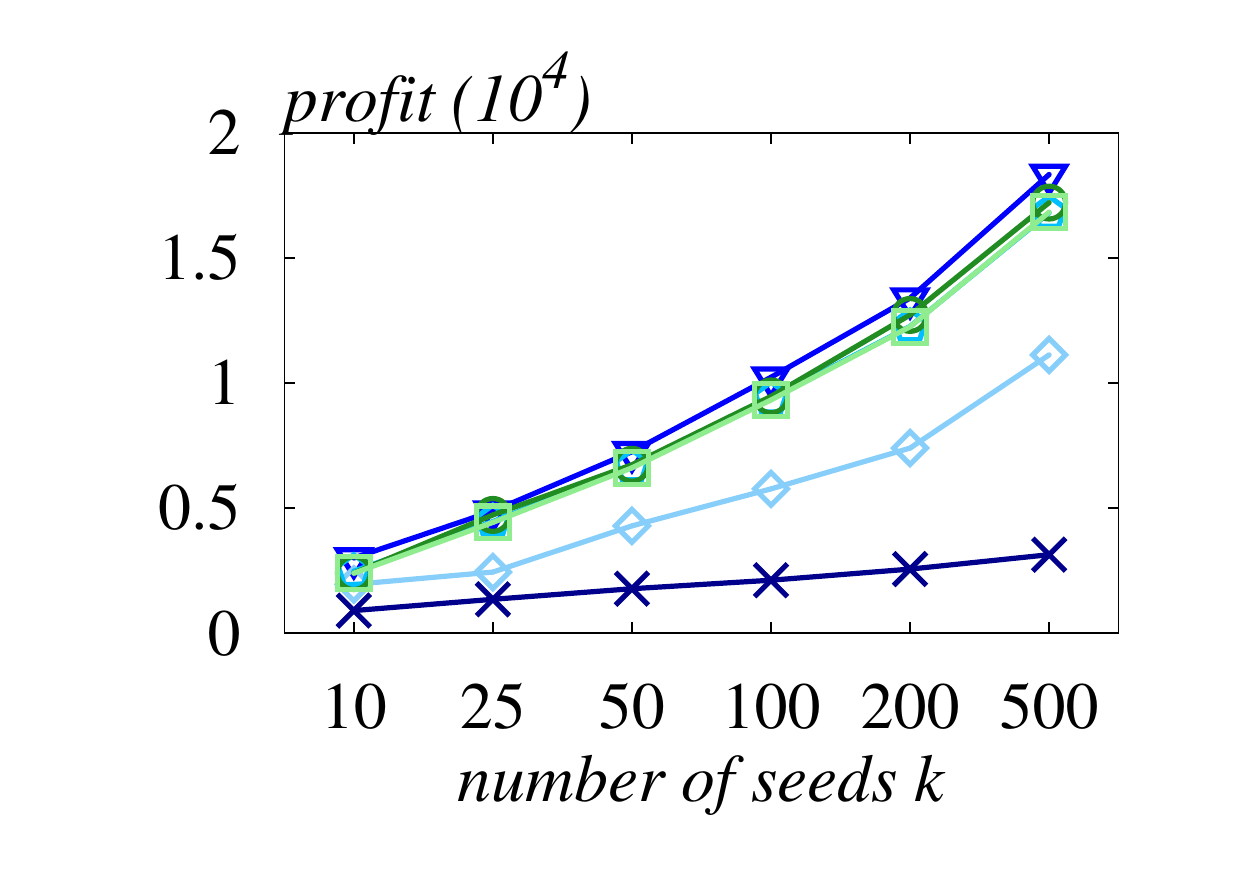}\label{fig:Epinions_uni_profit}}\hfill
	\subfloat[DBLP]{\includegraphics[width=0.24\linewidth]{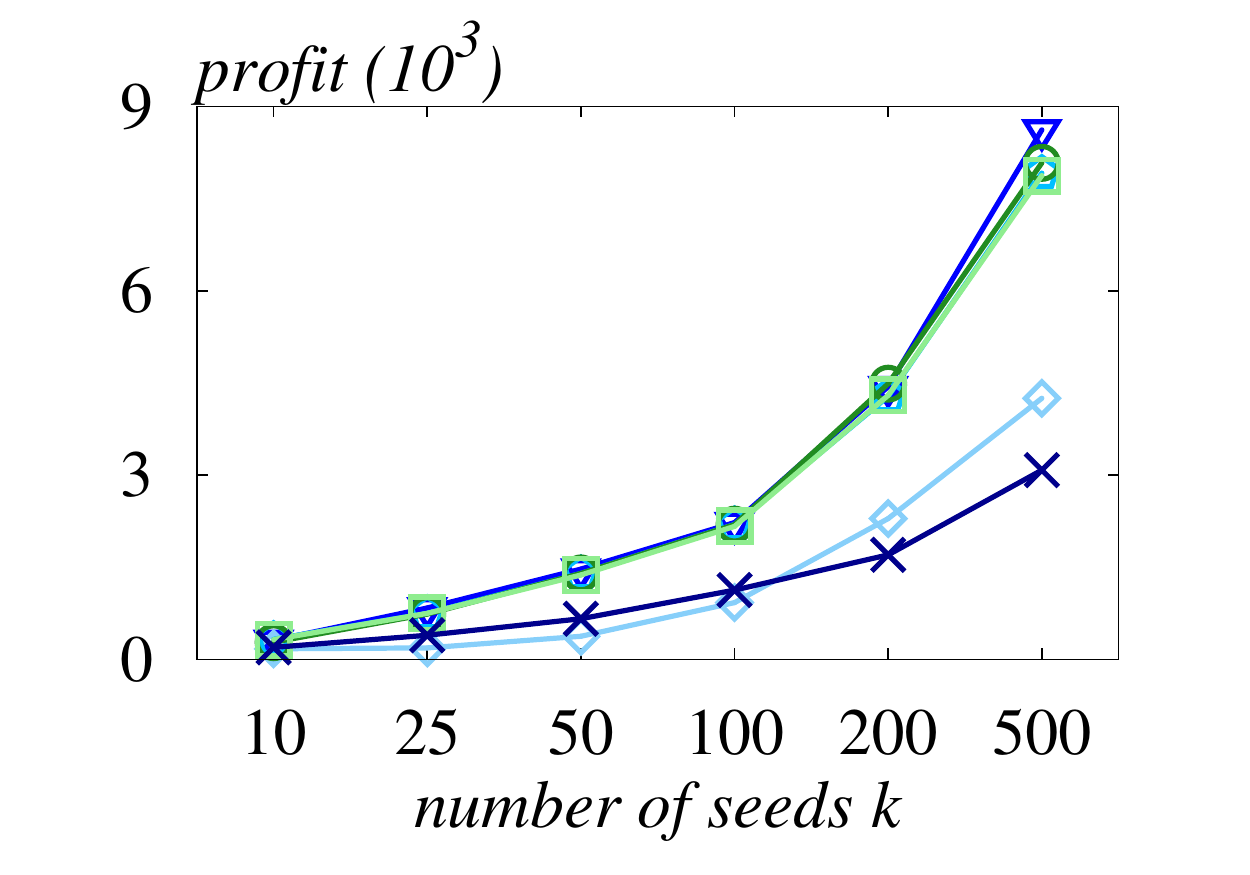}\label{fig:DBLP_uni_profit}}\hfill
	\subfloat[LiveJournal]{\includegraphics[width=0.24\linewidth]{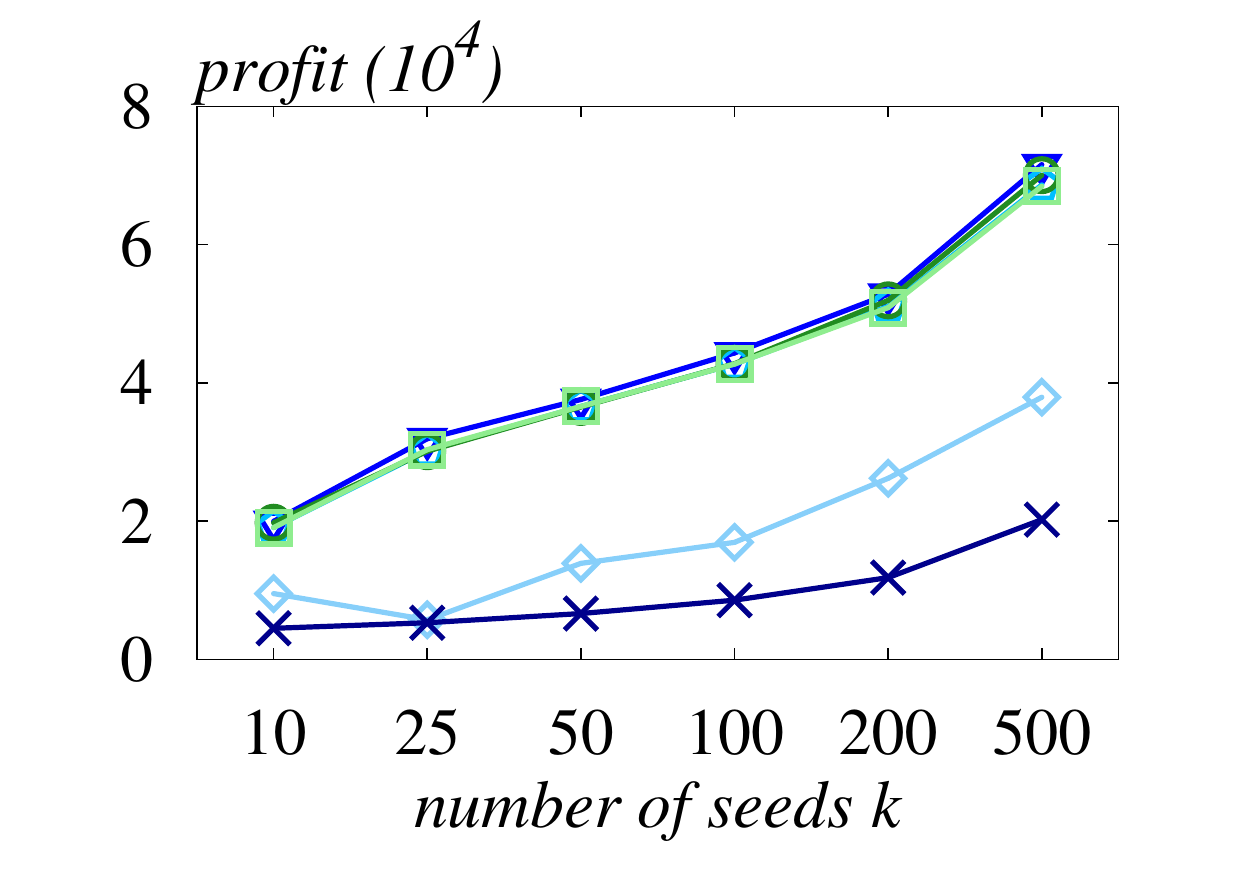}\label{fig:LiveJournal_uni_profit}}
	\caption{Profit in uniform cost.}\label{fig:profit-uni}
\end{figure*} 

\spara{Parameter settings} To conduct a comprehensive evaluation, we design two different procedures to obtain a suitable target set $T$ and the corresponding cost of each user in $T$. First, we follow the setting in~\cite{Arthur_pricestrategy_2009, Lu_profitmax_2012} where the cost of each user in $T$ is based on the expected spread of $T$. We use one of the state of the arts~\cite{Tang_IMM_2015} for influence maximization to obtain the top-$k$ influential users as the target seed set. We vary the target size $k$ as $k=\{10, 25, 50, 100, 200, 500\}$. To determine the cost of each user $u \in T$, we (i) estimate the lower bound of $T$'s expected spread $\E[I(T)]$ as $\E^l[I(T)]$, and (ii) ensure that $c(T)=\E^l[I(T)]$. Under this condition, we design two cost settings, \ie~{\it degree-proportional} cost setting and {\it uniform} cost setting. In the degree-proportional setting, the cost of each node is proportional to its out-degree. In the uniform cost setting, the cost of each node is equal.

Second, we follow the setting in the latest work~\cite{Tang_Profit_2018} where the cost of each user in the graph $G$ is predefined before we choose the target set $T$. Let $\lambda=c(V)/n$ be the ratio of cost to node number where $n=|V|$ and $c(V)$ is the cost of $V$. To get a proper size of $T$, we vary $\lambda$ as $\lambda=\{200, 300, 400, 500\}$. Then the cost of each node in $V$ also follows the degree-proportional cost setting and uniform cost setting respectively. We then adopt the two proposed methods, namely \NDG and \NSG to identify the target set $T$. 

In a nutshell, in the first setting, we choose a target set $T$ first and then assign a cost to each node in $T$ based on the expected spread of $T$, while in the second setting, we assign a cost to each node first and then find a target set $T$ based on the cost assignment. For the two settings, we set the input parameters $n_i\zeta_0=64$, $\varepsilon_0=0.5$, and its threshold $\varepsilon=0.05$ in \ATPA and \NTPA. Following the common setting in the literature~\cite{Tang_TIM_2014,Huang_SSA_2017,Tang_IMhop_2018,Tang_infMax_2017,Tang_OPIM_2018}: for each dataset, we set the edge probability $p(\langle u,v\rangle)=\frac{1}{\indeg_v}$, where $\indeg_v$ is the in-degree of node $v$.

\subsection{Comparison of Profit}\label{sec:profit}

\spara{Degree-proportional Cost} \figurename~\ref{fig:profit-deg} reports the profits achieved by the six tested algorithms under the degree-proportional cost setting. In addition, the dark-blue line with cross mark ({\bf $\pmb{\times}$}) named {\it Baseline} represents the estimated profit of the target set $T$. We observe that all six tested algorithms could improve the profit of baseline significantly. In particular, \ATPA shows superior advantage over the other three nonadaptive algorithms. Specifically, profit achieved by \ATPA is around $10\%$--$15\%$ larger than those of nonadaptive algorithms on average, where the improvement percentage is as high as $20\%$ on the {Epinions} dataset. This verifies the effectiveness of our solutions. Meanwhile, \ATPS achieves comparable profit with \ATPA on the NetHEPT dataset. However, \ATPS runs {\it out of memory} on other larger datasets. (The filled triangle ($\blacktriangle$) represents the largest value of $k$ that \ATPS can run.)

As with the nonadaptive algorithms, \ie \NTPA, \NSG, and \NDG, they obtain quite comparable profits on the four datasets. In particular, we observe that the profit of \NSG is slightly higher than the profits of \NTPA and \NDG on datasets {DBLP} and {LiveJournal}, which implies the minor advantage of simple greedy over double greedy on large datasets. We also observe that \ARS achieves the lowest profits among the six algorithms, since \ARS selects each node with probability of $0.5$ without reference to its quality.

\spara{Uniform Cost} \figurename~\ref{fig:profit-uni} shows the profit results under the uniform cost setting. At the first glance, the results follow the trend of \figurename~\ref{fig:profit-deg}. However, there are two major distinctions on the profit results between the two cost settings. First, algorithms achieve around $50\%$ more profits in the uniform cost setting than they do in the degree-proportional cost setting. This can be explained as follows. In the degree-proportional cost setting, the cost of each user is proportional to its out-degree which is highly correlated with its expected spread. In this regard, each user's cost is roughly proportional to its spread, which largely limits the profits the influential nodes could contribute in some degree. Contrarily, in the uniform cost setting where all users are assigned with the same cost, influential nodes can be more easily picked out to exert their influence and bring more profits.

The other notable distinction is that the gap of profit between adaptive algorithms and nonadaptive algorithms becomes smaller in the uniform cost setting. As aforementioned, those profitable nodes are easier to be identified under this setting. As a consequence, the overlap of selected seed sets between adaptive and nonadaptive algorithms expands, which weakens the adaptivity advantage slightly.

\begin{figure}[!t]
	\centering
	\subfloat[Profits under the random cost]{\includegraphics[width=0.5\linewidth]{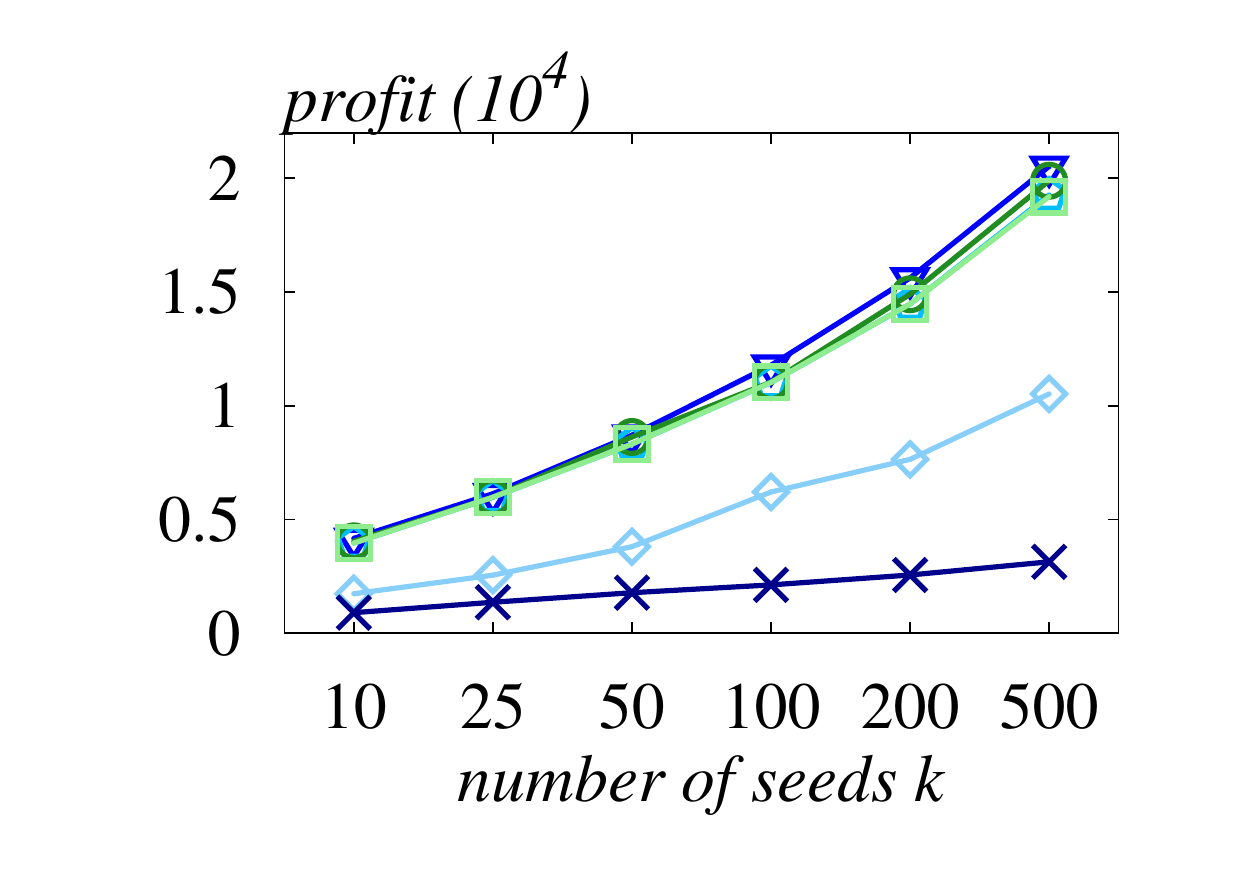}\label{subfig:randcost}}\hfill
	\subfloat[Sensitivity of relative error]{\includegraphics[width=0.5\linewidth]{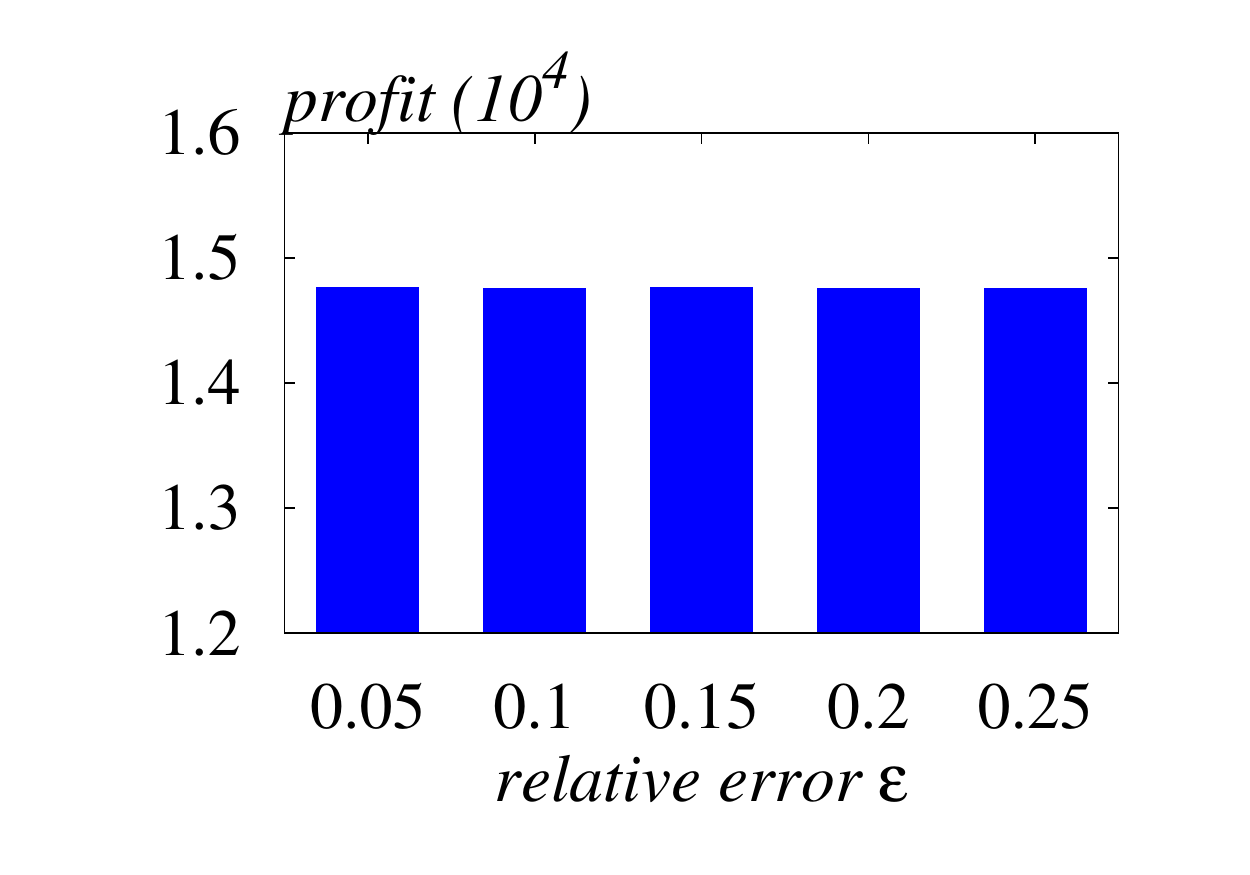}\label{subfig:sensitive}}
	\caption{Profits on {Epinions}.}\label{fig:profit-randsen}
\end{figure}

\begin{figure*}[!t]
	\centering
	\includegraphics[height=11pt]{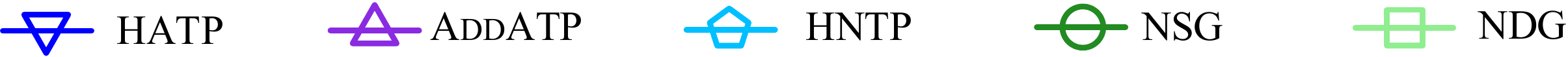}\vspace{-0.15in}\\
	\subfloat[NetHEPT]{\includegraphics[width=0.25\linewidth]{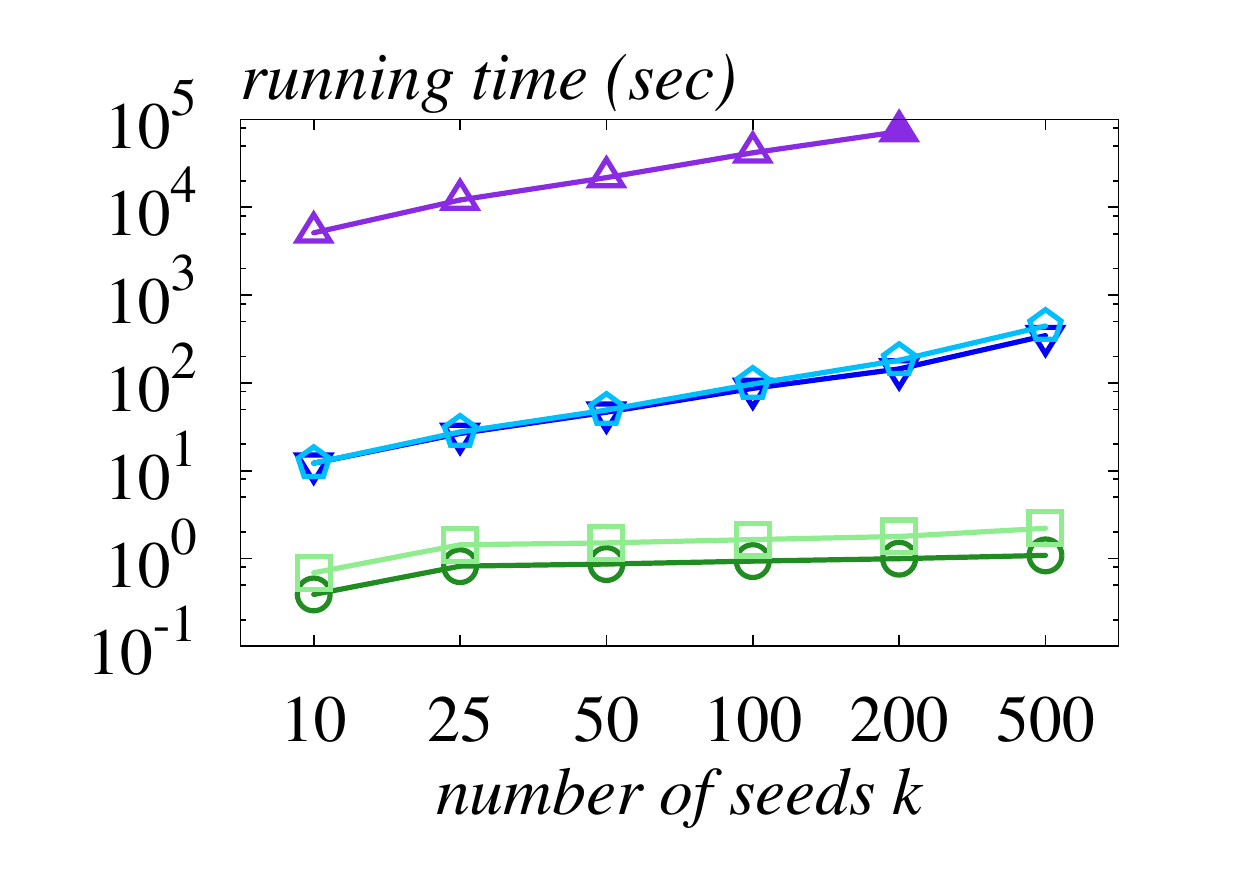}\label{fig:NetHEPT_deg_time}}\hfill
	\subfloat[Epinions]{\includegraphics[width=0.25\linewidth]{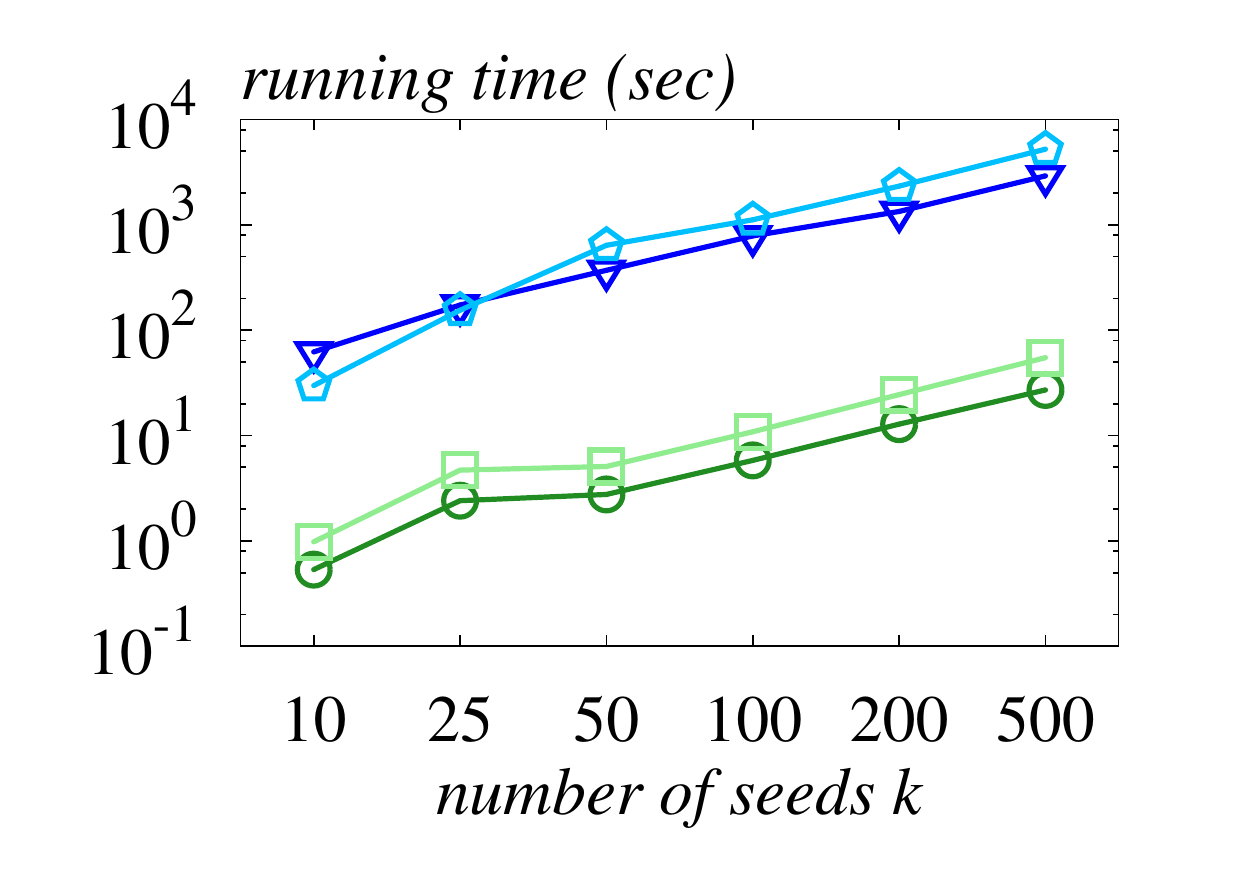}\label{fig:Epinions_deg_time}}\hfill
	\subfloat[DBLP]{\includegraphics[width=0.24\linewidth]{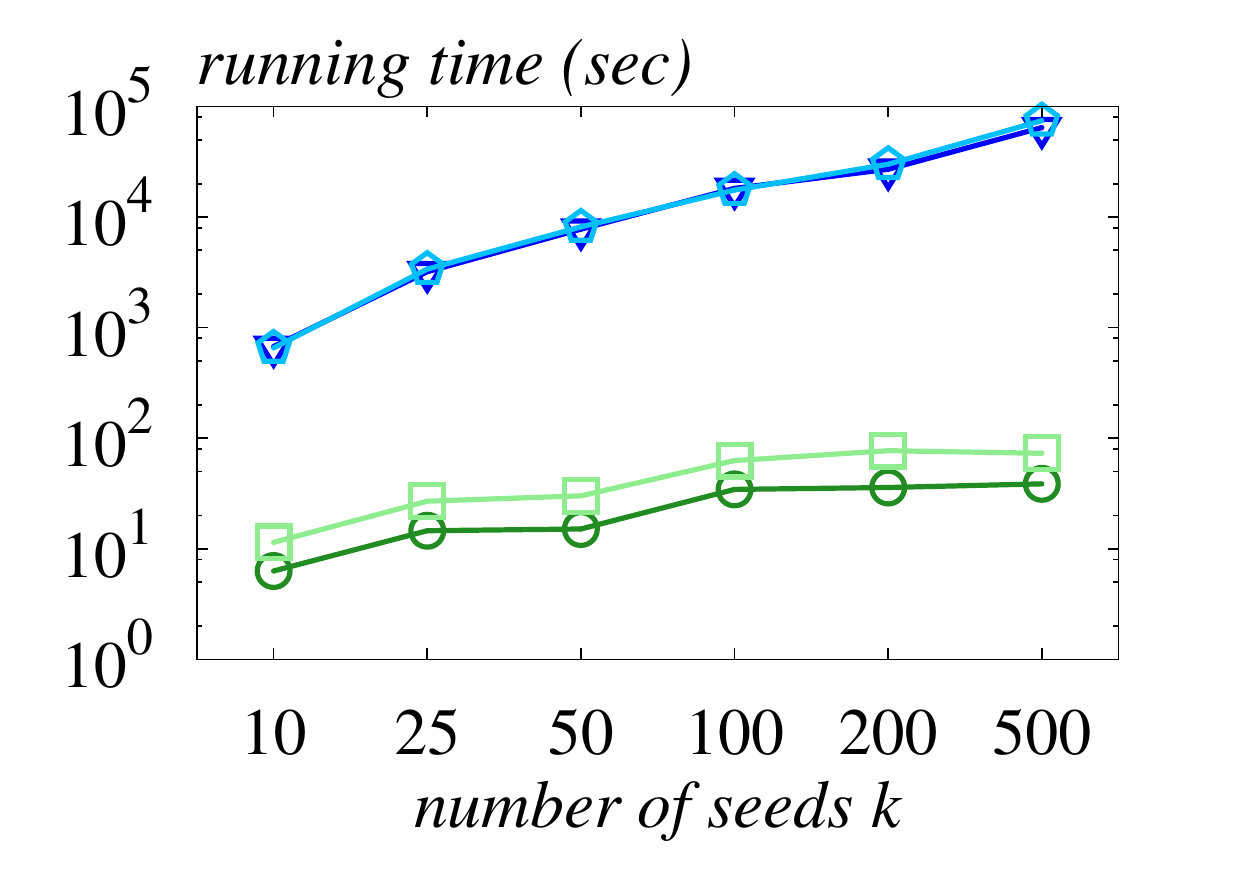}\label{fig:DBLP_deg_time}}\hfill
	\subfloat[LiveJournal]{\includegraphics[width=0.24\linewidth]{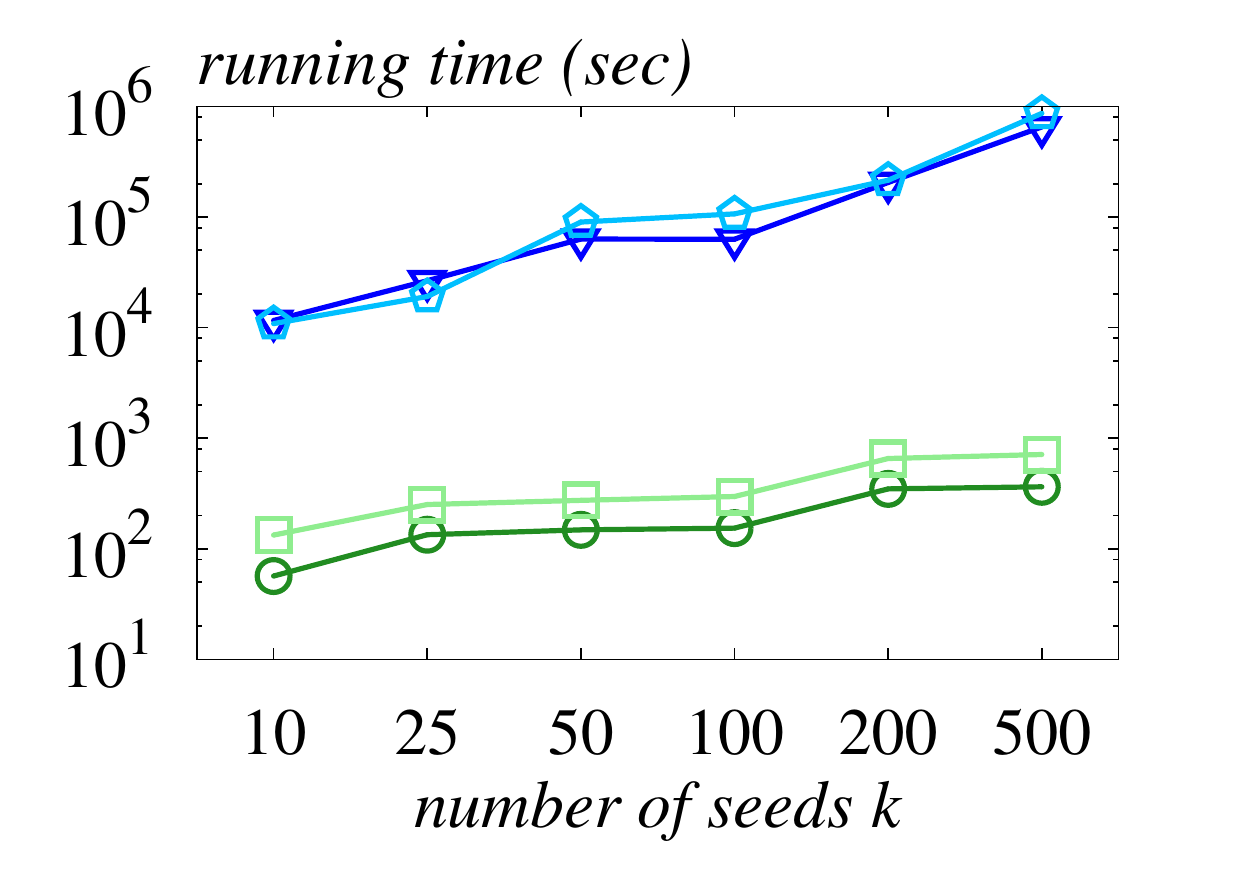}\label{fig:LiveJournal_deg_time}}
	\caption{Running time in degree-proportional cost.}\label{fig:time-deg}
\end{figure*}

\begin{figure*}[!t]
	\centering
	\vspace{-0.2in}
	\subfloat[NetHEPT]{\includegraphics[width=0.25\linewidth]{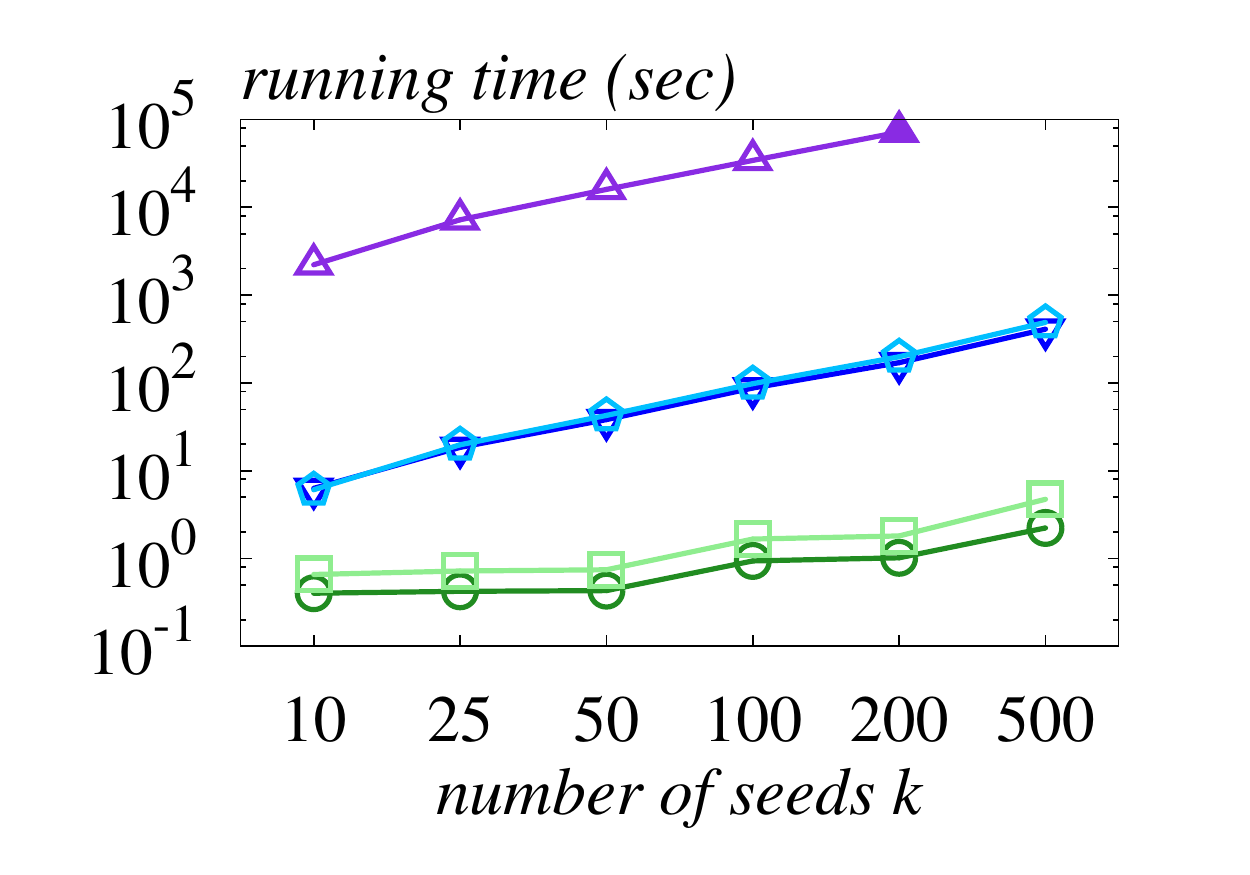}\label{fig:NetHEPT_uni_time}}\hfill
	\subfloat[Epinions]{\includegraphics[width=0.25\linewidth]{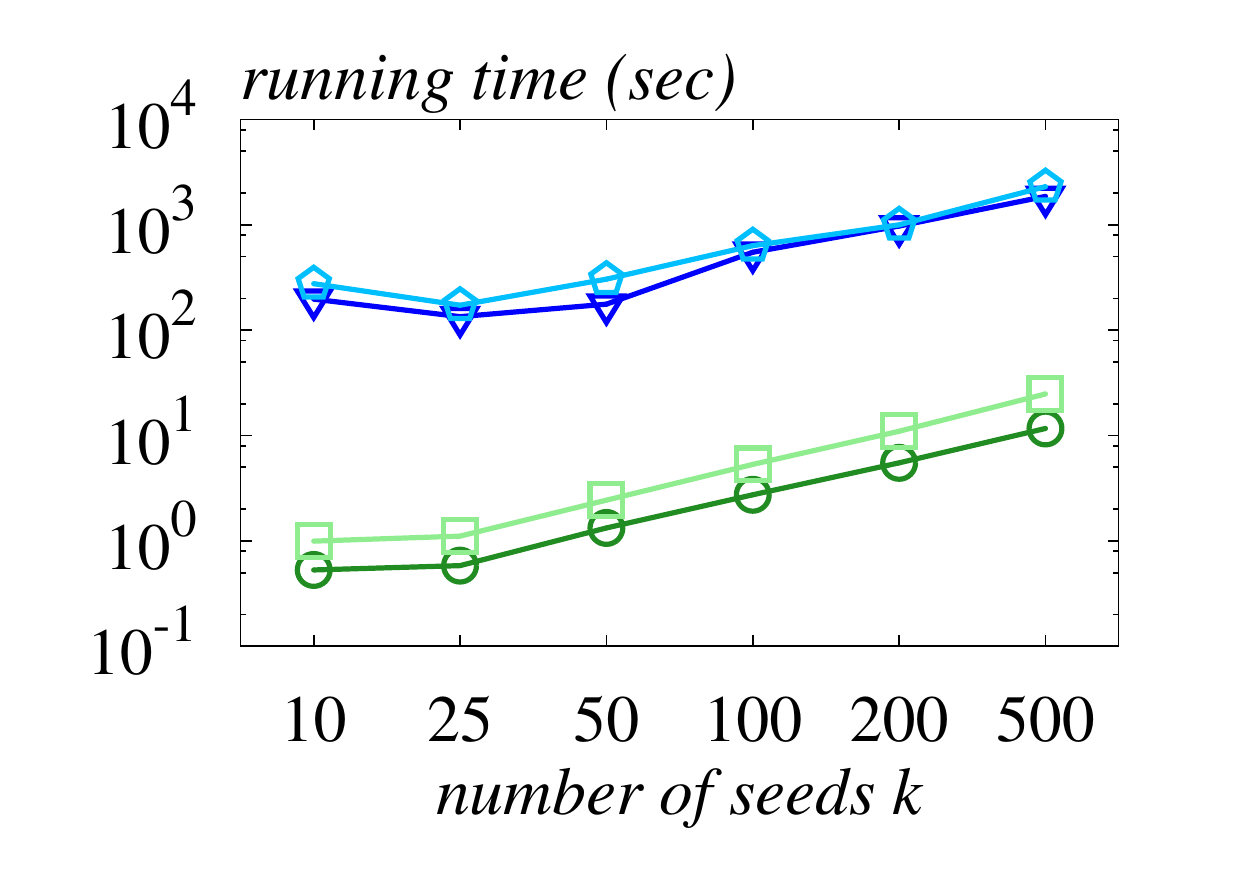}\label{fig:Epinions_uni_time}}\hfill
	\subfloat[DBLP]{\includegraphics[width=0.24\linewidth]{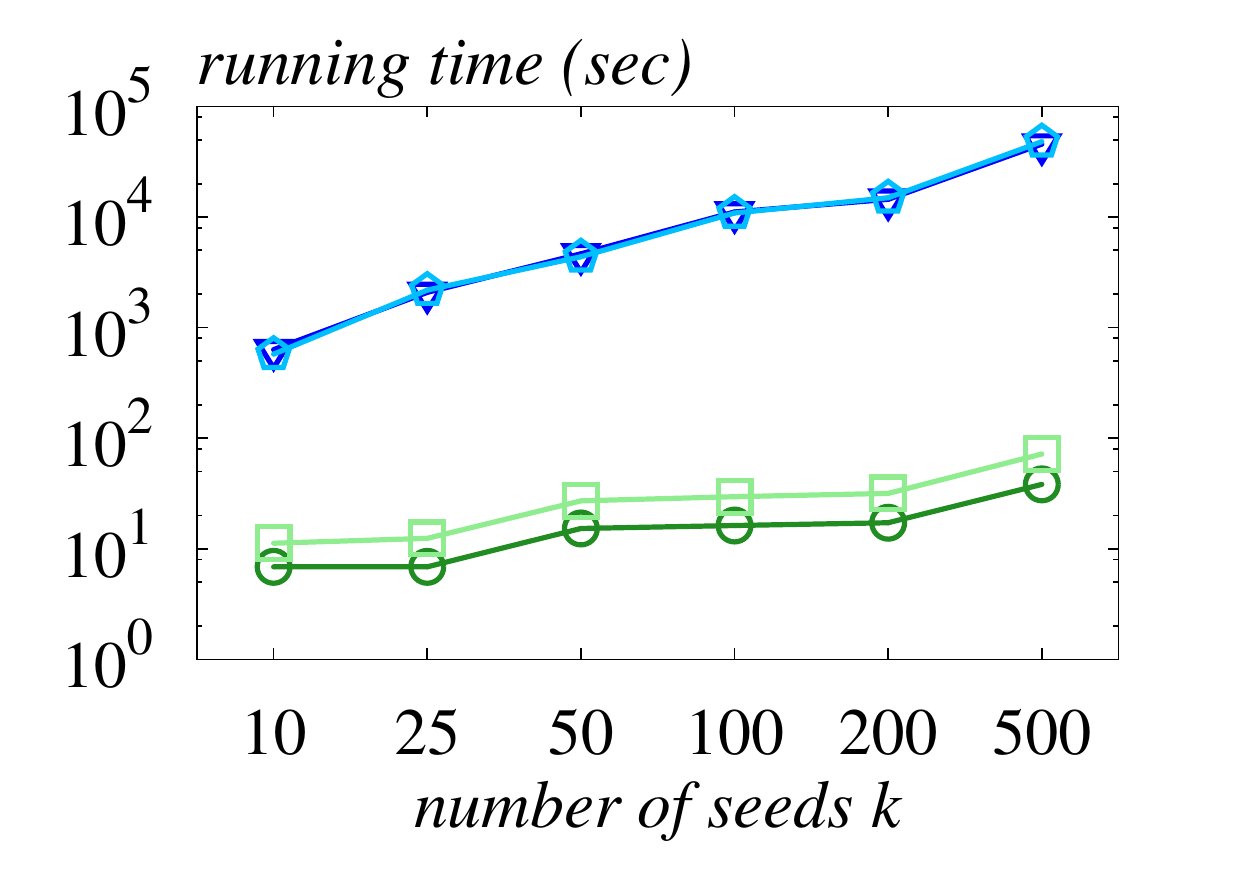}\label{fig:DBLP_uni_time}}\hfill
	\subfloat[LiveJournal]{\includegraphics[width=0.24\linewidth]{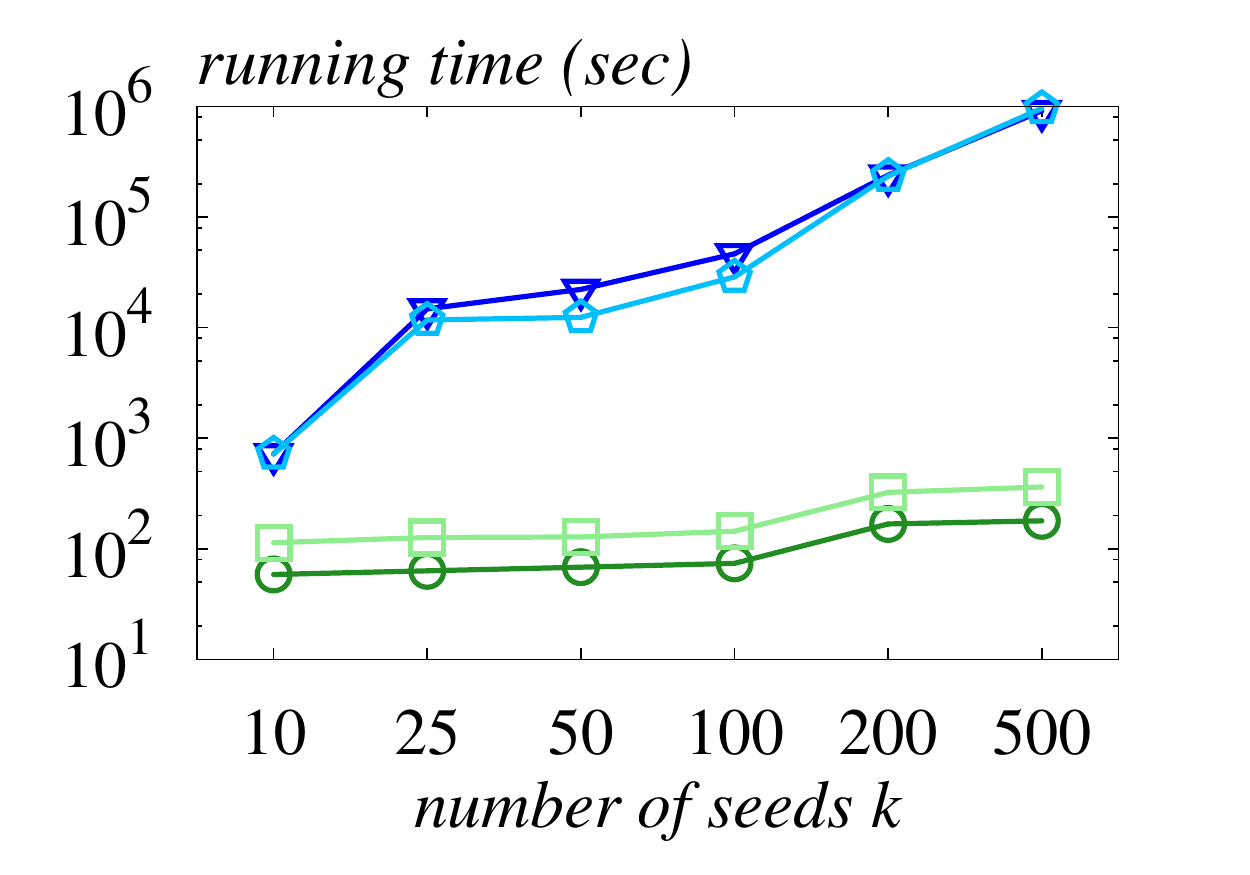}\label{fig:LiveJournal_uni_time}}
	\caption{Running time in uniform cost.}\label{fig:time-uni}
\end{figure*}
\spara{Random Cost} \figurename~\ref{subfig:randcost} shows the results under the {\it random cost} setting where the cost of each node is randomly assigned such that $c(T)=\E^l[I(T)]$. Due to the space limitations, we only present the profit results on dataset {Epinions}. We observe that \ATPA again achieves the highest profits, with around $5\%$ more profits than the other three nonadaptive algorithms. We also observe that (i) the advantage of adaptive algorithms over nonadaptive algorithms becomes less significant under the random cost setting, and (ii) the profits achieved by different algorithms under the random cost setting are around $40\%$ (resp.\ $20\%$) more than those under the degree-proportional (resp.\ uniform) cost setting respectively. The reason is that under the random cost setting, user costs have no correlation with their spreads. In such cases, profitable nodes are easier to be identified by both adaptive and nonadaptive algorithms in expectation, which lessens the adaptivity advantage. Meanwhile, those profitable nodes will influence the same number of nodes but have relatively less costs, which could improve the final profit.

\spara{Sensitivity Test of $\varepsilon$} Recall that \ATPA involves a key parameter of $\varepsilon$, which represents the approximation guarantee of \ATPA (Section~\ref{sec:atpadescrip}). Specifically, we vary the value of $\varepsilon$ as $\{0.05, 0.1, 0.15, 0.2, 0.25\}$ under $k=500$ and the degree-proportional cost setting on the {Epinions} dataset. \figurename~\ref{subfig:sensitive} presents the profit achieved by \ATPA with different $\varepsilon$. As shown, the profits remain nearly steady for all settings, which demonstrates the robustness of \ATPA on the setting of $\varepsilon$.

\subsection{Comparison of Running Time}\label{sec:runningtime}

\spara{Degree-proportional Cost} \figurename~\ref{fig:time-deg} presents the results of running time under the degree-proportional cost setting. Note that in \figurename~\ref{fig:time-deg} and \figurename~\ref{fig:time-uni}, the value of $k$ in x-axis is roughly in exponential scale instead of linear scale. Considering that \ARS selects node randomly without generating any samples, we do not include its running time (which is vary small) in all figures. \figurename~\ref{fig:NetHEPT_deg_time} shows that \ATPS runs around $400$ times slower than \ATPA. This confirms that our optimization techniques for \ATPA can significantly reduce running time. Meanwhile, \ATPA and \NTPA run slower than the heuristic algorithms \NSG and \NDG. This is because \ATPA and \NTPA regenerate RR sets in all $k$ iterations from scratch to ensure bounded sampling errors, as shown in Algorithm~\ref{alg:AdaptRelAddErr}. On the contrary, \NSG and \NDG complete seed selection on one set of RR sets as they do not have any guarantee on profit estimation. Therefore, \ATPA and \NTPA generate around $k$ times samples than \NSG and \NDG do.

Meanwhile, \NTPA runs slightly slower than \ATPA does. Recall that \NTPA is the nonadaptive version of \ATPA and there is no any graph update on $G$ after each node selection. Contrarily, \ATPA would update current graph into residual graph by removing newly activated nodes in each iteration. We observe that generating an RR set on a smaller residual graph is faster than that on the original graph. Therefore, \ATPA spends less time than \NTPA on sampling.

\spara{Uniform Cost} \figurename~\ref{fig:time-uni} displays the results of running time under the uniform cost setting. The result trend keeps consistent with that in degree-proportional cost setting. The major difference between the two is that the running time in \figurename~\ref{fig:time-uni} is smaller than that in \figurename~\ref{fig:time-deg} for each algorithm on the corresponding setting. This can be explained with the reasons aforementioned, \ie profitable nodes in uniform cost setting are easier identified within less samples. Therefore, the corresponding running time becomes shorter.

\subsection{Comparison of Profit with Predefined Cost}\label{sec:profprecost}

This section explores the profit improvement of \ATPA over \NDG and \NSG, following the setting that the cost of each user is predefined before we choose the target set $T$. Specifically, after the cost of each user is set, we adopt \NDG and \NSG to derive the target seed set $T$ respectively. As implied in Section~\ref{sec:profit}, profit improvement exhibits similar characteristics on the four datasets. Therefore, this part of experiment is conducted only on the largest dataset {LiveJournal}. 

\begin{figure}[!t]
	\centering
	\subfloat[Degree-proportional cost]{\includegraphics[width=0.485\linewidth]{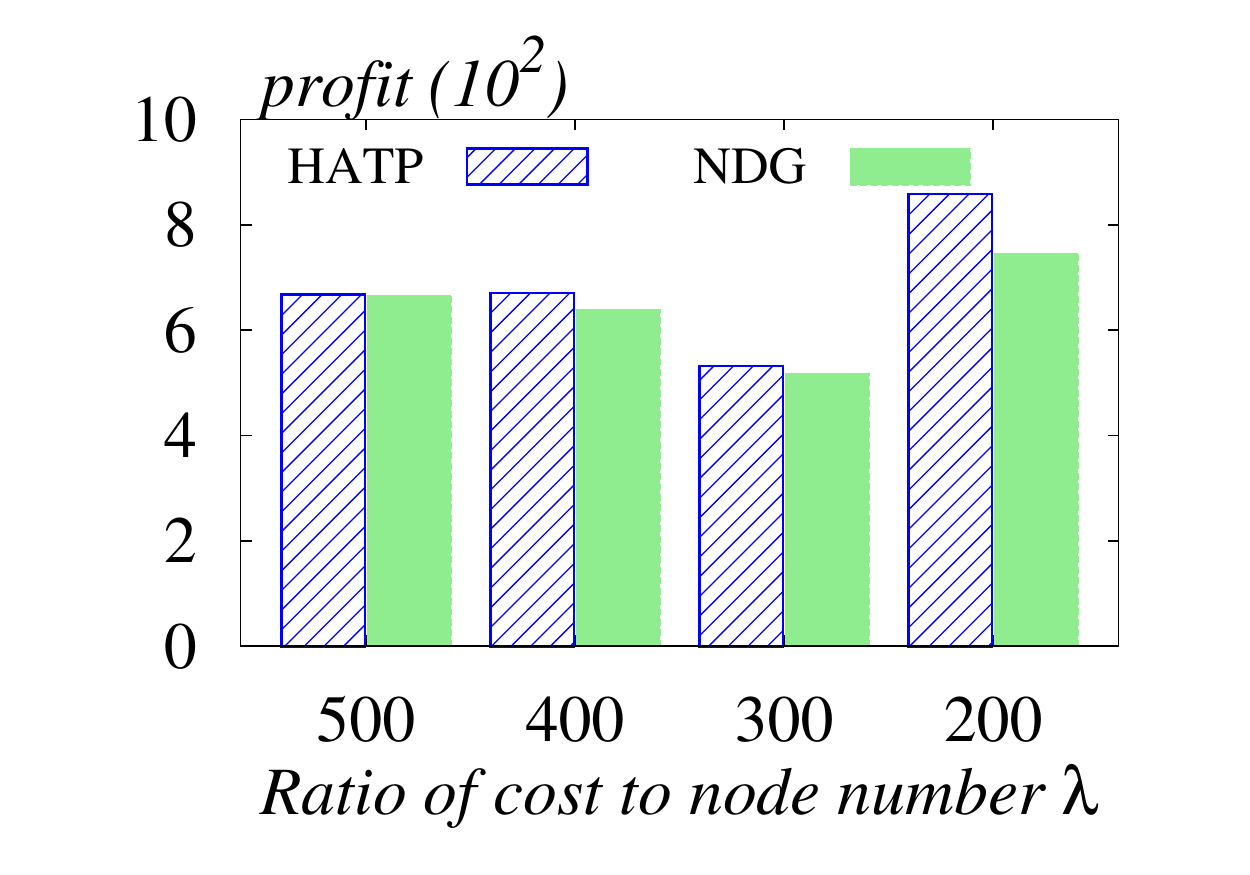}\label{subfig:deg-dg}}\hfill
	\subfloat[Uniform cost]{\includegraphics[width=0.5\linewidth]{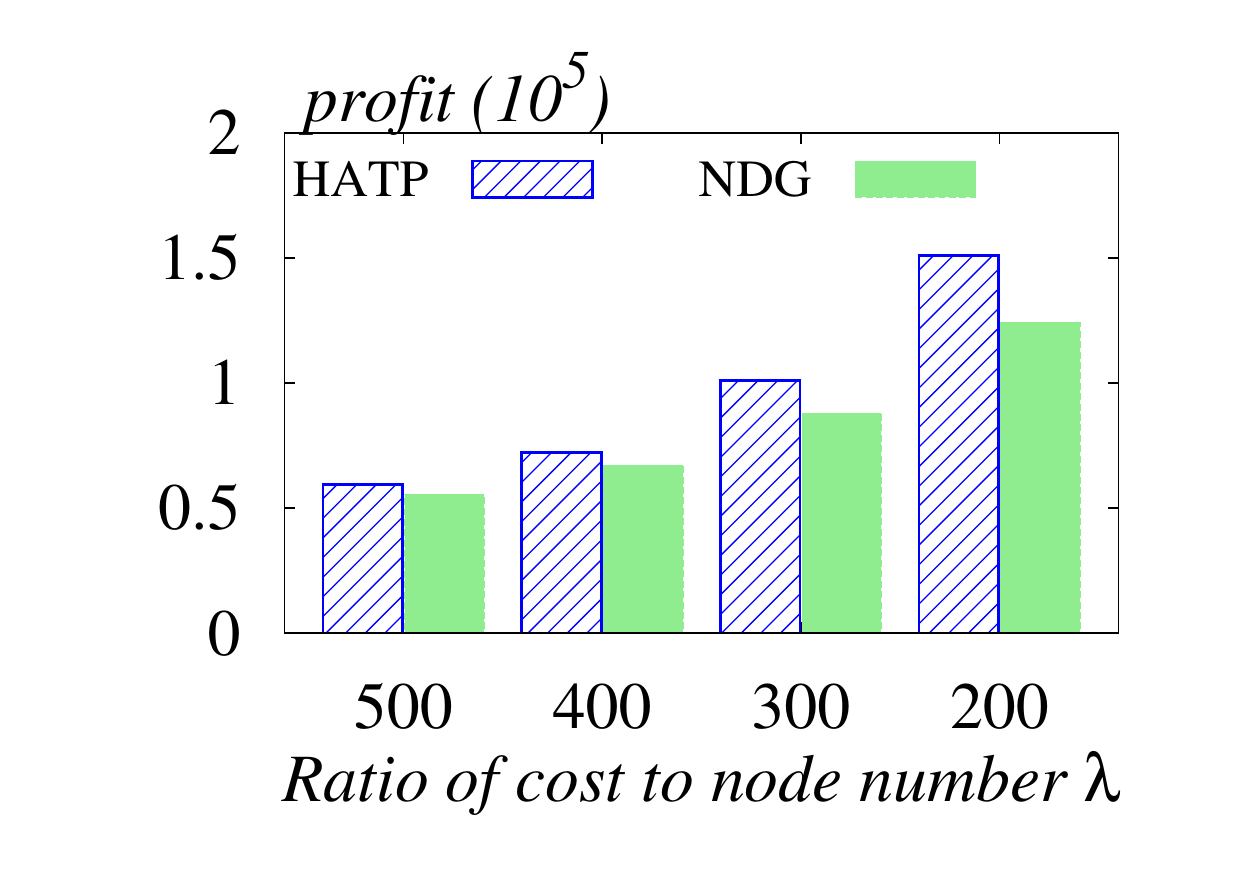}\label{subfig:uni-dg}}
	\caption{Profits of \ATPA and \NDG on {LiveJounral}.}\label{fig:profit-dg}
\end{figure}

\begin{figure}[!t]
	\centering
	\subfloat[Degree-proportional cost]{\includegraphics[width=0.5\linewidth]{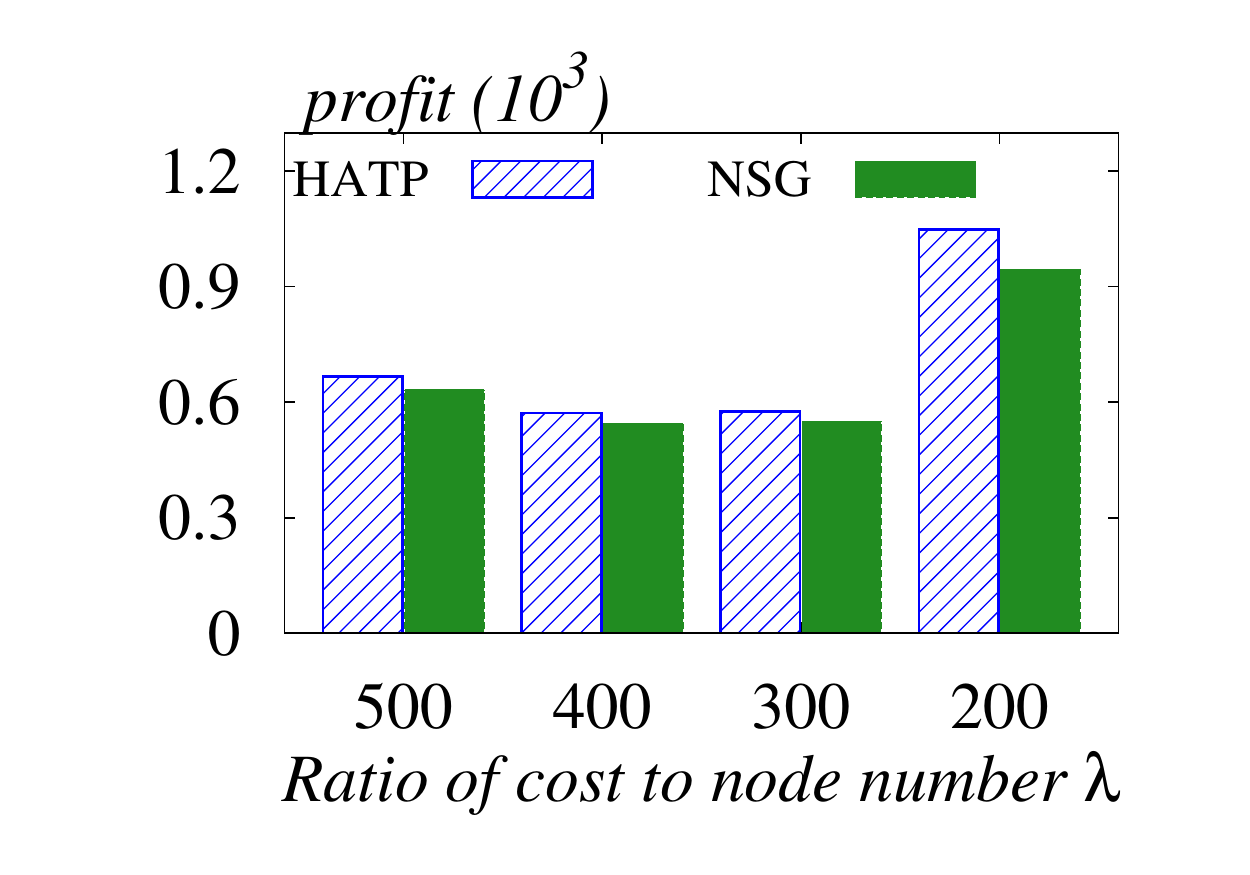}\label{subfig:deg-sg}}\hfill
	\subfloat[Uniform cost]{\includegraphics[width=0.5\linewidth]{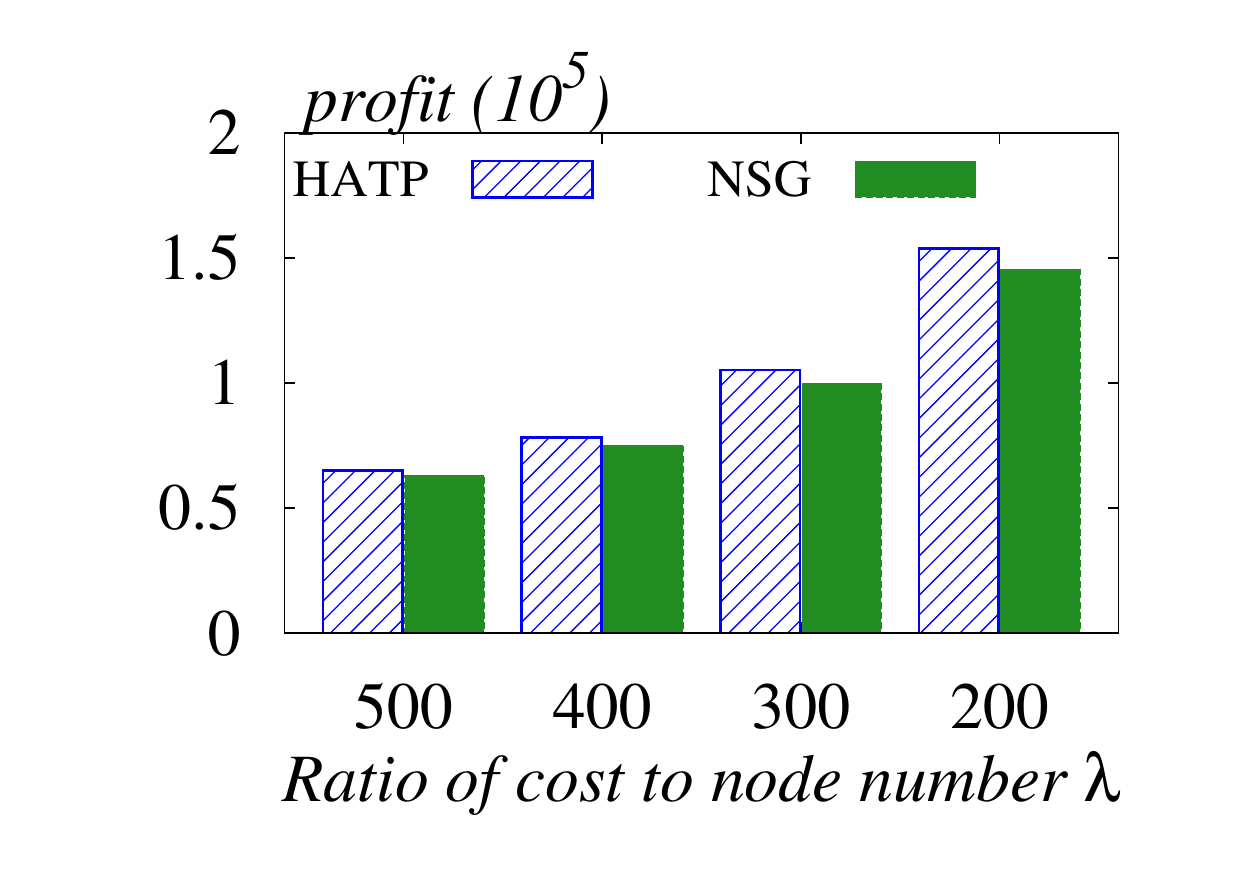}\label{subfig:uni-sg}}
	\caption{Profits of \ATPA and \NSG on {LiveJounral}.}\label{fig:profit-sg}
\end{figure}

\figurename~\ref{fig:profit-dg} presents the profits of \ATPA and \NDG with various $\lambda$ values under the two cost settings. Note that smaller $\lambda$ value means larger seed set size $k$. The overall improvement of \ATPA over \NDG is around $10\%$ and $15\%$ under the degree-proportional and uniform cost settings respectively. In particular, the improvement ratio can be up to $21.3\%$ with $\lambda=200$ in \figurename~\ref{subfig:uni-dg}. The advantage of \ATPA over \NDG in \figurename~\ref{subfig:uni-dg} gets more notable along the decrease of $\lambda$, which reveals that larger the target set size is, more effective the adaptive algorithms become. This can also explain why the advantage in \figurename~\ref{subfig:uni-dg} is more obvious than that in \figurename~\ref{subfig:deg-dg}, since the target set size under the uniform cost setting is larger with the same value of $\lambda$. \figurename~\ref{fig:profit-sg} shows the profit improvement of \ATPA over \NSG. The overall improvement of \ATPA over \NSG is $5\%$ or so, less significant compared with the results in \figurename~\ref{fig:profit-dg}. This indicates that \NSG might be more effective than \NDG on profit maximization. Observe that the improvement is more notable in \figurename~\ref{subfig:uni-sg} than that in \figurename~\ref{subfig:deg-sg}, which again verifies the fact that the advantage of adaptive algorithms over nonadaptive algorithms is more impressive when the target size gets larger.

\begin{figure}[!t]
	\centering
	\subfloat[Running time]{\includegraphics[width=0.5\linewidth]{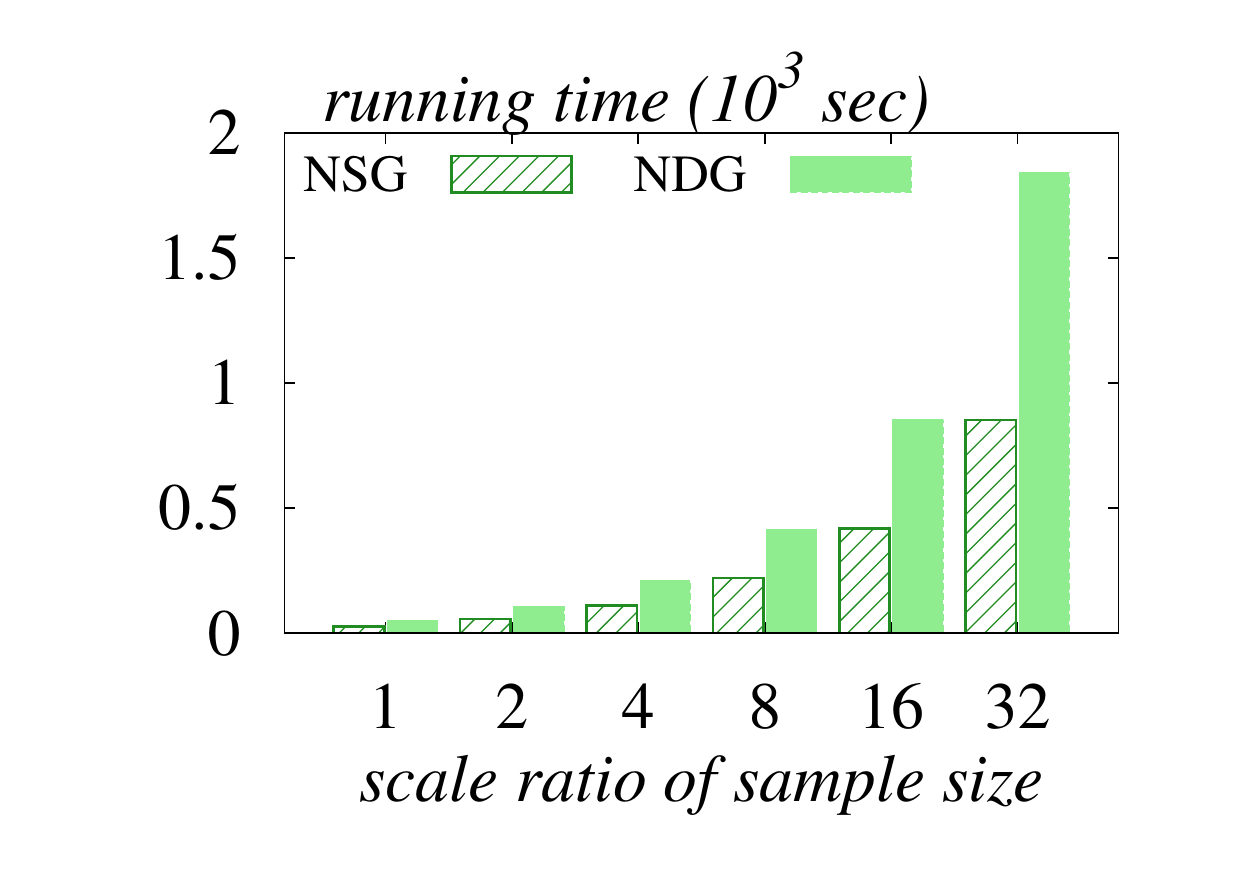}\label{subfig:varytime}}\hfill
	\subfloat[Profit]{\includegraphics[width=0.5\linewidth]{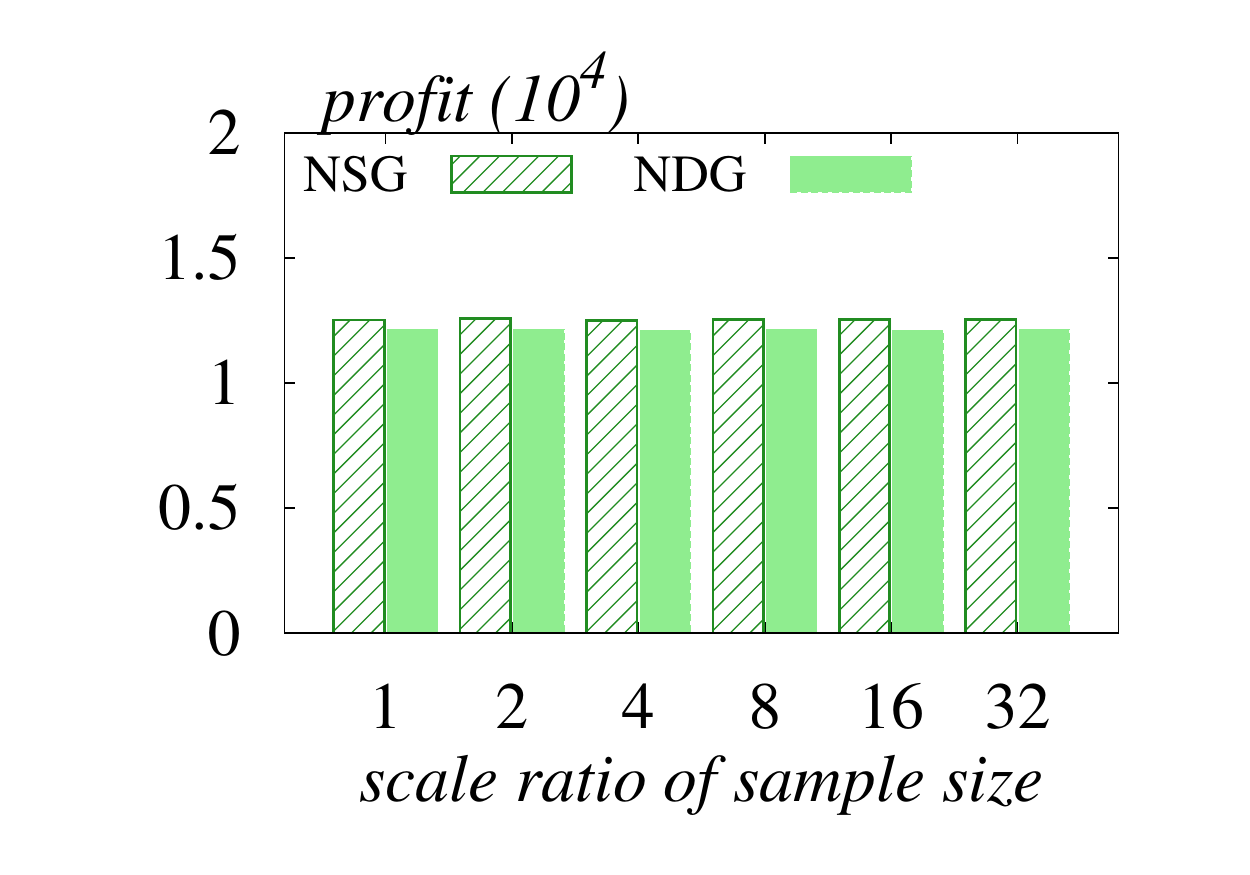}\label{subfig:varyprofit}}
	\caption{\NSG and \NDG with various sample sizes on {Epinions}.}\label{fig:varytimeprofit}
\end{figure}

To further verify the advantage of adaptive algorithms over nonadaptive algorithms, we increase the sample size of \NSG and \NDG with a multiplicative factor of $\{1,2,4,8,16,32\}$ on the {Epinions} dataset with $k=500$ under the degree-proportional cost setting. \figurename~\ref{subfig:varytime} shows that the running time of both \NSG and \NDG increases linearly along with the sample size. However, as shown in \figurename~\ref{subfig:varyprofit}, the profits achieved by \NSG and \NDG almost remain the same (with negligible changes) when we increase the sample size, which is also confirmed in existing work~\cite{Tang_TIM_2014, Huang_SSA_2017, Tang_OPIM_2018} that the spreads cannot be improved after the sample size reaches certain threshold. This indicates that the superiority of our adaptive algorithms over the nonadaptive algorithms is due to the adaptive techniques rather than the sample size.
\section{Conclusion}\label{sec:conclusion}

This paper studies the adaptive target profit maximization (TPM) problem which aims to identify a subset of nodes from the target node set to maximize the expected profit utilizing the adaptive advantage. To acquire an overall understanding on adaptive TPM problem, we investigate this problem in both oracle model and noise model. We prove that adaptive double greedy could address adaptive TPM under the oracle model and achieve an approximation ratio of $\frac{1}{3}$. To address it under the noise model, we design \ATPS that achieves provable approximation guarantee. We later optimize \ATPS into \ATPA that has made remarkable improvement on efficiency. \ATPA is designed based on the concept of hybrid error that could efficiently handle different nodes with various expected marginal spread by adjusting the relative error and additive error adaptively. To evaluate the performance of \ATPA, we conduct extensive experiments on real social networks, and the experimental results strongly confirm the superiorities and effectiveness of our approaches.

\section*{Acknowledgment}
This research is supported by Singapore National Research Foundation under grant~NRF-RSS2016-004, by Singapore Ministry of Education Academic Research Fund Tier 2 under grant~MOE2015-T2-2-069, and by National University of Singapore under an SUG.

\begin{appendix}
	\section{Appendix}\label{sec:appendix}

\subsection{Proofs of Lemmas and Theorems}\label{append:proofs}

\begin{proof}[Proof of Lemma~\ref{lem:case1+2}]
Consider the case that $\rho_\mathrm{f}\ge \rho_\mathrm{r}$. Then, we have $S_i=S_{i-1}\cup \{u_i\}$, $T_i=T_{i-1}$, and $S^\circ_{i}=S^\circ_{i-1}\cup\{u_i\}$. Thus, \eqref{eqn:corestep} becomes $-\Delta_{G_i}(u_i\mid S^\circ_{i-1}) \leq \Delta_{G_i}(u_i\mid S_{i-1}) =\rho_\mathrm{f}$.
If $u_i\notin S^\circ_{i-1}$, we have $-\Delta_{G_i}(u_i\mid S^\circ_{i-1}) \leq -\Delta_{G_i}(u_i\mid T_{i-1} \setminus \{u_i\})=\rho_\mathrm{r}\leq \rho_\mathrm{f}$ due to the submodularity of $\rho_{G_i}(\cdot)$ and the fact that $S^\circ_{i-1}\subseteq T_{i-1} \setminus \{u_i\}$ and $u_i\notin T_{i-1} \setminus \{u_i\}$. If $u_i\in S^\circ_{i-1}$, we have $-\Delta_{G_i}(u_i\mid S^\circ_{i-1})=0\leq \rho_\mathrm{f}$ since $\rho_\mathrm{f}+\rho_\mathrm{r}\geq 0$ by Lemma~\ref{lem:frsum}. Thus, it holds that $-\Delta_{G_i}(u_i\mid S^\circ_{i-1}) \leq\rho_\mathrm{f}$.

The other case of $\rho_\mathrm{f}< \rho_\mathrm{r}$ is analogous. Specifically, we have $S_i=S_{i-1}$, $T_i=T_{i-1}\setminus\{u_i\}$, and $S^\circ_{i}=S^\circ_{i-1}\setminus\{u_i\}$ and need to show that $\rho_{G_i}(S^\circ_{i-1}) - \rho_{G_i}(S^\circ_{i})\leq \rho_\mathrm{r}$. Again, if $u_i\notin S^\circ_{i-1}$, we have $\rho_{G_i}(S^\circ_{i-1}) - \rho_{G_i}(S^\circ_{i})=0\leq \rho_\mathrm{r}$ since $\rho_\mathrm{f}+\rho_\mathrm{r}\geq 0$. If $u_i\in S^\circ_{i-1}$, we have $\rho_{G_i}(S^\circ_{i-1}) - \rho_{G_i}(S^\circ_{i})=\Delta_{G_i}(u_i\mid S^\circ_{i-1}\setminus\{u_i\})\leq  \Delta_{G_i}(u_i\mid S_{i-1})=\rho_\mathrm{f} < \rho_\mathrm{r}$ since $S_{i-1}\subseteq S^\circ_{i-1} \setminus \{u_i\}$ and $u_i\notin S^\circ_{i-1} \setminus \{u_i\}$.

Hence, the lemma is proved.
\end{proof}

\begin{proof}[Proof of Lemma~\ref{lem:policyprofit}]
For simplicity, let $S_i:=S_\phi(\pi^\mathrm{f}_{[i]})$, $T_i:=S_\phi(\pi^\mathrm{r}_{[i]})$, and $S^\circ_{i}:=S_\phi(\pi^\circ_{[i]})$. By definition, we have
\begin{align}
	&\Lambda(\pi^\mathrm{f}_{[i]}) - \Lambda(\pi^\mathrm{f}_{[i-1]})=\sum_{\phi}\!\big(\rho_\phi(S_{i})-\rho_\phi(S_{i-1})\big)\cdot p(\phi),\label{eqn:lambda-f}\\
	&\Lambda(\pi^\mathrm{r}_{[i]}) - \Lambda(\pi^\mathrm{r}_{[i-1]})=\sum_{\phi}\!\big(\rho_\phi(T_{i})-\rho_\phi(T_{i-1})\big)\cdot p(\phi),\label{eqn:lambda-r}\\
	&\Lambda(\pi^\circ_{[i]}) - \Lambda(\pi^\circ_{[i-1]})=\sum_{\phi}\!\big(\rho_\phi(S^\circ_{i})-\rho_\phi(S^\circ_{i-1})\big)\cdot p(\phi).\label{eqn:lambda-o}
\end{align}

Let $\mathcal{G}_i$ be the distribution of $i$-th residual graph with respect to the \ADG policy, where each residual graph $G_i\in\mathcal{G}_i$ has a probability of $\Pr[G_i]$. Then, the realization space $\Omega$ can be disjointedly partitioned with respect to each $G_i\in\mathcal{G}_i$. For each $G_i$, we denote the corresponding set of realizations as $\Omega(G_i)$, which implies that $\Pr[G_i]=\sum_{\phi\in \Omega(G_i)}p(\phi)$. Moreover, for any $S$ and $G_i$, we have 
\begin{equation*}
	\rho_{G_i}(S)=\sum_{\phi\in \Omega(G_i)} \rho_{\phi}(S)\cdot \frac{p(\phi)}{\Pr[G_i]}.
\end{equation*}
Then, \eqref{eqn:lambda-f}--\eqref{eqn:lambda-o} can be rewritten as
\begin{align*}
&\Lambda(\pi^\mathrm{f}_{[i]}) - \Lambda(\pi^\mathrm{f}_{[i-1]})=\sum_{G_i}\!\big(\rho_{G_i}(S_{i})-\rho_{G_i}(S_{i-1})\big)\cdot \Pr[G_i],\\
&\Lambda(\pi^\mathrm{r}_{[i]}) - \Lambda(\pi^\mathrm{r}_{[i-1]})=\sum_{G_i}\!\big(\rho_{G_i}(T_{i})-\rho_{G_i}(T_{i-1})\big)\cdot \Pr[G_i],\\
&\Lambda(\pi^\circ_{[i]}) - \Lambda(\pi^\circ_{[i-1]})=\sum_{G_i}\!\big(\rho_{G_i}(S^\circ_{i})-\rho_{G_i}(S^\circ_{i-1})\big)\cdot \Pr[G_i].
\end{align*}
Putting it all together with Lemma~\ref{lem:case1+2} completes the proof.
\end{proof}

\begin{proof}[Proof of Lemma~\ref{lem:rho-noise}]
Let $f_\mathrm{est}$ and $r_\mathrm{est}$ be the estimations of $\E[I_{G_i}(u_i\mid S_{i-1})]$ and $\E[I_{G_i}(u_i\mid T_{i-1} \setminus \{u_i\})]$. For the $i$-th iteration of \ATPS and the $j$-th round estimations of expected marginal spreads (or profits), we define two events $\mathcal{E}_{i,j}^1$ and $\mathcal{E}_{i,j}^2$ as
\begin{align*}
	&\mathcal{E}_{i,j}^1\colon |\E[I_{G_i}(u_i\mid S_{i-1})]-f_\mathrm{est}|\leq n_i\zeta_i,\\
	&\mathcal{E}_{i,j}^2\colon |\E[I_{G_i}(u_i\mid T_{i-1} \setminus \{u_i\})]-r_\mathrm{est}|\leq n_i\zeta_i.
\end{align*}
Regarding the event $\mathcal{E}_{i,j}^1$ or $\mathcal{E}_{i,j}^2$, we know that $\delta_i=\frac{1}{2^{j-1}kn}$ and $\theta=\frac{\ln(8/\delta_i)}{2\zeta^2_i}$. Thus, by Lemma~\ref{lem:hoeffding}, we have
\begin{equation*}
	\Pr[\neg \mathcal{E}_{i,j}^1\vee \neg \mathcal{E}_{i,j}^2]\leq \Pr[\neg \mathcal{E}_{i,j}^1] +\Pr[\neg \mathcal{E}_{i,j}^2]\leq 4\e^{-2\theta\zeta^2_i}=\frac{1}{2^{j}kn}. 
\end{equation*}
As a result, we have
\begin{equation}\label{eqn:prob}
\Pr\Big[\bigvee\nolimits_{j=1}^{\infty}\big(\neg \mathcal{E}_{i,j}^1\vee \neg \mathcal{E}_{i,j}^2\big)\Big]\leq \sum\nolimits_{j=1}^{\infty} \frac{1}{2^{j}kn}=\frac{1}{kn}.
\end{equation}

In what follows, we assume that events $\mathcal{E}_{i,j}^1$ and $\mathcal{E}_{i,j}^2$ happen for every $j$. Thus, we have $\fest-n_i\zeta_i\leq \rho_\mathrm{f}\leq \fest+n_i\zeta_i$ and $\rest-n_i\zeta_i\leq \rho_\mathrm{r}\leq \rest+n_i\zeta_i$. We first consider the case that $C_1$ occurs. If $\fest\geq \rest$, we have $\fest-\rest\geq 2n_i\zeta_i$ or $\rest\leq -n_i\zeta_i$ since $\fest+\rest\geq 0$ (using the same argument in the proof of Lemma~\ref{lem:frsum}). For the former, $\rho_\mathrm{f} -\rho_\mathrm{r}\geq \fest-\rest-2n_i\zeta_i\geq 0$, while for the latter, $\rho_\mathrm{r}\leq 0$ and hence $\rho_\mathrm{f}\geq \rho_\mathrm{r}$ as $\rho_\mathrm{f}+\rho_\mathrm{r}\geq 0$ by Lemma~\ref{lem:frsum}. Thus, $\rho_\mathrm{f} \geq \rho_\mathrm{r}$ always hold if $\fest\geq \rest$. With a similar analysis, $\rho_\mathrm{f} <\rho_\mathrm{r}$ also holds if $\fest< \rest$. Then, when $C_1$ occurs, by Lemma~\ref{lem:case1+2}, we have
\begin{equation*}
\begin{split}
&\rho_{G_i}(S^\circ_{i-1}) - \rho_{G_i}({S}^\circ_{i})\\
&\leq \rho_{G_i}({S}_{i})-\rho_{G_i}(S_{i-1}) + \rho_{G_i}({T}_{i}) - \rho_{G_i}(T_{i-1}). 
\end{split}
\end{equation*} 
Now, we consider the case that $C_2$ occurs. Again, if $\fest\geq \rest$, we have $\rho_\mathrm{r}\leq \rest+n_i\zeta_i \leq \fest+n_i\zeta_i\leq \rho_\mathrm{f} + 2n_i\zeta_i \leq  \rho_\mathrm{f}+2$. On the other hand, one can verify that $\rho_{G_i}(S^\circ_{i-1}) - \rho_{G_i}(S^\circ_{i}) \leq \max\{\rho_\mathrm{f}, \rho_\mathrm{r}\}\leq \rho_\mathrm{f} + 2= \rho_{G_i}(S_{i})-\rho_{G_i}(S_{i-1}) + 2$ and $\rho_{G_i}(T_{i}) - \rho_{G_i}(T_{i-1})=0$. Similarly, if $\fest < \rest$, we have $\rho_{G_i}(S^\circ_{i-1})-\rho_{G_i}(S^\circ_{i}) \leq \max\{\rho_\mathrm{f}, \rho_\mathrm{r}\}\leq \rho_\mathrm{r} + 2 \leq \rho_{G_i}(T_{i}) - \rho_{G_i}(T_{i-1}) + 2$ and $\rho_{G_i}(S_{i})-\rho_{G_i}(S_{i-1})=0$. Then,
\begin{equation}\label{eqn:true}
\begin{split}
&\rho_{G_i}(S^\circ_{i-1}) - \rho_{G_i}({S}^\circ_{i})-2\\
&\leq \rho_{G_i}({S}_{i})-\rho_{G_i}(S_{i-1}) + \rho_{G_i}({T}_{i}) - \rho_{G_i}(T_{i-1}),
\end{split}
\end{equation} 
which also holds for the case that $C_1$ occurs.

On the other hand, suppose that at least one of the events events $\mathcal{E}_{i,j}^1$ and $\mathcal{E}_{i,j}^2$ does not happen. We know that $\rho_{G_i}(S^\circ_{i-1})-\rho_{G_i}(S^\circ_{i}) \leq \max\{\rho_\mathrm{f}, \rho_\mathrm{r}\}\leq n$ and $\rho_{G_i}({S}_{i})-\rho_{G_i}(S_{i-1}) + \rho_{G_i}({T}_{i}) - \rho_{G_i}(T_{i-1})\geq \min\{\rho_\mathrm{f}, \rho_\mathrm{r}\}\geq -n$. Thus, it always holds that
\begin{equation}\label{eqn:false}
\begin{split}
&\rho_{G_i}(S^\circ_{i-1}) - \rho_{G_i}({S}^\circ_{i})-2n\\
&\leq \rho_{G_i}({S}_{i})-\rho_{G_i}(S_{i-1}) + \rho_{G_i}({T}_{i}) - \rho_{G_i}(T_{i-1}),
\end{split}
\end{equation}

Combining \eqref{eqn:prob}, \eqref{eqn:true}, and \eqref{eqn:false} completes the proof.
\end{proof}

\begin{proof}[Proof of Lemma~\ref{lem:policyprofitadd}]
We first fix a random seed for \ATPS such that \ATPS becomes a deterministic algorithm. Using a similar argument in the proof of Lemma~\ref{lem:policyprofit}, we have
\begin{align*}
&\Lambda(\pi^\mathrm{f}_{[i]}) - \Lambda(\pi^\mathrm{f}_{[i-1]})=\sum_{G_i}\!\big(\rho_{G_i}(S_{i})-\rho_{G_i}(S_{i-1})\big)\cdot \Pr[G_i],\\
&\Lambda(\pi^\mathrm{r}_{[i]}) - \Lambda(\pi^\mathrm{r}_{[i-1]})=\sum_{G_i}\!\big(\rho_{G_i}(T_{i})-\rho_{G_i}(T_{i-1})\big)\cdot \Pr[G_i],\\
&\Lambda(\pi^\circ_{[i]}) - \Lambda(\pi^\circ_{[i-1]})=\sum_{G_i}\!\big(\rho_{G_i}(S^\circ_{i})-\rho_{G_i}(S^\circ_{i-1})\big)\cdot \Pr[G_i].
\end{align*}
Then, taking the expectation on both sides over the internal randomness of the \ATPS algorithm and combing with Lemma~\ref{lem:rho-noise} complete the proof.
\end{proof}
\begin{proof}[Proof of Lemma~\ref{lem:atpacase3}]
The proof is analogous to those of Lemmas~\ref{lem:rho-noise} and \ref{lem:policyprofitadd}. The key step is to figure out the bound on profit loss when $C_2^\prime$ occurs, \ie~$n_i\zeta_i\leq 1$ and $\varepsilon_i\leq \varepsilon$, and the estimation is sufficiently accurate (with probability $1-1/(kn)$), \ie~$\frac{f_\mathrm{est} -n_i\zeta_i}{1+\varepsilon_i}-c(u_i) \le \rho_\mathrm{f} \le \frac{f_\mathrm{est} +n_i\zeta_i}{1-\varepsilon_i}-c(u_i)$ and $c(u_i)-\frac{r_\mathrm{est} +n_i\zeta_i}{1-\varepsilon_i}\le \rho_\mathrm{r} \le c(u_i)-\frac{r_\mathrm{est} -n_i\zeta_i}{1+\varepsilon_i}$. If $f_\mathrm{est}+r_\mathrm{est}\ge 2c(u_i)$, one can verify that $ \rho_\mathrm{r}\leq \frac{f_\mathrm{est} -n_i\zeta_i}{1+\varepsilon}-c(u_i) + \frac{2(n_i\zeta_i+\varepsilon c(u_i))}{1-\varepsilon}\leq \rho_\mathrm{f}+\frac{2(1+\varepsilon c(u_i))}{1-\varepsilon}$. Therefore, we have  $\rho_{G_i}(S^\circ_{i-1}) - \rho_{G_i}(S^\circ_{i}) \leq \max\{\rho_\mathrm{f}, \rho_\mathrm{r}\}\leq \rho_\mathrm{f} + \frac{2(1+\varepsilon c(u_i))}{1-\varepsilon} = \rho_{G_i}(S_{i})-\rho_{G_i}(S_{i-1}) + \frac{2(1+\varepsilon c(u_i))}{1-\varepsilon}$ and $\rho_{G_i}(T_{i}) - \rho_{G_i}(T_{i-1})=0$. Similarly, if $f_\mathrm{est}+r_\mathrm{est} < 2c(u_i)$, we have $\rho_{G_i}(S^\circ_{i-1})-\rho_{G_i}(S^\circ_{i}) \leq \max\{\rho_\mathrm{f}, \rho_\mathrm{r}\}\leq \rho_\mathrm{r} + \frac{2(1+\varepsilon c(u_i))}{1-\varepsilon} \leq \rho_{G_i}(T_{i}) - \rho_{G_i}(T_{i-1}) + \frac{2(1+\varepsilon c(u_i))}{1-\varepsilon}$ and $\rho_{G_i}(S_{i})-\rho_{G_i}(S_{i-1})=0$. Following the same arguments in the proofs of Lemmas~\ref{lem:rho-noise} and \ref{lem:policyprofitadd}, the lemma is proved. 
\end{proof}

\end{appendix}

\balance

\bibliographystyle{IEEEtranS}
\bibliography{reference}

\end{sloppy}

\end{document}